\renewcommand{\Re}{\mathop{\rm Re}\nolimits}
\renewcommand{\Im}{\mathop{\rm Im}\nolimits}
\newcommand{\p}{\partial}
\newcommand{\e}{\varepsilon}
\newcommand{\vk}{\varkappa}
\newcommand{\vp}{\varphi}
\newcommand{\gi}{\rho}
\newcommand{\lla}{\gamma}
\newcommand{\ts}{\tilde s}
\newcommand{\bk}{{\mathbf k}}
\newcommand{\be}{\begin{equation}}
\newcommand{\ee}{\end{equation}}
\newcommand{\bb}{\mbox{\boldmath$\beta$}}
\newcommand{\R}{{\mathbb R}}
\newcommand{\C}{{\mathbb C}}
\newcommand{\IP}{{\bf P}}
\newcommand{\Z}{{\mathbb Z}}
\newcommand{\ZZ}{{\mathbb Z}^\infty_{+0}}
\newcommand{\E}{{\bf E}}
\newcommand{\T}{{\mathbb T}}
\newcommand{\N}{{\mathbb N}}
\newcommand{\PP}{{\bf P}}
\newcommand{\CC}{C_{pql}(2I)^pv^q\bar v^l}
\newcommand{\ai}{a}
\newcommand{\cA}{{\cal A}}
\newcommand{\cB}{{\cal B}}
\newcommand{\cD}{{\cal D}} 
\newcommand{\cF}{{\cal F}}
\newcommand{\cH}{{\cal H}}
\newcommand{\cHR}{{\cal H}^{\text{res}}}
\newcommand{\cM}{{\cal M}}
\newcommand{\cQ}{{\cal Q}}
\newcommand{\cR}{{\cal R}}
\newcommand{\cT}{{\cal T}}
\newcommand{\strela}{\rightharpoonup}
\newcommand{\lc}{\lceil}
\newcommand{\rc}{\rceil}
\newcommand{\const}{\mathop{\rm const}\nolimits}
\newcommand{\as}{\quad\mathop{\rm as} \nolimits\quad}
\newcommand{\supp}{\mathop{\rm supp}\nolimits}
\newcommand{\Arg}{\mathop{\rm Arg}\nolimits}
\def\dbar{{\mathchar'26\mkern-12mu d}}
\def\12{\tfrac12}
\def\lan{\langle}
\def\ran{\rangle}
\def\eps{\varepsilon}
\theoremstyle{plain}
\newtheorem{theorem}{Theorem}[section]
\newtheorem{lemma}[theorem]{Lemma}
\newtheorem{proposition}[theorem]{Proposition}
\newtheorem{corollary}[theorem]{Corollary}
\theoremstyle{definition}
\theoremstyle{remark}
\numberwithin{equation}{section}
\begin{document}

\author{Sergei
Kuksin\footnote{CNRS and  I.M.J, Universit\'e Paris Diderot-Paris 7, Paris, 
 France, e-mail:
  kuksin@math.jussieu.fr },\addtocounter{footnote}{2} Alberto Maiocchi \footnote{Laboratoire de
Math\'ematiques,  Universit\'e de Cergy-Pontoise, 2  avenue Adolphe Chauvin,
Cergy-Pontoise, France,
e-mail: alberto.maiocchi@unimi.it}}

\title{ Resonant averaging for small solutions of  stochastic NLS equations }
\date{}
\maketitle

\begin{abstract}
We consider the free linear Schr\"odinger  equation  on a torus
$\T^d$, perturbed by a hamiltonian  nonlinearity,
driven by a random force and damped by 
a linear damping:
$$
 u_t -i\Delta u +i\nu \rho |u|^{2q_*}u
= - \nu f(-\Delta) u
+ \sqrt\nu\,\frac{d}{d t}\sum_{\bk\in \Z^d} b_\bk\bb^\bk(t)e^{i\bk\cdot x} \ .
$$
Here $u=u(t,x),\ x\in\T^d$, 
 $0<\nu\ll1$,  $q_*\in\N$, 
 $f$ is a positive continuous function, $\rho$ is a positive parameter 
  and 
$\bb^\bk(t)$ are standard independent complex Wiener processes.  We are interested in limiting, as
$\nu\to0$, behaviour of distributions of solutions for this equation 
and of its stationary measure. Writing the equation in the slow time $\tau=\nu t$, we
 prove that the limiting  behaviour of the  both 
  is  described by the   {\it effective equation}
$$
u_\tau+ f(-\Delta) u = -iF(u)+\frac{d}{d\tau}\sum  b_\bk\bb^\bk(\tau)e^{i\bk\cdot x} \ ,
$$
where the nonlinearity $F(u)$ is made out of the resonant terms of the monomial $ |u|^{2q_*}u$. 
We explain the relevance of this result for the problem of weak turbulence 
\end{abstract}

\tableofcontents

\section{Introduction}\label{s0}
\subsection
{ Equations}  \label{s0.1}
We study the free Schr\"odinger equation on the torus $\T^d_L= \R^d/(2\pi L\Z^d)$,
\begin{equation}\label{*1}
u_t(t,x)-i\Delta u(t,x)=0,\quad x\in\T^d_L\,, 
\end{equation}
stirred by a perturbation, which comprises a
hamiltonian  term, a linear damping and a random force. That is, we
consider the equation  
\begin{equation}\label{1.11}
\begin{split}
u_t-i\Delta u= - i\eps^{2q_*}|u|^{2q_*}u-\nu f(-\Delta) u
+\sqrt\nu \frac{d}{dt}\sum_{\bk\in\Z^d_L}  b_\bk\bb^\bk(t) e^{i{\bk}\cdot x}
\ ,\\
 u=u(t,x),\quad x\in \T^d_L\,,
\end{split}
\end{equation}
where   $\ q_*\in \N $ and
$  \eps,\nu>0$ are two small parameters, controlling  the size of the
perturbation, while $\Z^d_L$ denotes the set of vectors of the form
$\bk={\mathbf{l}}/L$ with ${\mathbf{l}}\in \Z^d$. 
 The damping  $-f(-\Delta) $ is the selfadjoint 
  linear operator in $L_2(\T^d_L)$ which acts on the exponents $e^{i{\bk}\cdot x}$, $\bk\in\Z^d_L$, 
 according to
\begin{equation}\label{f}
f(-\Delta) e^{i{\bk}\cdot x}= \lla_\bk e^{i{\bk}\cdot x}, \qquad \lla_\bk= f( \lambda_\bk  )
\quad\text{where}\quad \lambda_\bk= |\bk|^2\ .
\end{equation}
 The real-valued smooth function $f(t), t\ge0$,    is  positive  and $f'>0$. 
 To avoid technicalities, not relevant for  this work, we assume that 
$\ 
f(t)\ge C_1 |t| +C_2$ for all $t$,
for suitable positive constants $C_1,C_2$ (for example, $f(-\Delta)u=-\Delta u+u$). 
The processes $\bb^\bk, \bk\in\Z^d_L$, are
standard independent complex 
Wiener processes, i.e., $\bb^\bk(t)=\beta^\bk_+(t)+i\beta^{\bk}_-(t)$, 
where $\beta_{\pm}^\bk(t)$ are standard independent  real Wiener
processes. The real numbers $b_\bk$ are all non-zero and decay  fast
when $|\bk|\to\infty$. 

The nonlinearity  in \eqref{1.11} is hamiltonian and may be written as
\begin{equation}\label{*ham}
-i\e^{2q_*} |u|^{2q_*}u=\e^{2q_*}i   \,\nabla \cH(u) ,\quad
\cH(u) = \cH^{2q_*+2}(u)
=-\frac1{2q_*+2}\int |u(x)|^{2q_*+2}dx.
\end{equation}
We {\it assume} that eq. \eqref{1.11} with sufficiently smooth initial data $u_0(x)$
is well posed. It is well known that this assumption holds (at least) under
some restriction on $d,q_*$ and the growth of $f(t)$ at infinity, see in  Section~\ref{s1.1}

Equation \eqref{1.11} with small $\nu$ and $\eps$  is important for
physics and mathematical physics, where it serves  as a universal
model. In particular, it is  used in the physics of plasma 
to describe small oscillations of the media on long  time scale, see 
 \cite{Fal, Naz, ZL75, ZLF}. 
 The parameters $\nu$ and $\eps$ measure, respectively, the inverse 
  time-scale of the forced oscillations, and their  amplitude. 
  Physicists consider different regimes, where the two parameters
  are tied in various ways. To do this they assume some relations between $\e$ and $\nu$,
  explicitly or  implicitly. In our work we choose
  $$
  \eps^{2q_*}=\gi \nu,
  $$
  where $\rho>0$ is a constant. This assumption is within the  usually imposed bounds,
  see  \cite{ Naz}.   Passing to the slow time
$\tau=\nu t$, we  get the rescaled equation 
\begin{equation}\label{1.111}
\begin{split}
\dot u+i\nu^{-1}\big(-\Delta u\big)= -f(-\Delta) u- i\gi|u|^{2q_*}u
+
\sum_{\bk\in\Z^d_L}  b_\bk \dot\bb^\bk(\tau) e^{i{\bk}\cdot x}
\ ,
\end{split}
\end{equation}
where $u=u(\tau,x)$, $x\in \T^d_L$ and 
 the upper dot   $\ \dot{}\ $  stands for 
$\frac{d}{d\tau}$.  If we write $u(\tau,x)$ as Fourier series,
$\ 
u(\tau,x)=\sum_\bk v_\bk(\tau)e^{i\bk\cdot x},
$
then in view of \eqref{*ham},  eq. \eqref{1.111} may be written as the system
\begin{equation}\label{1.100}
\begin{split}
\dot v_\bk+i \nu^{-1}\lambda_\bk v_\bk=-\gamma_\bk v_\bk + 2\rho\, i   \,\frac{\p \cH(v)}{\p \bar v_\bk}  
+b_\bk \dot\beta^\bk(\tau),\quad \bk \in \Z^d_L.
\end{split}
\end{equation}
Here $\cH(v)$ is the Hamiltonian $\cH$, expressed in terms  of the Fourier coefficients
$v=(v_\bk, \bk\in\Z^d_L)$:
\begin{equation}\label{Hv}
\cH(v)=-\frac1{2q_*+2}
\sum_{\bk_1,\dots \bk_{2q_*+2}\in\Z^d_L} v_{\bk_1}\dots v_{\bk_{q_*+1}} \bar v_{\bk_{q_*+2}} \dots
\bar v_{\bk_{2q_*+2}} \,  \delta^{1\ldots q_*+1}_{q_*+2\ldots 2q_*+2}\,,
\end{equation}
and we use a notation, standard in physics (see \cite{Naz}):
\begin{equation}\label{N1}
\delta^{1\ldots q_*+1}_{q_*+2\ldots 2q_*+2}=\left\{\begin{array}{cc}
1 & \mbox{if }\bk_1+\ldots+\bk_{q_*+1}-\bk_{q_*+2}-\ldots-\bk_{2q_*+2} = 0 \\
0 & \mbox{otherwise}
\end{array}
\right.\ .
\end{equation}
As before  we
are interested in  the limit $\nu\to0$, corresponding   to small oscillations in the original
non-scaled equation.\footnote{See \cite{KNer13} for a theory of  equation \eqref{1.111} 
 for the case
when $f(t)=t+1$ and $\nu=\infty$.}

We note that the method of our work applies as well to equations \eqref{1.100} with  the Hamiltonians
$\cH$ of the form \eqref{*ham}, where the density of the Hamiltonian is a real-valued polynomial 
of $u$ and $\bar u$ (not necessarily a polynomial of $|u|^2$). For instance, we could work with
the cubic Hamiltonians $\cH^3=\int |u|^2(u+\bar u)\,dx$ or $\cH^3=\int (u^3+\bar u^3)\,dx$.

\subsection
{Weak Turbulence} \label{sWT}
In physics equations \eqref{1.111} with $\nu\to0$ are treated by the theory of weak turbulence,
or 
 WT (this abbreviation also may stand for `Wave Turbulence', but the difference between  the two notions
seems for us negligible);  see the  works, quoted above as well as \cite{CZ00}.
 That theory either deals with  equation \eqref{1.111},
where  $L=\infty$ 
by formal replacing Fourier series for $L$-periodic functions with Fourier integrals and makes with them bold 
transformations, or  considers the limit  $\nu\to0$ simultaneously with the limit  $L\to\infty$ and 
treats the two of  them in an equally bold way.\footnote{
Alternatively (and more often) people, working on WT, consider the HPDE \eqref{1.111}${}_{f=0, b_\bk=0\,\forall \bk}$
and treat it in a similar formal way, see \cite{Fal, Naz, ZLF, CZ00}. 
The corresponding problems do not fit our technique. Some recent progress in 
their rigorous study may be found in \cite{FGH}.
}
Concerning this limit WT makes a number of remarkable 
predictions, based on tools and ideas, developed in the community, which can be traced 
 back to the work \cite{Peierls}. The most famous of them deals with the
energy spectrum of solutions $u(\tau,x)$.  To describe the corresponding claims,
 consider the quantity $\E|v_\bk(\tau)|^2$, average it in
 time\footnote{Certainly this is not needed if we consider stationary solutions of the equation.}
 $\tau$  
 and in wave-vectors 
$\bk\in \Z^d_L$  such that $|\bk| \approx r>0$; next properly scale this and denote the result $E_r$.
The function $r\to E_r$ is called the {\it energy spectrum}.  It is predicted by WT that, in certain
{\it inertial range} $[r_1,r_2]$, which is contained in the spectral zone 
where the random force is negligible (i.e., where $|b_\bk|\lll (\E|v_\bk|^2)^{1/2}$ if $r_1\le |\bk| \le r_2$), the
energy spectrum has an algebraic behaviour:
\begin{equation}\label{KZ}
E_r\sim r^{-\alpha}\quad \text{for}\quad  r\in[r_1,r_2],
\end{equation}
for a suitable $\alpha>0$.  The WT limit, in fact, deals with the double limit:
\begin{equation}\label{limits}
\begin{split}
L\to\infty,\quad \nu\to0  \, .\quad
\end{split}
\end{equation}
 Relation  between the two parameters in \eqref{limits}
is not quite clear, and it may be better to talk about the WT limits (rather then about a single case). Note that
 only the limits which  lead to  relations \eqref{KZ} with  finite $\alpha$'s are relevant for the WT.

We suggest to study  the WT limits (at least, some of them) by splitting the limiting process 
  in two steps:

I) prove first that when $\nu\to0$, statistical characteristics of solutions $u^\nu$ have limits of order one, 
described by certain {\it effective equation} which is a nonlinear stochastic equation with coefficients 
of order one and with a hamiltonian  nonlinearity, made out the resonant terms of the 
nonlinearity $ |u|^{2q_*}u$.

II) Show then that main statistical  characteristics of solutions for the effective equation have non-trivial limits of 
order one, when $L \to\infty$ and $\rho=\rho(L)$ is a suitable function of $L$.

In this work we perform Step I, postponing  Step II for the future. We
stress  that the results of Step I 
alone cannot justify the predictions of WT since the latter  (e.g. the
asymptotic \eqref{KZ}) cannot hold  
when the period $L$ is fixed and finite. On the other hand, as we show
in a heuristic way in \cite{KM13KZ}, a suitable choice of the function
$\rho(L)$ leads  to a Kolmogorov-Zakharov type kinetic equation in the
limit of $L\to \infty$ and, as a consequence, to energy spectra of the desired form \eqref{KZ}.
This encourages us  to pursuit  our program, which
 brings to WT the advantage of a rigorous foundation, based on  the recent
results of stochastic calculus.  
 It is open for discussion up to what extent the corresponding  choice of the limits in \eqref{limits} 
 agrees with physics and the tradition of WT. Still we believe that no matter what the result
 of this discussion is, the Step~I), performed in this work,  Step~II), whose rigorous realisation 
 is left for future, and their synthesis   are interesting and important by themselves. 

 As the title of the paper suggests, our argument is a form of averaging.
The latter is a tool which is used by the WT community on a regular basis, either explicitly (e.g. see \cite{Naz}), or
implicitly.

\subsection
{  Inviscid limits for damped/driven hamiltonian PDE,  effective equations and interaction representation
}  Equation \eqref{1.11} is the linear hamiltonian PDE (HPDE) 
\eqref{*1}, driven by the random force, damped by the linear damping $-\nu f(-\Delta u)$ and perturbed 
by the hamiltonian nonlinearity $-\e^{2q_*} i\rho|u|^{2q_*}u$.   Damped/driven
 HPDE and the inviscid limits in these equations when the random force and the 
damping go to zero, are very important for physics. In particular, since the $d$-dimensional Navier-Stokes
equation (NSE) with a random force can be regarded as a damped/driven Euler equation (which is an HPDE), and
the inviscid limit for  the NSE describes the $d$-dimensional turbulence. The NSE with random force, 
 especially when
$d=2$, was intensively studied last years, but the corresponding inviscid limit turned out to be very 
complicated even for $d=2$, see \cite{KS}. The problem of this  limit 
becomes feasible when the underlying HPDE is integrable or linear. The most famous integrable PDE is the KdV
equation. Its damped/driven perturbations and the corresponding inviscid limits were studied   in \cite{KP08, K10}.  In \cite{K12} the method of those works was applied to 
the situation when the unperturbed HPDE is the Schr\"odinger equation
\begin{equation}\label{*5}
u_t + i(-\Delta u + V(x) u)=0,\qquad x\in\T^d_L\,,
\end{equation}
where the potential $V(x)$ is in general position. Crucial for the just mentioned works  is that there the 
unperturbed equation is free from strong resonances. For \cite{KP08, K10} it means that all solutions of KdV
are  almost-periodic functions of time, and for a typical solution the corresponding frequency vector is 
free from resonances; while for  \cite{K12}  it means that for the typical potentials $V(x)$, considered in 
\cite{K12}, the spectrum of the linear operator in \eqref{*5} is non-resonant. 

In contrast, now the linear operator in the unperturbed equation \eqref{*1}  has the eigenvalues 
$\lambda_\bk, \bk\in\Z^d_L$ (see \eqref{f}), which are highly
resonant (accordingly, all solutions for eq. \eqref{*1} are periodic with the same period $2\pi L^{-2}$).  
This gives rise to an additional difficulty.
To explain it, we  rewrite equation
\eqref{1.111}=\eqref{1.100} as a fast-slow system, denoting 
$
I_\bk=\tfrac12 |v_\bk|^2, \;\; \vp_\bk=\Arg v_\bk
$
(these are the action-angles for the linear hamiltonian system \eqref{*1}). 
In the new variables  eq.~\eqref{1.111} reads 
\begin{equation}\label{*6}
\dot I_\bk(\tau) = v_\bk\cdot P_\bk(v) +b_\bk^2 +b_\bk(v_\bk\cdot\dot\bb^\bk), 
\end{equation}
\begin{equation}\label{*7}
\dot \vp_\bk(\tau) =-\nu^{-1} \lambda_\bk +I_\bk^{-1}\cdot\dots, 
\end{equation}
where $\bk\in\Z^d_L$ and the dot $\cdot$ indicates the real scalar product in $\C\simeq \R^2$.
Here $P(v)$ is the vector field in the r.h.s. of the  $v$-equation \eqref{1.100} and $\dots$ abbreviates  a term of order one (as $\nu\to0$). 
  If the frequencies $\{\lambda_\bk\}$ are  resonant, then equations for some linear 
combinations of the phases $\vp_\bk$ are slow, which make it more difficult to analise the system. The
method of resonant averaging treats this problem in finite dimension, see \cite{AKN} and
Section~\ref{s3.1} below. In the situation at hand, we have additional problem: the $\vp$-equations
\eqref{*7}  have singularities at the  locus 
\begin{equation}\label{game}
\Game=\{ I: I_\bk =0 \;\;\text{for some} \; \bk\}
\end{equation}
which is dense in the space of sequences $(I_\bk, \bk\in\Z^d_L)$, 
and the averaged $I$-equations 
\begin{equation}\label{*8}
\dot I_\bk(\tau) =\lan v_\bk\cdot P_\bk\ran(I) +b_\bk^2 +b_\bk \sqrt{2I_\bk}\,\dot\beta^\bk(\tau)\,,
\quad \bk\in  \Z^d_L\,,
\end{equation}
where $\lan\cdot\ran$ signifies the average in $\vp\in\T^\infty$, 
have there weak singularities.  A way to overcome these difficulties 
is to find for \eqref{*6}, \eqref{*7} an {\it effective equation}, which is a system of regular 
equations 
\begin{equation}\label{*eff}
\dot v_\bk = R_\bk(v) + b_\bk\dot\bb^\bk(\tau), \quad \bk\in\Z^d_L, 
\end{equation}
such that under the natural projection
$\ 
v_\bk \mapsto I_\bk = \tfrac12 |v_\bk|^2$, $\bk\in\Z^d_L,
$
 solutions of \eqref{*eff} transform to solutions of \eqref{*8}.
In   \cite{K10} this approach was used to study the perturbed KdV equation,
written as a fast-slow  system, similar to \eqref{*6}, \eqref{*7}. 
 That system has strongly non-linear behaviour,
and in \cite{K10} the effective equation was constructed as a kind of averaging 
of the $I$-equations. In \cite{K12} an effective equation for the damped/driven 
nonresonant equation \eqref{*5} was derived in a similar way. If the introduced 
damping is linear and the nonlinearity is hamiltonian, like in eq.~\eqref{1.11}, then the 
effective equation in \cite{K12} is  linear. 

When the unperturbed hamiltonian system is linear,  an alternative way to find an 
effective equation is to use the {\it interaction
  representation}. I.e., to pass from the  
complex variables $v_\bk(\tau)$ (which diagonalise the linear system) to the fast 
rotating variables
\begin{equation}\label{a}
a_\bk (\tau) =e^{i\nu^{-1}\lambda_\bk \tau} v_\bk(\tau),\qquad \bk\in\Z^d_L. 
\end{equation}
Since $|a_\bk|=|v_\bk|$, then the limiting dynamics of the $a$-variables controls
the limiting behaviour of the actions $I_\bk$. So a regular system of equations, 
describing the limiting $a$-dynamics, is the effective equation. 
N.~N.~Bogolyubov used this approach for the finite-dimensional 
deterministic averaging, calling it {\it averaging in the quasilinear systems} (see in \cite{AKN}). 
The interaction
representation is systematically  used in the WT.

\subsection
{  Resonant Hamiltonian $\cHR$}
Now consider the fast-slow equations \eqref{*6}, \eqref{*7} which come from 
eq.~\eqref{1.100}, and where the fast motion \eqref{*7}  is highly resonant. Repeating
the construction of the effective equation from \cite{K12}, but replacing there the usual 
averaging by the resonant averaging, we find 
 an effective equation, corresponding to
\eqref{1.100}. It turned  out to be another damped/driven hamiltonian system
with a Hamiltonian $\cHR$, obtained by the resonant averaging of $\cH(v)$,
see Section~\ref{s2.3}.  As we said above, an
 alternative  way to derive the effective equation is through 
the interaction representation, i.e., by transition from the $v$-variables to the $a$-variables \eqref{a}.
In view of \eqref{1.100}, the $a$-variables satisfy the system of equations
\begin{equation}\label{1.100a}
\begin{split}
\dot a_\bk=&-\gamma_\bk a_\bk  + e^{i\nu^{-1} \lambda_{\bk} \tau} b_\bk
\dot\beta^\bk(\tau) \\
&-\rho\, i  \sum_{\bk_1,\dots
  \bk_{2q_*+1}\in\Z^d_L} a_{\bk_1} \dots a_{\bk_{q_*+1}}\bar
  a_{\bk_{q_*+2}} \dots \bar a_{\bk_{2q_*+1}} \delta^{1\ldots
    q_*+1}_{q_*+2\ldots 2q_*+1\, k}\\
& \times \exp\left(-i\nu^{-1}\tau(
  \lambda_{\bk_1}+\dots+\lambda_{\bk_{q_*+1}}-
  \lambda_{\bk_{q_*+2}}-\dots
  -\lambda_{\bk_{2q_*+1}}-\lambda_{\bk})\right),\;\; \bk\in \Z^d_L. 
\end{split}
\end{equation}
The  terms,   constituting the nonlinearity,  oscillate 
fast  as $\nu$ goes to zero, unless the sum of the eigenvalues in
the exponent in the 
 third line vanishes. This leads  to the right  guess that only the terms
for which this sum  equals  zero (i.e., the resonant terms), contribute
to the limiting dynamics, and that the effective equation is the
following damped/driven hamiltonian system
\begin{equation}\label{*eff1}
\begin{split}
\dot v_\bk=-\gamma_\bk v_\bk + 2\rho\, i   \,\frac{\p \cH^{\text{res}}(v)}{\p \bar v_\bk}  
+b_\bk \dot\beta^\bk(\tau),\quad \bk\in\Z^d_L\,.
\end{split}
\end{equation}
Here the Hamiltonian $\cHR(v)$ is given by the sum 
\begin{equation}\label{hres}
- \frac1{2q_*+2}
\sum_{\bk_1,\dots \bk_{2q_*+2}\in\Z^d_L} v_{\bk_1}\dots v_{\bk_{q_*+1}} \bar v_{\bk_{q_*+2}} \dots
\bar v_{\bk_{2q_*+2}} \,  \delta^{1\ldots q_*+1}_{q_*+2\ldots 2q_*+2}\,
 \delta(\lambda^{1\ldots q_*+1}_{q_*+2\ldots 2q_*+2})\,,
\end{equation}
so that $2\rho i \frac{\p \cH^{\text{res}}}{\p \bar v_\bk}  (v)$ is
\begin{equation}\label{vfield}
 -\rho i
  \sum_{\bk_1,\dots \bk_{2q_*+2}\in\Z^d_L} v_{\bk_1}\dots v_{\bk_{q_*+1}} \bar v_{\bk_{q_*+2}} \dots
\bar v_{\bk_{2q_*+1}} \,  \delta^{1\ldots q_*+1}_{q_*+2\ldots 2q_*+1\, k}\,
 \delta(\lambda^{1\ldots q_*+1}_{q_*+2\ldots 2q_*+1\, k})\,,
\end{equation}
where we use another physical abbreviation:
\begin{equation}\label{N2}
\delta(\lambda^{1\ldots q_*+1}_{q_*+2\ldots 2q_*+2})=\left\{\begin{array}{cc}
1 & \mbox{if }\lambda_{\bk_1} +\ldots+\lambda_{\bk_{q_*+1}}
-\lambda_{\bk_{q_*+2}}- \ldots- \lambda_{\bk_{2q_*+2}}
 = 0, \\
0 & \mbox{otherwise.}
\end{array}
\right.\ .
\end{equation}
This representation for $\cHR$ is different from that given by the resonant averaging. 
Its advantage is the natural relation with the  $a$-variables, which is
convenient to study the limit $\nu\to0$. The 
 representation for $\cHR$ by means of the resonant averaging  turned out to be 
more useful  to study properties of $\cHR$ and of  the corresponding
hamiltonian vector field. 

We saw that the effective equation can be obtained from the system 
 \eqref{1.100} by a simple procedure:  drop the fast rotations
and replace the Hamiltonian $\cH$ by its resonant average $\cHR$. 
In difference with the non-resonant case, this is a nonlinear system. 
The corresponding hamiltonian equation
\begin{equation}\label{hamm}
\dot v_\bk= 2\rho\, i   \,\frac{\p \cH^{\text{res}}(v)}{\p \bar v_\bk} \,,
 \quad \bk\in\Z^d_L\,,
\end{equation}
has a vector field, 
 locally Lipschitz in sufficiently smooth spaces, so the equation is well
posed locally in time. In fact, it is globally well posed. 
 We get  this result  in Section~\ref{s5.2} as a simple 
 consequence of our  main theorems.
 
 The Hamiltonian  $\cHR$ has two  convex quadratic integrals,
 $$
 H_0(v)=\frac12\sum |v_\bk|^2,\qquad  H_1= \frac12 \sum\lambda_\bk |v_\bk|^2\,,
 $$
 which are 
 similar to the energy and the enstrophy integrals for  the 
  2d~Euler equation on $\T^2$ (see  \eqref{integrals}),
  and the vector-integral of moments
  $$
  M(u) = \frac12 \sum \bk |u_\bk|^2\in \R^d\,,
  $$
  which can be compared with the extra integrals of the 2d~Euler. Besides, the 
   vector-field \eqref{vfield}  is non-linear homogeneous and hamiltonian, 
   as  that of the Euler equation.
  This makes  the effective equation \eqref{*eff1}    similar 
   to the 2d~Navier-Stokes system on $\T^2$. Fortunately for Step~II above, 
    the  former is significantly simpler then the latter.
\smallskip

 The construction of the resonant Hamiltonian $\cHR$ is in the spirit of WT, and the corresponding 
 hamiltonian equation \eqref{hamm} is known there as the {\it equation of discrete turbulence},
 see \cite{Naz},~Chapter~12.  Similar equations were  considered by  mathematicians, interested in related problems
 (see \cite{GG12}), and were used by them for intermediate arguments
  (e.g., see \cite{FGH}).
The stochastic equation \eqref{*eff1} was not considered before our work.

\subsection
{ Results}  Main results of our work  are stated and proved in Section~\ref{s.results},  based on properties of 
 the effective equation, established earlier.
 They imply  that the long-time behaviour of solutions for equations 
\eqref{1.111}, when $\nu\to0$, is controlled in distribution  by solutions for the effective equation. 
We start with the results on the Cauchy problem.  So, let $v^\nu(\tau)$ be a solution of 
\eqref{1.100} such that 
$$
v^\nu(0)=v_0,
$$
where $v_0=(v_{0\bk}, \bk\in\Z^d_L)$ corresponds to a sufficiently smooth function $u_0(x)$. Let us fix any $T>0$.

Consider the list  $\cA$ of resonances in eq.~\eqref{*1}.  That is,  the set of all nonzero integer 
vectors $\xi=(\xi_\bk, \bk\in\Z^d_L)$ of finite length, satisfying  $\sum_{\bk \in\Z^d_L} \xi_{\bk}\lambda_{\bk}=0$.  For
$\xi\in\cA$ consider the corresponding resonant combination of phases of solutions $v^\nu(\tau)$,
$$
\Phi^\xi(v^\nu(\tau)):=  \sum_{\bk \in\Z^d_L} \xi_{\bk} \vp_{\bk}(v^\nu(\tau))\in S^1=\R/2\pi\Z,\quad 0\le\tau\le T.
$$
Consider also the vector of actions 
$
I(v^\nu(\tau))=\{ I_\bk(v^\nu(\tau)), \bk\in \Z^d_L\}.
$

\noindent
{\bf Theorem 1}. When $\nu\to0$, we have the weak convergence of  measures 
$$
\cD\big(I(v^\nu(\tau)) \big) \strela  \cD\big(I(v^0(\tau))\big),
$$
where $v^0(\tau),\ 0\le\tau\le T$, is a unique solution of equation 
\eqref{*eff1} such that $v^0(0)=v_0$. 
\medskip

The distributions of 
resonant combinations of phases $\Phi^\xi(v^\nu(\tau))$, mollified in $\tau$, 
converge  to the mollified distributions to $\Phi^\xi(v^0(\tau))$, see Section~\ref{s5.2} for an exact 
statement of the result.  
 On the contrary, if a finite vector
$s=(s_\bk, \bk\in\Z^d_L)$ is non-resonant, i.e.,  $\sum s_\bk \lambda_\bk\ne0$, then the measure
$
\cD(\Phi^{(s)}(v^\nu(\tau)),
$
mollified in $\tau$, converges when $\nu\to0$ to the Lebesgue measure on $S^1$.  The theorem is
proved in Section~\ref{s5.2}, using the interaction representation \eqref{1.100a} for equation \eqref{1.11}.

The limiting behaviour of solutions $v^\nu(\tau)$ can be described without evoking the effective equation.
Namely, denote by $\cA_m$ the set of resonances $\xi\in\cA$ of length $|\xi|\le m:=2q_*+2$. Then 
the vectors $I^\nu(\tau)=I(v^\nu(\tau))$ and $\Phi^\nu(\tau)=\big(\Phi^\xi(v^\nu(\tau)), \xi \in\cA_m\big)$
converge in distribution to  limiting processes $I^0(\tau)$ and $\Phi^0(\tau)$, which are weak solutions 
of the corresponding averaged equations on these vectors. The equations
for $\Phi$ have strong singularities at the locus $\Game$, and rigorous formulation of this convergence is
involved, see Proposition~\ref{p.slow}.

\medskip

Now consider a stationary measure $\mu^\nu$ for  equation \eqref{1.111}  (it always exist). We have 

\noindent
{\bf Theorem 2}. Every sequence $\nu'_j\to0$ has a subsequence $\nu_j\to0$ such that 
$$
I\circ \mu^{\nu_j}\strela  I \circ m^0,\qquad
\Phi^{(\xi)}\circ \mu^{\nu_j}\strela \Phi^{(\xi)}\circ m^0\quad \forall\, \xi\in\cA,
$$
where $m^0$ is a stationary measure for equation \eqref{*eff1}. If a vector $s$ is non-resonant, then the
measure $\Phi^{(s)}\circ\mu^\nu$ converges, as $\nu\to0$, to the Lebesgue measure on $S^1$.
\medskip

If the effective equation has a unique stationary measure $m^0$, then the limits in Theorem~2 
do not depend on the sequence $\nu_j\to0$, so the convergences hold as $\nu\to0$. 
Remarkably, in this case 
the measure $m^0$ controls not only the fast, but also the slow components of the measures
$\mu^\nu$:

\bigskip
\noindent
{\bf Theorem 3}.  If the effective 
equation has a unique stationary measure $m^0$, then 
$
\mu^\nu \strela m^0$ as $ \nu\to0.
$
\medskip

In particular, if  the effective 
equation has a unique stationary measure $m^0$ and the equation \eqref{1.11} is mixing,\footnote{both these conditions
hold, e.g. if $q_*=1$ and $f(\lambda)=c_1+\lambda^{c_d}$, where $c_d$ is sufficiently big in terms of $d$.} 
then $m^0$ describes asymptotical behaviour of distributions of solutions $u(t)$ for \eqref{1.11} as $t\to \infty$
and $\nu\to0$:
$$
\lim_{\nu\to0}\lim_{t\to\infty}
\cD(u(t))=m^0.
$$

In view of the last theorem, it is important to understand when the effective equation has a unique
stationary measure and  is mixing.  This is discussed in Section~\ref{s.mix}. In particular, the 
mixing holds if $q_*=1$, $f(t)=t+1$ and $d\le3$. 
\smallskip

This work is a revised version of the preprint \cite{KM13}.

\medskip

\noindent
{\bf Notation and Agreement.} The  {\it stochastic terminology} we use agrees with \cite{KaSh}.
 All filtered probability spaces we work with satisfy the {\it usual condition} (see \cite{KaSh}).\footnote{
I.e., the corresponding filtrations $\{\cF_t\}$ are continuous from the right, and each $\cF_t$ contains all negligible sets.}
 Sometime we forget to mention that 
a certain relation holds a.s. \\
{\it Spaces of integer vectors.} We denote by $\Z^\infty_0$ the set of vectors in $\Z^\infty$ 
of finite length,  and denote $\ZZ=\{s\in\Z^\infty_0: s_k\ge0\  \forall\,k\}$. Also
see \eqref{notation} and \eqref{hren}. 

\noindent 
{\it Infinite vectors.} For an infinite vector $\xi=(\xi_1,\xi_2,\dots)$ (integer, real or complex) and $N\in\N$ 
we denote by $\xi^N$ the vector $(\xi_1,\dots,\xi_N)$, or the vector $(\xi_1,\dots,\xi_N,0,\dots)$, depending on the 
context. For a complex
vector $\xi$ and $s\in\ZZ$ we denote  $\xi^s=\prod_j \xi_j^{s_j}$. 

\noindent
  {\it Norms.} We use $|\cdot|$ to denote the Euclidean norm in $\R^d$ and in 
$\C\simeq\R^2$, as well as the $\ell_1$-norm in $\Z^\infty_0$. For the norms
$|\cdot|_{h^m}$ and $|\cdot|_{h^m_I}$
 see \eqref{vnorm} and below that.

\noindent
{\it Scalar products.}  The notation  ``$\cdot$'' stands for  the
scalar product in $\Z^\infty_0$, the paring of $\Z^\infty_0$ 
with $\Z^\infty$, the Euclidean scalar product in $\R^d$ and in $\C$. The latter means that if $u,v\in\C$, then 
$u\cdot v=\Re(\bar u v)$.  The $L_2$-product is denoted $\lan \cdot, \cdot\ran$, and we also denote by
$\lan f,\mu\ran = \lan\mu,f\ran$ the integral of a function $f$ against a measure $\mu$.

\noindent
{\it Max/Min.} We denote $a\vee b=\max(a,b)$, $a\wedge b=\min(a,b)$. 
\bigskip          

\medskip\par
\noindent{\it Acknowledgments.} We wish to thank for discussions and advice 
Sergey Nazarenko, Anatoli Neishtadt 
 and Vladimir Zeitlin. 
This work was supported by  l'Agence Nationale de la Recherche 
 through the grant STOSYMAP (ANR 2011BS0101501).

\section{Preliminaries}\label{s1}
Since in this work 
we are  not interested in the dependence  of the
results on  $L$, 
 from now on it will be kept fixed and equal to 1, apart from Section~\ref{s2.5}. There 
 we make explicit calculations, controlling how their results depend 
  on $L$.
\subsection{Apriori  estimates.}\label{s1.1}
In this section we discuss preliminary properties of solutions for \eqref{1.111}. We found it
convenient to parametrise the vectors from the 
 trigonometric basis $\{e^{i{ \bk}\cdot x}\}$ by natural 
numbers and to normalise them.
 That is, to use the basis $\{e^j(x), j\ge 1 \}$, where 
 \be\label{basis}
 e^j(x) = (2\pi )^{-d/2} e^{i{ \bk}\cdot x},\qquad \bk=\bk(j).
 \ee
  The functions $e^j(x)$ 
are eigen--vectors of the Laplacian,  $-\Delta e^j=\lambda_j e^j$, so
ordered that $0=\lambda_1< \lambda_2\le \ldots$.  Accordingly eq.~\eqref{1.111} reads 
\begin{equation}\label{1.1}
\begin{split}
\dot u+i\nu^{-1}\big(-\Delta u\big)= -f(-\Delta) u- i\gi|u|^{2q_*}u
+\frac{d}{d\tau}\sum_{j=1}^\infty b_j\bb^j(\tau)e^j(x)\ ,
\end{split}
\end{equation}
$u=u(\tau,x)$, where 
$
f(-\Delta) e^j= \lla_je^j$ with $ \lla_j= f(\lambda_j). 
$
The processes $\bb^j = \beta^j + i\beta^{-j}  , j\ge 1$, are
standard independent complex Wiener processes. 
 The real numbers $b_j$ are such that  for a suitable sufficiently large even integer 
$r $ (defined below in \eqref{rr}) 
 we have 
$$
B_r:=2\sum_{j=1}^\infty \lambda_j^r
b_j^2 < \infty .
$$

By $\cH^p$, $p\in \R$, we denote the Sobolev space $\cH^p=H^p(\T^d, \C)$,
 regarded as a real Hilbert space,
 and denote by $\lan \cdot,\cdot \ran$ 
 the real $L^2$--scalar product on $\T^d$ . We provide $\cH^p$  with the  norm $\|\cdot \|_p$, 
$$
\left\| u\right\|_p^2=\sum_{j=1}^\infty|u_j|^2  ( \lambda_j\vee 1)^p\quad
\mbox{for } u(x)=\sum_{j=1}^\infty u_j e^j(x)\  .
$$

Let $u(t,x)$ be a solution of \eqref{1.1} such that
$u(0,x)=u_0$. It satisfies standard a-priori estimates which we now
discuss, following \cite{K12}. Firstly,  for a suitable $\eps_0>0$, uniformly in $\nu>0$
one has
\begin{equation}\label{1.3}
\E   e^{\eps_0 \|u(\tau)\|^2_0} \leq C(B_0,\|u_0\|_0) \quad \forall \tau\geq 0\,.
\end{equation}
 Assume that 
\begin{equation}\label{2.4}
q_*<\infty\;\;\;\text{if}\;\;d=1,2,\qquad
q_*<\frac{2}{d-2}\;\;\;\text{if}\;\;d\ge3\ . 
\end{equation}
Then, the following bounds on the Sobolev norms of the solution hold for each $2m\le r$ and every
 $n$:
\begin{equation}\label{2.5}
 \begin{split}
\E\left(\sup_{0\le\tau\le T}  {\|u(\tau)\|}_{2m}^{2n}+
\int_0^T{\|u(s)\|}^2_{2m+1} { \|u(s)\|}_{2m}^{2n-2}ds\right)\\ 
\le {\|u_0\|}_{2m}^{2n}+
C(m,n,T)\big(1+\|u_0\|_0^{c_{m,n}}\big),
\end{split}
\end{equation}
 \begin{equation}\label{2.05}
 \E\,
{\|u(\tau)\|}_{2m}^{2n} \le  C(m,n)\qquad  \forall\,\tau\ge0, 
\end{equation}
 where $C(m,n,T)$  and $C(m,n)$  also depend on
 $B_{2m}$.
 \medskip
 
  Estimates \eqref{2.5}, \eqref{2.05} are assumed everywhere in our work.  As we have explained, 
 they are fulfilled under the assumption  \eqref{2.4}, but if  the function $f(t)$ grows super-linearly, then the 
 restriction  \eqref{2.4} may be weakened.
  \medskip
 
 Relations \eqref{2.5} in the usual way
 (cf. \cite{Hai01b,KS04J, Od06, Sh06})
 imply that eq.~\eqref{1.1} is {\it regular in the space
   $\cH^{r}$} in the sense that for any $u_0\in
 \cH^{r}$ it has a unique strong solution $u(t,x)$,
 equal to $u_0$ at $t=0$, and satisfying estimates \eqref{1.3},
 \eqref{2.5} for any $n$. By the Bogolyubov-Krylov
 argument, applied to a solution of \eqref{1.1}, starting from the origin at $t=0$,
  this equation has a stationary measure $\mu^\nu$, supported
 by the space $\cH^{r}$, and a corresponding stationary
 solution $u^\nu(\tau)$, $\cD u^\nu(\tau)\equiv\mu^\nu$, also
 satisfies \eqref{1.3} and \eqref{2.05}.

\subsection{Resonant averaging  }\label{s3.1}
Let ${W}\in\Z^n$ , $n\ge1$, be a non-zero 
 integer vector such that its components are relatively 
prime (so if $W=mV$, where $m\in\Z$ and $V\in\Z^n$, then $m=\pm1$). 
We  call the set 
\begin{equation}\label{4.2}
\cA=\cA(W):=\{ s\in \Z^n:  \,{W}\cdot s=0\}\ 
\end{equation}
the {\it set of resonances for $W$}.  This is a $\Z$-module.
 Denote its rank by $r$.
Here and everywhere below the finite-dimensional vectors
are regarded as column-vectors.

\begin{lemma}\label{l1.1} The rank $r$ equals $n-1$. There exists a 
system $\zeta^1,\ldots,\zeta^n$ of integer vectors in
$\Z^n$ such that
span$\,{}_{\Z}\{\zeta^1,\ldots,\zeta^{n-1} \}= \cA$,  and the
$n\times n$ matrix $R=(\zeta^1 \zeta^2\dots\zeta^n)$
 is unimodular (i.e., $\det R=\pm1$).
\end{lemma}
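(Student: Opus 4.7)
The plan is to realise $\cA$ as the kernel of an explicit surjective $\Z$-linear map and then to complete a $\Z$-basis of that kernel to a $\Z$-basis of $\Z^n$ by a single vector furnished by B\'ezout's identity.

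First I would consider the homomorphism $\varphi\colon\Z^n\to\Z$, $\varphi(s)=W\cdot s$, whose kernel is by definition $\cA$. Since the components of $W$ are relatively prime, B\'ezout's identity yields integers $a_1,\dots,a_n$ with $\sum_i a_i W_i=1$, so $\varphi$ is surjective. The short exact sequence
$$0\longrightarrow \cA \longrightarrow \Z^n \xrightarrow{\ \varphi\ } \Z\longrightarrow 0$$
splits because $\Z$ is a free $\Z$-module, which shows simultaneously that $\cA$ is free and that its rank equals $n-1$. This settles the first assertion.

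For the second, I would pick any $\Z$-basis $\zeta^1,\dots,\zeta^{n-1}$ of $\cA$ (which exists because $\cA$ is finitely generated and torsion-free), and set $\zeta^n=(a_1,\dots,a_n)^\top$, so that $W\cdot\zeta^n=1$. Every $v\in\Z^n$ then decomposes as
$$v=\varphi(v)\,\zeta^n+\bigl(v-\varphi(v)\zeta^n\bigr),$$
where the second summand lies in $\cA$ (apply $\varphi$) and hence is a unique $\Z$-linear combination of $\zeta^1,\dots,\zeta^{n-1}$, while the coefficient $\varphi(v)$ is clearly unique. Thus $\zeta^1,\dots,\zeta^n$ is a $\Z$-basis of $\Z^n$, so $R=(\zeta^1\,\zeta^2\dots\zeta^n)$ represents a change of basis in $GL_n(\Z)$ and in particular $\det R=\pm1$.

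The argument has no serious obstacle, but I would emphasise that the coprimality hypothesis on the components of $W$ is used in an essential way to construct $\zeta^n$: without it the image of $\varphi$ would be $m\Z$ with $m=\gcd(W_i)>1$, no $\zeta^n\in\Z^n$ could satisfy $W\cdot\zeta^n=\pm1$, and in fact $|\det R|$ would equal $m$ for every completion of a basis of $\cA$, so unimodularity would fail. Equivalently, this step can be viewed as the special case of Smith normal form that says a primitive vector in $\Z^n$ can be completed to a $\Z$-basis.
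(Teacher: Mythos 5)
Your proof is correct, and it takes a genuinely different route from the paper's. The paper does not treat the general primitive $W$ at all: it explicitly restricts to the case where some component of $W$ equals one (sufficient for its application, and it refers to Bourbaki for the general statement), assumes $W_n=1$, and builds a concrete unimodular matrix $B$ with $BW=e^n$ as the inverse of the elementary matrix $(e^1\,\cdots\,e^{n-1}\,W)$; the vectors $\zeta^j=B^Te^j$ then come out explicitly (in effect $\zeta^j=e^j-W_je^n$ for $j<n$ and $\zeta^n=e^n$), and the basis of $\cA$ is obtained by transporting $e^1,\dots,e^{n-1}$ rather than by invoking the structure theory of submodules of $\Z^n$. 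Your argument instead works for any $W$ with coprime components: B\'ezout supplies a splitting of the exact sequence $0\to\cA\to\Z^n\xrightarrow{\varphi}\Z\to 0$, which gives both the rank count and the completion of an (abstract, not explicit) basis of $\cA$ by the single vector $\zeta^n=(a_1,\dots,a_n)^\top$; your closing remark that coprimality is exactly what makes unimodularity possible is accurate. What the paper's version buys is complete explicitness of $R$, which it quietly exploits later (e.g.\ the identity $R^TW=e^n$ used to ``resolve the resonances''); what yours buys is the general case and a cleaner conceptual justification that $\cA$ is free of rank $n-1$, a fact the paper asserts without detailed argument in its restricted setting.
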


That is, the vectors $(\zeta^1,\ldots,\zeta^{n-1} )$ make an integer basis of the 
hyperspace $W^\perp\subset\R^n$. 

\begin{proof}
We restrict ourselves to the case when some component of the vector $W$ equals one since 
this is the result we need below. For the general case and for a more general statement see,
 for example,  \cite{bour}, Section~7.
 
 Without loss  of generality we assume that $W_n=1$. Consider  the matrix such that 
 its $n$-th column is $W$ and for $j<n$ the $j$-th column is the vector $e^j=(e^j_1,\dots,e^j_n)^T$, 
 where $e^j_l=\delta_{j,l}$. It is unimodular and transforms the basis vector $e^n$ to $W$. Its
 inverse is an unimodular matrix $B$ such that $BW=e^n$. Let $s$ be any vector in $\cA$. Since
 $$
 W\cdot s=0 \Leftrightarrow  BW\cdot (B^T)^{-1}s=0 \Leftrightarrow 
 e^n\cdot  (B^T)^{-1}s=0 \,,
 $$
 then $(B^T)^{-1}s=\sum _{j=1}^{n-1} m_j e^j$, where $m_j$'s are some integers. 
  This proves the lemma if we choose $\zeta^j=B^Te^j$, 
 $j=1,\dots,n$.  Note that the matrix $R$ equals $B^T$.
\end{proof}
\medskip

Since $R^TW=BW=e^n$, then the automorphism of the torus $\T^n\to \T^n$,
$\vp\to y=R^T\vp$, ``resolves the resonances" in the differential equation 
$$
\dot \vp = W
$$
in the sense that it transforms it to the equation
\begin{equation}\label{resolve}
\dot y=R^TW=(0,\dots,0,1)^T\,.
\end{equation}
\medskip

Let us consider a  mapping $L=L_{\cA}:\T^n\to\T^{n-1}$, ``dual to the module $\cA$":
\begin{equation}\label{f0}
L: \T^n \ni\vp  \to ( \vp\cdot Re^1 ,\dots, \vp\cdot Re^{n-1} )^T  \in
\T^{n-1}\,.
\end{equation}
The basis $\{\eta^j=(R^T)^{-1}e^j,\ 1\le j\le n \}$, is dual to the basis  $\{\zeta^j=R e^j,\ 1\le j\le n \}$,
since 
$$
\eta^j\cdot\zeta^l = (R^T)^{-1}e^j\cdot Re^l =\delta_{j,l}.
$$
Therefore if we decompose $\vp\in\T^n$ in the $\eta$-basis, 
$\vp=\sum_ky_k\eta^k=(R^T)^{-1}y$, then $L\vp=(y_1,\dots, y_{n-1})^T$. That is, 
\begin{equation}\label{f1}
L\circ (R^T)^{-1} (y_1,\dots,y_n)^T=(y_1,\dots,y_{n-1})^T\,. 
\end{equation}
In particular, the fibers of the mapping $L$ are the circles  $R(\{{\mathbf y}\}\times S^1)$,
where ${\mathbf y}=(y_1,\dots,y_{n-1})^T  \in\T^{n-1}$. 
\medskip

 For a continuous 
function $f$ on $\T^n$ we define its {\it resonant average with respect to the integer vector $W$}  as the function
\begin{equation}\label{usred}
\langle f \rangle_W(\vp) :=
\int_0^{2\pi}f\left( \vp+ tW\right) \,\dbar t\ ,
\end {equation}
where we have set $\dbar t:= \tfrac{1}{2\pi}d t$.

\begin{lemma}\label{r.aver} Let $f$ be a  $C^\infty$-function on $\T^n$, 
 $f(\vp)=\sum f_se^{is\cdot \vp}$. Then
\begin{equation}\label{yy}
\lan f\ran_W(\vp)=
\sum
f_s\delta_{0,\,s\cdot W}\, e^{is\cdot \vp}=
\sum_{s\in \cA(W)} f_s\, e^{is\cdot \vp}.
\end{equation}
\end{lemma}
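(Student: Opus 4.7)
The plan is to insert the Fourier expansion of $f$ into the defining formula \eqref{usred} and evaluate the resulting integral term by term. Since $f\in C^\infty(\T^n)$, its Fourier coefficients $\{f_s\}_{s\in\Z^n}$ decay faster than any polynomial in $|s|$, so the series
$$
f(\vp+tW)=\sum_{s\in\Z^n} f_s\, e^{is\cdot(\vp+tW)} = \sum_{s\in\Z^n} f_s\, e^{is\cdot\vp}\, e^{it\,(s\cdot W)}
$$
converges absolutely and uniformly in $(t,\vp)\in[0,2\pi]\times\T^n$. This justifies interchanging the summation with the integral $\int_0^{2\pi}(\cdot)\,\dbar t$ without any analytic subtlety.

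After the interchange, the $s$-th term becomes
$$
f_s\, e^{is\cdot\vp} \int_0^{2\pi} e^{it\,(s\cdot W)} \dbar t.
$$
The key arithmetic observation is that $s\cdot W\in\Z$ for every $s\in\Z^n$, since both $s$ and $W$ have integer entries. Consequently the integral equals $1$ when $s\cdot W=0$ and vanishes otherwise, so it is precisely the Kronecker symbol $\delta_{0,\,s\cdot W}$. Summing the surviving terms gives the first equality of \eqref{yy}, and the second equality is immediate from the definition \eqref{4.2} of $\cA(W)$ as $\{s\in\Z^n:W\cdot s=0\}$. There is no genuine obstacle here; the only point worth noting is the integrality of $s\cdot W$, which is what makes the resonant average project onto exactly the resonant Fourier modes rather than producing principal-value-type contributions.
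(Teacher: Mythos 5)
Your proof is correct and follows essentially the same route as the paper: both reduce to the computation $\int_0^{2\pi}e^{it(s\cdot W)}\,\dbar t=\delta_{0,\,s\cdot W}$ for a single monomial (using that $s\cdot W\in\Z$), and both handle the passage from finitely many terms to the full series via the good convergence of the Fourier series of a $C^\infty$ function --- the paper phrases this as ``true for trigonometric polynomials, then by continuity,'' while you phrase it as a direct interchange of sum and integral justified by uniform convergence, which is the same idea in a slightly more explicit form.
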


\noindent
{\it Proof.} It is immediate that \eqref{yy} holds for trigonometrical polynomial.
Since for  $C^\infty$-functions 
the series in \eqref{yy} converges well, then by continuity the result holds 
for smooth functions $f$.
\qed

\subsection{Resonant averaging in a  Hilbert space  }\label{s3.2}
Consider the Fourier transform for complex functions on $\T^d$ 
which we write as the
  mapping
$$
\cF: \cH \ni u(x)\mapsto v=(v_1,v_2,\ldots) \in \C^\infty\ ,
$$
defined by the relation $u(x)=\sum v_k e^k(x)$. In the space of
complex sequences we introduce the norms
\begin{equation}\label{vnorm}
\left|v\right|^2_{h^p}=\sum_{k\ge 1} |v_k|^2 (\lambda_k\vee1)^p\,, \quad p\in\R\,,
\end{equation}
and set $h^p=\{v|\, \left| v\right|_{h^p} <\infty\}$. Then 
$$
|\cF u|_{h^p}=\|u \|_p\qquad \forall\,p.
$$

For $k\ge1$ let us denote $I_k=I(v_k)=\tfrac12|v_k|^2$ and
$\vp_k=\vp(v_k)$, where for $v\in\C$ 
 $\vp(v)=\Arg v\in S^1$  if   $v\ne0$,  and  
$\vp(0)=0\in S^1$.  For any $r\ge0$ consider the mappings
\begin{equation}\label{phi} 
\Pi_I:h^r\ni v\mapsto I=(I_1,I_2,\dots)\in h^r_{I+},\quad
\Pi_\vp:h^r\ni v\mapsto \vp=(\vp_1,\vp_2,\dots)\in\T^\infty.
\end{equation}
Here $h^r_{I+}$ is the positive octant $\{I: I_k\ge 0 \  \forall k\}$ 
 in the space $h^r_{I}$, where 
$$
 h^r_{I}=\{I\mid | I|_{h^r_I}=2\sum_k (\lambda_k\vee1)^r|I_k|<\infty\}.
$$
Abusing a bit notation we will  write 
$
\Pi_I (\cF(u))=I(u)$,  $\Pi_\vp(\cF(u))=\vp(u).
$
The mapping $I:\cH^r\to h_I^r$ is  2-homogeneous 
continuous, while the mapping $\vp:\cH^r\to \T^\infty$ is  Borel-measurable
 (the torus 
$\T^\infty$ is given the Tikhonov topology and the corresponding   Borel sigma-algebra). 

For infinite integer vectors $s=(s_1,s_2,\dots)$  (and only for them) we will write
the $l_1$-norm of $s$ as $|s|$, 
$$
|s| = \sum_j|s_j|.
$$
We denote $
\Z_0^\infty=\{s\in\Z^\infty:  |s|<\infty\},
$
and for a vector $s=(s_1,s_2,\dots)\in \Z_0^\infty$ write
\begin{equation}\label{notation}
\Lambda\cdot s=\sum_k\lambda_ks_k, \quad \supp s=\{k: s_k\ne 0\}, \quad
\lc s\rc=\max\{k: s_k\ne0\}.
\end{equation}
Similar for  $\vp\in\T^\infty$ and $s\in \Z_0^\infty$ we write 
$\vp\cdot s=s\cdot\vp=\sum_k\vp_k s_k\in S^1$. 

Let us  fix some
 $m\in\N\cup\infty$ and  define the set  of resonances of order $m$
   for the (integer)  frequency-vector
 $\ 
 \Lambda=(\lambda_1,\lambda_2,\dots)
 $ 
  as
\begin{equation}\label{resset}
\cA(\Lambda,m)=\{ s\in \Z^\infty_0 : |s|\le m,\Lambda\cdot s=0\}\ .
\end{equation}
We will abbreviate 
 $\cA(\Lambda)= \cA(\Lambda,\infty)=\{ s\in \Z^\infty_0 : \Lambda\cdot s=0\}$. 

 Let us denote 
$\ 
\ZZ=\{s\in\Z^\infty_0: s_k\ge0\;\; \forall k\},
  $
and consider a series on some space $h^r, r\ge0$:
  \begin{equation}\label{xx}
  F(v)=\sum_{p,q,l\in\ZZ} \CC\,, 
  \end{equation}
  where $I=I(v)$,
  $C_{pql}=0$ if $\supp q\cap \supp l\ne\emptyset$
  and for $v\in h^r$, $q\in\ZZ$ we write $v^q=\prod v_j^{q_j}$. 
  We assume that the series  converges normally in $h^r$ in the sense that for each $R>0$ we have 
    \begin{equation}\label{xxx}
    \sum_{p,q,l\in\ZZ}  |C_{pql} | \sup_{|v|_{h^r}, |w|_{h^r}\le R} |v^pw^p v^qw^l|<\infty.
    \end{equation}
    Clearly $F(v)={\bold F}(v,\bar v)$, where ${\bold F}$ is a (complex) analytic function on $h^r\times h^r$.
    Abusing language and following a physical tradition we will say that 
    {\it $F$ is analytic in $v$ and $\bar v$}. In particular, $F(v)$ is a real-analytic (so 
     continuous) function of $v$, 
      and the series \eqref{xx}
     converges absolutely. 

The resonant averaging of $F$ can be conveniently  defined by introducing,  
for any $\theta\in\T^\infty$, the rotation operator $\Psi_\theta$,
which is a linear operator in $h^0$:
$$
\Psi_\theta(v)=v',\qquad v'_k=e^{i\theta_k}v_k.
$$
 Clearly this is an unitary isomorphism of every space $h^r$.
  Note  that 
   $
  (I\times\vp)(\Psi_\theta v)\equiv (I(v), \vp(v)+\theta)\,.
  $
  Using that $\Lambda$ is an integer vector and based on
  definition~\eqref{usred}, we 
give the following

\noindent
{\bf Definition.}  If a function $F \in C(h^r)$ is given by a normally
converging series \eqref{xx},  then its 
     resonant average with respect to $\Lambda$ is the function
  \begin{equation}\label{La_aver}
  \lan F\ran_\Lambda(v):= \int_0^{2\pi} F(\Psi_{t\Lambda} (v))\, \dbar t\,,
  \qquad \dbar t = dt/2\pi\,.
   \end {equation}
   
   Defining a function $\tilde F(I,\vp)$ by the relation 
   $
   F(v) = \tilde F(I(v), \vp(v)),
   $
   we see that 
   $ \lan F\ran_\Lambda(v) = \int_0^{2\pi} \tilde F (I, \vp+t\Lambda)\,\dbar t$. 
   So this definition well agrees with \eqref{usred}. 
   \smallskip

    Consider a monomial $F=(2I)^pv^q\bar v^l$.  By
    Lemma~\ref{r.aver} we have
     $$
     \lan (2I)^p v^q\bar v^l\ran_\Lambda= 
     (2I)^pv^q\bar v^l \delta_{0, (q-l)\cdot\Lambda}\ . 
     $$ 
     Now assume that  $F$ is given by a normally convergent series \eqref{xx} and has
     degree $\le m\le\infty$ in sense that 
     $C_{pql}=0$   unless  $ |q|+|l|\le m$. Then  
\begin{equation}\label{La_aver2}
    \lan F\ran_\Lambda(v) = \sum_{q-l\in\cA(\Lambda,m)} \CC=
    \sum_{(q-l) \cdot\Lambda=0} \CC\,.
\end{equation}
  If the series  \eqref{xx}   converges normally,  then the series in
  the r.h.s. above also does. It defines an analytic in $(v, \bar v)$ 
   function.
Note that in view of \eqref{La_aver2}
\begin{equation}\label{AAver}
\lan F\ran_\Lambda \text{ is a function of $I_1,I_2\dots$ and the variables 
$\{s\cdot \vp, s\in \cA(\Lambda,m)\}$}.
\end {equation}

\section{Averaging  for equation \eqref{1.1}.}\label{s5}
Everywhere below $T$ is a  fixed positive number.

\subsection{Equation \eqref{1.1} in the $v$-variables, resonant monomials and combinations of phases.
}\label{s5.1}

Let us pass in eq. \eqref{1.1} with $u \in\cH^r, \ r>d/2$,  to the $v$-variables,
$v=\cF(u)\in h^r$:
 \begin{equation}\label{5.1}
 \begin{split}
dv_k+i\nu^{-1}\lambda_kv_kd\tau=P_k(v)\,d\tau +\,
b_kd\bb^k(\tau),\quad k\ge1;\quad v(0)=\cF(u_0)=:v_0. 
\end{split}
\end{equation}
Here
\begin{equation}\label{5.00}
P_k=P_k^1+P_k^0,
\end{equation}
where $P^1$ and $P^0$ are, correspondingly, the linear and nonlinear  hamiltonian 
parts of the perturbation.  So $P^1_k$ is the Fourier-image of $-f(- \Delta)$, i.e. 
$P^1_k=\,$diag$\,\{-\lla_k,k\ge1\}$, while the  operator $P^0$ is the mapping 
$u\mapsto -i\rho|u|^{2q_*}u$, written in the $v$-variables. I.e., 
 $$
 P^0(v)=-i\gi \cF (|u|^{2q_*}u)\,, \;\; u=\cF^{-1}(v).
$$
Every its component $P^0_k$ is a sum of monomials:
\begin{equation}\label{P^0}
P_k^0(v)=\sum _{ {p,q,l\in\Z_{+0}^\infty }}  C_k^{pq l}(2I)^p
v^q{\bar v}^l = \sum _{ {p,q,l\in\Z_{+0}^\infty }}  P_k^{0pql}(v), 
\qquad k\ge1, 
\end{equation}
where 
$C_k^{pql}=0$  unless $2|p|+|q|+|l|=2q_*+1$ and $|q|=|l|+1$.
It is straightforward that  $P_k^0(I,\vp)$ (see \eqref{phi}) 
 is a function of $\vp=(\vp_j, j\ge1)$ of order $2q_*+1$, and that 
 the mapping $P^0$ is analytic of polynomial growth:

\begin{lemma}\label{l.P^0}
The nonlinearity $P^0$ defines a real-analytic transformation of 
$ h^r$  if  $r>\tfrac{d}{2}$.
 The mapping
 $P^0(v)$ and its differential $dP^0(v)$ both have  polynomial growth in $|v|_{h^r}$.
\end{lemma}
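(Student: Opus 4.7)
The plan is to pull everything back to the configuration space via the Fourier isomorphism and exploit the Banach algebra property of $\mathcal{H}^r$ for $r>d/2$. Recall that $\cF:\cH^r\to h^r$ is an isometry by the definition of $|\cdot|_{h^r}$, and that $P^0(v)=-i\rho\,\cF\bigl(|u|^{2q_*}u\bigr)$ with $u=\cF^{-1}(v)$. So it suffices to show that the map $\Phi:\cH^r\to\cH^r$, $u\mapsto u^{q_*+1}\bar u^{q_*}$, is real-analytic with polynomial bounds on $\Phi$ and $d\Phi$.

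First I would invoke the classical Sobolev multiplication inequality: for $r>d/2$, $\cH^r=H^r(\T^d,\C)$ is a Banach algebra, so there is a constant $C_r$ with $\|fg\|_r\le C_r\|f\|_r\|g\|_r$. Iterating this bound $2q_*$ times gives
\[
\bigl\|u^{q_*+1}\bar u^{q_*}\bigr\|_r \;\le\; C_r^{2q_*}\|u\|_r^{\,2q_*+1},
\]
which yields $|P^0(v)|_{h^r}\le C|v|_{h^r}^{2q_*+1}$. Real-analyticity then comes for free: $\Phi$ is the restriction to the diagonal $\{(u,\bar u)\}$ of the $(2q_*+1)$-linear map $(u_1,\dots,u_{q_*+1},w_1,\dots,w_{q_*})\mapsto u_1\cdots u_{q_*+1}w_1\cdots w_{q_*}$ from $(\cH^r)^{2q_*+1}$ to $\cH^r$, which is continuous by the same algebra estimate, hence a polynomial map and in particular real-analytic on $\cH^r$ viewed as a real Banach space. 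Conjugating by the isometry $\cF$ transfers real-analyticity to $P^0:h^r\to h^r$.

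For the differential, I would differentiate the polynomial expression directly:
\[
d\Phi(u)\cdot h \;=\; (q_*+1)\,u^{q_*}\bar u^{q_*}h + q_*\, u^{q_*+1}\bar u^{q_*-1}\bar h,
\]
and again by the Banach algebra estimate,
\[
\|d\Phi(u)\cdot h\|_r \;\le\; C'_r\,\|u\|_r^{\,2q_*}\,\|h\|_r.
\]
Transferring through $\cF$ gives $\|dP^0(v)\|_{h^r\to h^r}\le C|v|_{h^r}^{2q_*}$, which is the claimed polynomial bound. This also verifies directly that $P^0$ is $C^\infty$, and together with the power-series representation in $(u,\bar u)$ confirms the real-analytic statement.

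There is no real obstacle here; the only point deserving care is that $|u|^{2q_*}u$ is a polynomial in $u$ \emph{and} $\bar u$ rather than in $u$ alone, so one obtains real-analyticity (as stated) rather than complex-analyticity, and the differential contains both a $h$ and a $\bar h$ term. The restriction $r>d/2$ is precisely the threshold needed for the Sobolev algebra property and is in any case consistent with the regularity in which the paper works.
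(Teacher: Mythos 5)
Your proof is correct and follows the standard route the paper itself relies on: the paper states this lemma without proof as ``straightforward,'' and the Sobolev algebra property of $H^r(\T^d)$ for $r>d/2$, transported through the isometry $\cF$, is exactly the mechanism the authors invoke later when they establish the Lipschitz bound \eqref{p4} for $R^0$ via $\big\||\widehat{v_t}|^{2q_*}\widehat{v_t}-|\widehat{w_t}|^{2q_*}\widehat{w_t}\big\|_p\le C_pR^{2q_*}\|\widehat{v_t}-\widehat{w_t}\|_p$. Your multilinear-map argument for real-analyticity and the explicit formula for $d\Phi(u)$ (with its $h$ and $\bar h$ terms) correctly fill in the details.
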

\smallskip

 We will refer to equations \eqref{5.1} as to the {\it
  $v$-equations.}

For any $s\in \Z^\infty_0$ consider the linear combination of phases
$$
\Phi^s:h^0\to S^1\ ,\quad v\mapsto s\cdot \vp(v)\,.
$$ 
We fix
 $$
 m=2q_*+2,
 $$
  and find the corresponding  set $\cA=\cA(\Lambda,m)$ of  resonances
  or order $m$   (see \eqref{resset}).  We order  vectors in  the set
  $\cA$,  that is   write it as   
$\cA=\{ s^{(1)},s^{(2)},\ldots\}$, 
  in such a way 
    that $\lc s^{(j_1)} \rc \le  \lc s^{(j_2)}\rc$ if $j_1\le j_2$,
    and for $N\ge1$ denote  
  \begin{equation}\label{NN}
 J(N)=\max\{j: \lc s^{(j)}\rc \le N\}.
\end{equation}
  For any $s^{(j)}\in \cA $ consider the corresponding resonant combination of phases $\vp(v)$,
$\Phi_j(v)= \Phi^{s_j}(v)$, 
and   introduce the Borel-measurable   mappings
\begin{equation*}
\begin{split}
&h^r\ni v\mapsto \Phi=(\Phi_1,\Phi_2,\dots)\in  S^1\times
S^1 \times \cdots=:\cT^\infty\ , \\
&h^r\ni v\mapsto (I\times \Phi)\in h^r_{I+}\times \cT^\infty\ . 
\end{split}
\end{equation*}
Note that the system $\Phi$ of resonant combinations is highly over-determined:
there are many linear relations between its components $\Phi_j$. 

Let us pass in eq. \eqref{5.1} from the complex variables $v_k$ to
the action-angle  variables
$I, \vp$:
 \begin{equation}\label{5.2}
dI_k(\tau)
=(v_k\cdot P_k)(v)\,d\tau + b_k^2\,d\tau  +b_k(v_k\cdot
d\bb^k)
\end{equation}
(here $\cdot$ indicates the real scalar product in  $\C\simeq\R^2$), 
 and 
 \begin{equation}\label{5.3}
 \begin{split}
d\vp_k(\tau)=\Big(-\nu^{-1}\lambda_k+
|v_k|^{-2} \big(iv_k\cdot P_k(v)\big)
\Big)\,d\tau+|v_k|^{-2}b_k(iv_k\cdot d\bb^k)\,.
\end{split}
\end{equation}
The equations for the actions are slow, while equations for the angles are fast since $d\vp_k \sim\nu^{-1}$. 
But 
the resonant combinations  $\Phi_j$ of  angles 
satisfy slow equations: 
\begin{equation}\label{5.ris}
 \begin{split}
d\Phi_j(\tau)=\sum_{k\ge 1}s^{(j)}_k\Bigl(|v_k|^{-2}(iv_k\cdot
P_k)\,d\tau+ |v_k|^{-2}b_k(iv_k\cdot d\bb^k)\Bigr), \quad  j\ge1. 
\end{split}
\end{equation}

Repeating for equations \eqref{5.1} and \eqref{5.2}
the argument from Section~7 in \cite{KP08} (also see Section~6.2 in \cite{K10}), we get low bounds for the norms of  the components
$v_k(\tau)$ of $v(\tau)$:

\begin{lemma}\label{l5.1} Let $v^\nu(\tau)$ be a solution of \eqref{5.1} and $I^\nu(\tau)=I(v^\nu(\tau))$. Then for 
 any $k\ge1$ the following convergence holds uniformly in $\nu>0$:
\begin{equation}\label{5.8}
\int_0^T\IP\{I^\nu_k(\tau)\le\delta\}\,d\tau\to0\qquad \text{as}\;\; \delta\to0
\end{equation}
(the rate of the  convergence  depends on $k$). 
\end{lemma}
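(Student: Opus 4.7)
The proof follows the scheme of Section~7 in \cite{KP08}: the uniform-in-$\nu$ bound is obtained by applying It\^o's formula to a small positive power of $I_k$, exploiting the fact that $b_k\neq 0$ produces in \eqref{5.2} a repulsion from the origin analogous to that of a two-dimensional squared Bessel process. The plan is first to reduce \eqref{5.8} to a negative-moment estimate
\[
\E \int_0^T I^\nu_k(\tau)^{\alpha-1}\, d\tau \le C_k, \qquad \alpha\in(\tfrac12,1),
\]
uniform in $\nu$. Once this is in hand, the Chebyshev-type inequality $\mathbf{1}_{\{I_k\le\delta\}}\le \delta^{1-\alpha} I_k^{\alpha-1}$ yields
\[
\int_0^T \IP\{I^\nu_k(\tau)\le\delta\}\,d\tau \le C_k\,\delta^{1-\alpha}\to 0
\]
uniformly in $\nu$, as required.

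To obtain the negative-moment bound, I would apply It\^o's formula to $f(x)=x^\alpha$. The quadratic variation is $d\langle I_k\rangle_\tau=2b_k^2 I_k\,d\tau$, so the constant drift $b_k^2$ from \eqref{5.2} combines with the second-order It\^o correction to produce a \emph{positive} singular drift of size $\alpha^2 b_k^2 I_k^{\alpha-1}$:
\[
dI_k^\alpha = \alpha I_k^{\alpha-1}(v_k\cdot P_k)\,d\tau + \alpha^2 b_k^2\, I_k^{\alpha-1}\,d\tau + \alpha b_k I_k^{\alpha-1}(v_k\cdot d\bb^k).
\]
A standard regularisation---work on the event $\{I_k\ge\eps\}\cap\{|v|_{h^r}\le M\}$ via stopping times, take expectations, then send $M\to\infty$ and $\eps\to 0$ using the a priori bounds \eqref{1.3}--\eqref{2.05}---gives
\[
\alpha^2 b_k^2\, \E\int_0^T I_k^{\alpha-1}\,d\tau = \E I_k^\nu(T)^\alpha - I_k^\nu(0)^\alpha - \E\int_0^T \alpha I_k^{\alpha-1}(v_k\cdot P_k)(v^\nu)\,d\tau.
\]

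The first two right-hand terms are uniformly bounded by \eqref{2.05}. For the last one, split $P_k=P_k^1+P_k^0$. The linear part contributes $-2\alpha\gamma_k I_k^\alpha$, manifestly uniformly integrable. For the nonlinear part, use $|v_k\cdot P_k^0|\le \sqrt{2I_k}\,|P_k^0(v)|$ to get
\[
|I_k^{\alpha-1}(v_k\cdot P_k^0)| \le \sqrt{2}\,I_k^{\alpha-1/2}|P_k^0(v)| \le C\,|v|_{h^r}^{2\alpha+2q_*},
\]
since $\alpha-1/2>0$ and $P_k^0$ has polynomial growth by Lemma~\ref{l.P^0}. The a priori estimate \eqref{2.5} bounds $\E\int_0^T |v^\nu(\tau)|_{h^r}^{2\alpha+2q_*}\,d\tau$ uniformly in $\nu$, completing the argument.

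The main technical obstacle lies in the regularisation step: the integrand $I_k^{\alpha-1}$ is singular at $I_k=0$, so the It\^o identity and the martingale property of the stochastic integral must be justified through the stopping times $\inf\{\tau: I_k(\tau)\le\eps\}\wedge\inf\{\tau:|v(\tau)|_{h^r}\ge M\}$ and careful passage to the limit. This is carried out in detail in \cite{KP08, K10} and carries over here unchanged, because all the required a priori estimates \eqref{1.3}, \eqref{2.5}, \eqref{2.05} are uniform in $\nu$.
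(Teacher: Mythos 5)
Your argument is correct and is essentially the one the paper relies on: the paper gives no proof of Lemma~\ref{l5.1} beyond the reference to Section~7 of \cite{KP08} (and Section~6.2 of \cite{K10}), and your reconstruction --- It\^o's formula applied to $I_k^\alpha$ with $\alpha\in(\tfrac12,1)$, exploiting $b_k\neq 0$ and the uniform-in-$\nu$ a priori bounds, followed by the Chebyshev step --- is precisely that scheme. The structural point that makes the estimate uniform in $\nu$, namely that the fast term $-i\nu^{-1}\lambda_k v_k$ is orthogonal to $v_k$ and so never enters the $I_k$-equation \eqref{5.2}, is correctly (if implicitly) used in your computation.
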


Now we define and study corresponding resonant monomials of $v$. 
For any $s\in\Z_0^\infty$, vectors $s^+, s^-\in\ZZ$  such that $s=s^+-s^-$  and 
$
\supp s=\supp s^+\cup \supp s^-$,  $\supp s^+\cap \supp s^-=\emptyset$
are uniquely 
defined.
Denote by $V^s$ the monomial 
\begin{equation}\label{hren}
V^s(v)= v^{s^+}  \bar v^{s^-}=
\prod _l v_l^{s_l^+}   \prod_l \bar v_l^{s_l^-}.
\end{equation}
This is a real-analytic function on every space $h^l$, and $\vp \big(V^s(v)\big)=\Phi^s(v)$. 
{\it Resonant monomials} are the functions \footnote{It may be better to call $V_j(v)$ a  minimal resonant monomial since for any $l\in\ZZ$  the monomial $I^lV_j(v)$ also is resonant and corresponds to the same resonance.}

$$
V_j(v)= V^{s^{(j)}}(v),  \qquad j=1,2,\dots. 
$$
Clearly they satisfy
\begin{equation}\label{ugly}
I(V_j(v))=(2I)^{\frac12 |s^{(j) }| }:=\prod_l (2I_l)^{ \frac12 |s_l^{(j)}|}\,,\qquad 
\vp (V_j(v))=\Phi_j(v).
\end{equation}
Now  consider the mapping
\begin{equation}\label{reson}
\begin{split}
& V: h^l\ni v\mapsto (V_1,V_2,\dots)\in \C^\infty\ ,\\
\end{split}
\end{equation}
where $\C^\infty$ is given the Tikhonov topology. It is continuous for any $l$. For $N\ge1$ denote
$$
V^{(N)}(v)=\big(V_1,\dots,V_{J}(v)\big) \in \C^{J},
$$
where $J=J(N)$, see \eqref{NN}. 

For any $s\in\Z_0^\infty$, applying the  Ito formula to the process
 $V^s(v(\tau))$,  we get that 
 \begin{equation} \label{7.1}
 \begin{split}
 d\,V^s= V^s\Big(-i\nu^{-1}(\Lambda\cdot s) d\tau+ &\sum_{j\in\supp s^+} s_j^+
             v_j^{-1}(P_j(v)\,d\tau+b_j\,d\bb_j)\\
             +&\sum_{j\in\supp s^-} s_j^-\,
             {\bar  v_j}^{-1}(\bar P_j(v)\,d\tau+b_j\,d\bar\bb_j)\Big).
 \end{split}
 \end{equation}
 If $s=\tilde s\in\Z_0^\infty$ is perpendicular to $\Lambda$, 
  then the first term in the r.h.s. vanishes. So  $V^{\tilde s}(\tau)$ is a slow process,
  $dV^{\tilde s}\sim1$. In particular, the processes 
 $dV_j, j\ge1$, are slow. 
 
 Estimates \eqref{2.5} and equation \eqref{7.1} readily imply 
 
 \begin{lemma}\label{l7.1}
 For any $j\ge1$ we have 
 $\ 
 \E\big|V_j(v(\cdot))\big|_{C^{1/3}[0,T]} \le C_j(T)<\infty,
 $ 
 uniformly in $0<\nu\le1$. 
 \end{lemma}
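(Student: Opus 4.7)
The plan is to invoke the Itô formula \eqref{7.1} for $V_j=V^{s^{(j)}}$, observe that the fast oscillation disappears by resonance, and then combine the a-priori moment bounds of Section~\ref{s1.1} with the Burkholder--Davis--Gundy and Kolmogorov continuity theorems.

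First I would note that, since $s^{(j)}\in\cA$, we have $\Lambda\cdot s^{(j)}=0$, so the singular prefactor $\nu^{-1}$ in \eqref{7.1} drops out and $V_j$ satisfies a $\nu$-free SDE of the form
\begin{equation*}
dV_j(\tau)=A_j(v(\tau))\,d\tau + \sum_{k\in\supp s^{(j)}} \big(B_{jk}^+(v)\,d\bb^k+B_{jk}^-(v)\,d\bar\bb^k\big),
\end{equation*}
with finitely many nonzero $B_{jk}^\pm$. Next I would verify that although the bracket in \eqref{7.1} formally contains $v_k^{-1}$ and $\bar v_k^{-1}$, multiplication by $V_j$ cancels these: indeed, for $k\in\supp (s^{(j)})^+$ one has $s_k^{(j),+}\ge1$, hence $V_j v_k^{-1}=v_k^{s_k^{(j),+}-1}\prod_{l\ne k}v_l^{s_l^{(j),+}}\bar v^{(s^{(j)})^-}$ is a polynomial in $(v,\bar v)$, and similarly for $k\in\supp(s^{(j)})^-$. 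Combined with Lemma~\ref{l.P^0}, this shows that $A_j$ and $B_{jk}^\pm$ are real-analytic functions of $v\in h^r$ of at most polynomial growth in $|v|_{h^r}$ (for $r>d/2$).

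Second, I would use the a-priori bounds \eqref{2.5}, \eqref{2.05} (valid uniformly in $\nu$) to obtain, for every $n$,
\begin{equation*}
\sup_{0<\nu\le1}\E\sup_{\tau\in[0,T]}(1+|v(\tau)|_{h^r})^{n}<\infty,
\end{equation*}
which, together with the polynomial growth of $A_j,B_{jk}^\pm$, gives for every $p\ge 2$ a constant $M_{j,p}(T)<\infty$ with $\sup_{\tau\le T}\E(|A_j(v(\tau))|^p+\sum_k|B_{jk}^\pm(v(\tau))|^p)\le M_{j,p}(T)$, uniformly in $\nu\in(0,1]$. Then for $0\le\tau_1<\tau_2\le T$, Hölder's inequality applied to the drift and BDG applied to each of the finitely many Itô integrals yield
\begin{equation*}
\E|V_j(\tau_2)-V_j(\tau_1)|^{p}\le C_{j,p}(T)\,|\tau_2-\tau_1|^{p/2}.
\end{equation*}

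Finally, I would apply the Kolmogorov continuity theorem: writing the bound as $|\tau_2-\tau_1|^{1+\alpha}$ with $\alpha=p/2-1$, we obtain a continuous modification of $V_j(\cdot)$ whose $C^{\beta}[0,T]$-norm has finite $p$-th moment for any $\beta<\alpha/p=1/2-1/p$. Choosing $p>6$ covers $\beta=1/3$ and gives $\E\|V_j\|_{C^{1/3}[0,T]}^{p}\le C'_{j,p}(T)$; the claim then follows by Jensen's inequality, uniformly in $\nu\in(0,1]$. The one nontrivial step is really the cancellation of the formal $v_k^{-1}$ singularities noted above; everything else is standard stochastic-calculus bookkeeping using the estimates already in hand.
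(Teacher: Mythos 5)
Your proposal is correct and follows exactly the route the paper intends: the paper gives no details, simply asserting that estimates \eqref{2.5} and the It\^o formula \eqref{7.1} ``readily imply'' the lemma, and your argument (vanishing of the $\nu^{-1}(\Lambda\cdot s^{(j)})$ term by resonance, cancellation of the formal $v_k^{-1}$ factors, moment bounds plus Burkholder--Davis--Gundy and Kolmogorov's continuity criterion) is precisely the standard bookkeeping that fills in that implication.
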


Let us provide the space $C([0,T];\C^\infty)$ with the Tikhonov topology, identifying it with
the space $C([0,T];\C)^\infty$. This topology is metrisable by the Tikhonov distance.  From now on we 
fix an even integer $r$, 
\begin{equation} \label{rr}
r\ge \frac d2+1\,,
\end{equation} 
 and
abbreviate 
$$
h^r=h,\quad h^r_I=h_I,\quad C([0,T], h_{I+})\times 
C([0,T],\C^\infty)=:  \cH_{I,V}.
$$
We provide $\cH_{I,V}$ with Tikhonov's distance, the corresponding Borel $\sigma$-algebra and the natural
 filtration of the sigma-algebras
$\{\cF_t, 0\le t\le T\}$. 

Let us consider a solution $u^\nu(\tau)$ of eq. \eqref{1.1}, satisfying
$
u(0)=u_0,
$
denote $v^\nu(\tau)=\cF(u^\nu(\tau))$ and abbreviate
$$
I(v^\nu(\tau))=I^\nu(\tau),\quad V(v^\nu(\tau))= V^\nu(\tau)\in\C^\infty. 
$$

 \begin{lemma}\label{l7.2} 1) 
 Assume that $u_0\in\cH^r$. Then the 
  set of laws $\cD(I^\nu(\cdot), V^\nu(\cdot)), \  0<\nu\le1$, is tight in $\cH_{I,V}$. 
  
  2) Any limiting measure $\cQ$ for the set of laws in 1) satisfies
   \begin{equation} \label{apriori}
 \begin{split}
 \E^\cQ |I |^n_{C([0,T], h^r_I)} \le C_n &\quad \forall\, n\in\N,\qquad
 \E^\cQ \int_0^T|I(\tau)|_{h_I^{r+1}}d\tau \le C',\\
 & \E^\cQ e^{\e_0 |I(\tau)|_{h^0_I}} \le C^{''} \quad \forall\,\tau\in[0,T]. 
 \end{split}
 \end{equation}

  \end{lemma}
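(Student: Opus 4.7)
The plan is to prove tightness of the $V$- and $I$-components separately in the appropriate product/Tikhonov topologies, combine them via joint tightness, and then pass to the limit in the uniform bounds \eqref{1.3}, \eqref{2.5}, \eqref{2.05} using Fatou-type arguments.

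For the $V$-component, Lemma~\ref{l7.1} already provides $\E|V_j(v^\nu(\cdot))|_{C^{1/3}[0,T]}\le C_j(T)$ uniformly in $\nu\in(0,1]$. By Markov's inequality and the Arzel\`a--Ascoli theorem, the family $\{\cD(V_j^\nu)\}_\nu$ is tight in $C([0,T],\C)$ for each $j$. Since the Tikhonov topology on $\C^\infty$ identifies $C([0,T],\C^\infty)$ with the countable product $C([0,T],\C)^{\N}$, coordinate-wise tightness yields tightness of $\cD(V^\nu(\cdot))$ in $C([0,T],\C^\infty)$.

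For the $I$-component, \eqref{2.5} with $2m=r$ (even by \eqref{rr}) gives $\E\sup_{\tau\le T}|I^\nu(\tau)|_{h^r_I}^n = \E\sup_\tau\|u^\nu(\tau)\|_r^{2n}\le C_n$, controlling each coordinate uniformly. For equicontinuity of the $k$-th coordinate we use \eqref{5.2}: Lemma~\ref{l.P^0} combined with \eqref{2.5} bounds the drift $v_k\cdot P_k(v)+b_k^2$ in every $L^p(\Omega)$ uniformly in $\nu$, and Burkholder--Davis--Gundy applied to the martingale $\int b_k(v_k\cdot d\bb^k)$ bounds its increments, jointly yielding $\E|I^\nu_k(\tau_1)-I^\nu_k(\tau_2)|^{2p}\le C_k(p)|\tau_1-\tau_2|^{p}$ uniformly in $\nu$. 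Kolmogorov's criterion makes $\cD(I^\nu_k)$ tight in $C([0,T],\R)$ for every $k$, and coordinate-wise tightness again gives tightness of $\cD(I^\nu)$ in $C([0,T],h_{I+})$ in the Tikhonov distance. Joint tightness in $\cH_{I,V}$ follows.

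For part~2), take $\cQ^{\nu_j}\strela\cQ$. For each finite $N$ the truncated functional $I\mapsto\sup_{\tau\le T} 2\sum_{k\le N}(\lambda_k\vee1)^r I_k(\tau)$ is continuous in the Tikhonov topology, and $\sup_\tau|I(\tau)|^n_{h^r_I}$ is the monotone limit (raised to the $n$-th power) of such truncations, hence lower semi-continuous. Fatou's lemma for weak convergence gives
\[
\E^\cQ\sup_{0\le\tau\le T}|I(\tau)|^n_{h^r_I}\le\liminf_{j\to\infty}\E^{\cQ^{\nu_j}}\sup_\tau|I^{\nu_j}(\tau)|^n_{h^r_I}\le C_n.
\]
Identical lower semi-continuity/Fatou arguments applied to $I\mapsto\int_0^T|I(\tau)|_{h^{r+1}_I}\,d\tau$ (again a monotone limit of continuous truncations) and to $I\mapsto e^{\eps_0|I(\tau)|_{h^0_I}}$ (with $|I(\tau)|_{h^0_I}=2\sum_k I_k(\tau)$ a monotone limit of continuous functionals) yield the remaining two bounds from the second part of \eqref{2.5} and from \eqref{1.3}, respectively.

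The main obstacle is the uniform-in-$\nu$ H\"older estimate for each coordinate $I_k^\nu$: one must verify that the drift contribution coming from the fast-oscillating nonlinearity $P^0$ in \eqref{5.2} admits a bound independent of $\nu$. This reduces to the polynomial growth from Lemma~\ref{l.P^0} together with the $\nu$-independent moment bounds \eqref{2.5} on $\|u^\nu\|_r$, which crucially do not deteriorate as $\nu\to 0$ because the fast rotation $i\nu^{-1}\lambda_k v_k$ drops out of the $I_k$-equation.
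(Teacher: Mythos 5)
Your proposal follows essentially the same route as the paper: tightness of each $V_j^\nu$ from Lemma~\ref{l7.1} via Arzel\`a--Ascoli, tightness of $I^\nu$ from the uniform bounds \eqref{2.5} together with the fact that the actions obey the slow equations \eqref{5.2}, assembly of the coordinate-wise limits into a single measure, and then \eqref{apriori} by lower semicontinuity and Fatou exactly as in part 2) of the paper's proof (cf.\ Lemma~1.2.17 of \cite{KS}). The one point where your argument is thinner than it should be is the step ``coordinate-wise tightness again gives tightness of $\cD(I^\nu)$ in $C([0,T],h_{I+})$'': this is only automatic if the first factor of $\cH_{I,V}$ is read coordinate-wise, whereas the paper's citation of \cite{VF} indicates that tightness in the uniform topology of the norm $|\cdot|_{h^r_I}$ is meant. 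For that you need, in addition to the coordinate-wise H\"older/Kolmogorov estimates you derive, a uniform-in-$\nu$ control of the tails $\sup_\tau\sum_{k>N}(\lambda_k\vee1)^r I^\nu_k(\tau)$; this is supplied by the dissipative part of \eqref{2.5}, namely $\E\int_0^T|I^\nu(\tau)|_{h^{r+1}_I}\,d\tau\le C$ (equivalently $\E\int_0^T\|u^\nu\|^2_{r+1}\,ds\le C$), combined with the slow $I$-equations --- this is precisely what the Vishik--Fursikov compactness argument packages. With that addition your proof is complete; everything else, including the Fatou/truncation treatment of \eqref{apriori} and the observation that the fast rotation drops out of the $I_k$-equation so all bounds are $\nu$-independent, matches the paper's argument.
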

  \begin{proof} 1) 
  Due to Lemma \ref{l7.1} and the Arzel\`a Theorem, the laws of processes $V_j(v^\nu(\cdot))$,
  $0<\nu\le1$, are tight in $C([0,T],\C)$, for any $j$. Due to estimates \eqref{2.5} with $n=1$
   and since the 
  actions $I_k^\nu$ satisfy slow equations \eqref{5.2}, the laws of processes $I^\nu(\tau)$ are 
  tight in $C([0,T], h_{I+})$ (e.g. see in \cite{VF}).
  Therefore, for every $N$, 
   any sequence $\nu_\ell\to0$ contains a subsequence 
  such that the laws $\cD\big(I^\nu(\cdot), V^{(N)}(v^\nu(\cdot))\big)$ converges along it to
  a limit.  Applying the diagonal process we get another subsequence $\nu'_\ell$ such 
  that the convergence holds for each $N$. The corresponding limit is a measure $m^N$ on the 
  space 
  $
  C([0,T], h_{I+})\times C([0,T],\C)^{J(N)}.
  $
  Different measures $m^N$ agree, so by Kolmogorov's theorem they correspond to some measure $m$ on the 
  sigma-algebra, generated by cylindric subsets of the space
  $
  C([0,T], h_{I+})\times C([0,T],\C)^\infty,
  $
  which coincides with the Borel sigma-algebra for that space. 
  It is not hard to check that 
  $
  \cD(I^\nu(\cdot), V^\nu(\cdot))
  \strela m
  $
  as $\nu=\nu'_\ell\to0$. 
  This proves the first  assertion. 
  
  2) Estimates \eqref{apriori} follow from \eqref{1.3}, \eqref{2.5}, the weak convergence to $\cQ$
  and the Fatou lemma; cf. Lemma~1.2.17 in \cite{KS}. 
  \end{proof}

\subsection{Averaged equations, effective equation,  interaction representation }
\label{s2.3}
  Fix  $u_0\in \cH^r$ and consider  any limiting measure $\cQ^0$ 
  for the laws 
\begin{equation} \label{5.88}
\cD(I^{\nu_\ell}(\cdot), V^{\nu_\ell}(\cdot))\strela \cQ^0
\as \nu_\ell\to0,
\end{equation}
existing by Lemma \ref{l7.2}. 
Our goal is to show that the limit $\cQ^0$ does not depend on the
sequence $\nu_\ell \to0$
and develop tools for its study. We begin with writing down averaged
equations for the slow components $I$ and $\Phi$ of the process $v(\tau)$,
using the  rules of the stochastic calculus (see \cite{Khas68, FW03}), and 
formally replacing there the  usual averaging in $\vp$ by the resonant 
averaging $\lan\cdot\ran_\Lambda$.
Let us first consider the $I$-equations \eqref{5.2}. The drift in the $k$-th equation is
$$
b_k^2+v_k\cdot P_k=b_k^2+v_k\cdot P_k^1+ v_k\cdot P_k^0,
$$
where $v_k\cdot P_k^1=- 2\lla_k I_k$ and 
$
v_k\cdot P_k^0(v) =  \sum _{ {p,q,l\in\Z_{+0}^\infty }} v_k\cdot  P_k^{0pql}(v),  
$
see \eqref{P^0}. By Section \ref{s2.5} the sum converges normally, so the resonant averaging of the drift
is well defined. The dispersion matrix for eq.~\eqref{5.2} with respect to the real Wiener processes 
$(\beta^1,\beta^{-1}, \beta^2,\dots)$ is diag$\,\{b_k(\Re v_k \,\Im v_k), k\ge1\}$ (it is formed by $1\times2$-blocks). 
The diffusion matrix equals the 
dispersion matrix times its conjugated  and equals diag$\,\{b_k^2|v_k|^2, k\ge1\}$. It is independent 
from the angles, so the averaging does not change it. 
 For its  square-root we take diag$\,\{b_k \sqrt{2I_k} \}$, and accordingly 
write the $\Lambda$-averaged $I$-equations as 
\begin{equation}\label{5.21}
dI_k(\tau) =
\langle v_k\cdot
P_k\rangle_{\Lambda}(I,V)\,d\tau + 
b_k^2\,d\tau  +{b_k}\sqrt{2I_k}\, d\beta^k(\tau), \quad  k\ge 1
\end{equation}
(see \eqref{AAver}). 

Now consider equations \eqref{5.ris} for  resonant combinations $\Phi_j$ of the angles. The corresponding 
dispersion  matrix  $D=(D_{jk})$ is formed by $1\times2$-blocks 
$$
D_{jk} = -s_k^{(j)} b_k(2I_k)^{-1} (\Im v_k  \, -\Re v_k).
$$
Again the diffusion matrix does not depend on the angles and equals $M=(M_{j_1 j_2})$,
$M_{j_1 j_2} =\sum_k s_k^{(j_1)}s_k^{(j_2)} b_k^2 (2I_k)^{-1}$. The matrix $D^{new}$ with the entries 
$D^{new}_{jk}=s_k^{(j)}b_k(2I_k)^{-1/2}$ satisfies $|D^{new}|^2=M$, and we write the averaged equations for 
$\Phi_j$'s as 
\begin{equation}\label{5.22}
d{\Phi_j}(\tau)= 
 \sum_{k \ge
  1}s^{(j)}_k\Bigl(\frac{\langle i v_k\cdot 
  P_k\rangle_{\Lambda}(I,V)}{ 2I_k}\,d\tau +\frac{{b_k}}{\sqrt{2I_k}}\,
d\beta^{-k}(\tau) \Bigr)\ , \quad  j\ge 1
\end{equation}
(we  use here Wiener processes, independent from those in eq.~\eqref{5.21}
since the differentials  $v_k\cdot d\bb^k$ and $iv_k\cdot d\bb^k$, corresponding to the noises in equations 
\eqref{5.2} and \eqref{5.3}, are independent).

Equations \eqref{5.21}, \eqref{5.22}  is a system of stochastic differential equations for the process $(I,V)(\tau)$
since each $\Phi_j$ is a function of $I$ and $V_j$. It is over-determined as there are linear relations
between various $\Phi_j$'s. Besides,  eq. \eqref{5.21} has a weak singularity at the locus 
$\ 
\Game (h)= \cup_k \{v\in h:  v_k=0\},
$
while  eq.~\eqref{5.22} has there a strong singularity. 
\medskip

 Consider  a component  $\lan v_k\cdot P_k^0\ran_\Lambda(v)$
 of the averaged drift in the equation for $I_k$. 
 It may be written as
\begin{equation}\label{xa}
\lan v_k\cdot P_k^0\ran_\Lambda(v)=  \int_0^{2\pi} v_k \cdot \Big ( e^{-i t\lambda_k}
  P_k^{0}(\Psi_{t\Lambda}  (v) )
  \Big)\,\dbar t  = v_k\cdot 
 R^{0}_k(v)\ ,
\end{equation}
where we set 
$\ 
 R^{0}_k(v)=  \int_0^{2\pi} e^{-i t\lambda_k}
  P_k^{0}(\Psi_{ t \Lambda }  (v) )\,\dbar t \ .
$ 
That is, 
\begin{equation}\label{xi}
 R^{0}(v)=  
 \int_0^{2\pi}  \Psi_{ - t\Lambda} 
  P^{0}(\Psi_{ t\Lambda}  v )\,\dbar t\ .
\end{equation}
Repeating the derivation of \eqref{La_aver2} and using that $|q|+|l|\le m-1$, 
 we see that 
\begin{equation} \label{ura}
R^0_k(v) = \sum_{\substack{p,q,l\in\ZZ \\ q-l\in\cA(\Lambda,m)+e^k\\ |q|+|l|+1\le m}} 
C_k^{pql}  (2I)^pv^q\bar v^l. 
\end{equation}

The relation \eqref{ura} interprets $R^0(v)$ as a sum of resonant
terms of the mapping $P^0(v)$, very much in the spirit of the WT,
while \eqref{xi} interpret it a result of the resonant averaging  of
$P^0$.

The vector field $R^0$ defines locally-Lipschitz operators in the spaces $h^p$, $p>d/2$:
 \begin{equation}\label{p4}
|R^0(v)-R^0(w)|_{h^p}\le C_p\big( |v|_{h^p}\vee |w|_{h^p}\big)^{2q_*}|v-w|_{h^p}.
  \end{equation}
  Indeed, in view of \eqref{xi}, for any $v,w$ such that $|v|_{h^p},
  |w|_{h^p}\le R$ we have
    \begin{equation}\label{p44}
    \begin{split}
|  (
 R^{0}(v) &- R^0(w))|_{h^p}  \le 
 \int_0^{2\pi} \Big
 | \Psi_{ - t\Lambda} \big(
  P^{0}(\Psi_{ t\Lambda}  v ) - P^0(\Psi_{ t\Lambda}  w )\big) \Big|_{h^p}
  \,\dbar t \ .
  \end{split}
   \end{equation}
   Since $P^0(v)=-i\rho\cF(|\hat v|^{2q_*}\hat v)$, where $\hat
   v=\cF^{-1}v$, then  denoting 
  $  \Psi_{ t\Lambda}  v =v_t$, defining $w_t$ similarly
   and using that the operators $\Psi_\theta$ define isometries of
   $h^p$, we bound the r.h.s.  
  of \eqref{p44}   by
   \begin{equation*}
   \begin{split}
    \int_0^{2\pi} \big| P^0(v_t) - P^0(w_t)\big|_{h^p}\,\dbar t =
    \rho  \int_0^{2\pi}  \big\||  \widehat {v_t}|^{2q_*} \widehat{ v_t} - 
    | \widehat{ w_t}|^{2q_*} \widehat{ w_t}   \big\|_p \,\dbar t
   \\
  \le \rho C_p  R^{2q_*}   \int_0^{2\pi}  \|\widehat{v_t} - \widehat{w_t}\|_p \,\dbar t
   \le \rho C_p R^{2q_*}|v-w|_{h^p}.
  \end{split}
  \end{equation*}
  \medskip

Finally  we set 
$$
R=R^0+R^1,\quad \text{where}\quad
 R^1_k(v)=P^1_k(v)=-\lla_k v_k. 
$$
Since 
$\ 
\lan v_k\cdot P^1_k\ran_\Lambda=\lan-\sum 2\lla_kI_k\ran_\Lambda=v_k\cdot P^1_k
=v_k\cdot R^1_k,
$
then in view of \eqref{xa} we have 
\begin{equation}\label{R_k}
\lan v_k\cdot P_k\ran_\Lambda(v)=v_k\cdot R_k(v). 
\end{equation}
For further usage we note that by the same argument, 
$\lan iv_k\cdot P_k^0\ran_\Lambda= iv_k\cdot R_k^0$ and 
$\lan iv_k\cdot P_k^1\ran_\Lambda=0= iv_k\cdot R_k^1$. So also 
\begin{equation}\label{R_kk}
\lan i v_k\cdot P_k\ran_\Lambda(v)=iv_k\cdot R_k(v). 
\end{equation}

Motivated by the averaging theory for equations without  resonances  in \cite{K10, K12}, we now
consider the following {\it  effective equation} for the slow dynamics in
 eq.~\eqref{5.2}:
\begin{equation}\label{5.eff}
d v_k= R_k(v) d \tau +{b_k}d \bb^k\ ,
\qquad k\ge1\,.
\end{equation}
In difference  with the averaged equations \eqref{5.21} and \eqref{5.22}, the effective equation is  regular,
i.e. it does not have  singularities at the locus 
 $\Game(h)$.  Since  $R^0: h\to h$ is locally Lipschitz,  then strong solutions for \eqref{5.eff} exist locally in
  time and are unique:
 
  \begin{lemma}\label{l.uniq} 
  A strong solution of eq. \eqref{5.eff} with a specified initial data $v(0)=v_0\in h$ is unique, a.s.
 \end{lemma}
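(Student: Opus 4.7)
The plan is a standard pathwise uniqueness argument. Since two strong solutions $v(\tau)$ and $w(\tau)$ of \eqref{5.eff} starting from the same $v_0$ are driven by the same Wiener processes $\bb^k$, the stochastic integrals cancel upon subtraction, reducing the problem to a deterministic Gronwall estimate powered by the local Lipschitz bound \eqref{p4} on $R^0$ and by the dissipativity of the linear part $R^1$.

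Concretely, set $\delta(\tau)=v(\tau)-w(\tau)$, so that $\delta(0)=0$ and, pathwise,
$$
\frac{d\delta}{d\tau}=R^0(v)-R^0(w)+R^1(\delta).
$$
For $N>|v_0|_h$ I would introduce the stopping time
$$
\tau_N=\inf\{\tau\ge0:|v(\tau)|_h\vee|w(\tau)|_h\ge N\}\wedge T,
$$
which satisfies $\tau_N>0$ a.s. and $\tau_N\uparrow T$ as $N\to\infty$ by continuity in $h$ of the trajectories of strong solutions. On $[0,\tau_N]$ I would differentiate $|\delta|_h^2$ and observe that
$$
\langle\delta,R^1(\delta)\rangle_h=-\sum_k(\lambda_k\vee1)^r\gamma_k|\delta_k|^2\le0,
$$
so the $R^1$-contribution can be discarded. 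What remains is controlled by Cauchy--Schwarz together with \eqref{p4}:
$$
\frac{d}{d\tau}|\delta|_h^2\le2|\langle\delta,R^0(v)-R^0(w)\rangle_h|\le2C_rN^{2q_*}|\delta|_h^2.
$$
Gronwall then gives $|\delta(\tau\wedge\tau_N)|_h^2\le|\delta(0)|_h^2\exp(2C_rN^{2q_*}T)=0$ a.s.; hence $v\equiv w$ on $[0,\tau_N]$, and letting $N\to\infty$ yields $v\equiv w$ on $[0,T]$.

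The only delicate point is that the unbounded operator $R^1=-f(-\Delta)$ does not act within $h=h^r$, so the formal computation of $\frac{d}{d\tau}|\delta|_h^2$ requires justification. The cleanest remedy is to work with the mild (Duhamel) form of \eqref{5.eff}, in which the strongly continuous semigroup $e^{-\tau f(-\Delta)}$ on $h$ absorbs $R^1$ and only the locally Lipschitz nonlinearity $R^0$ enters the variation-of-constants integral; the Gronwall step then goes through in exactly the same way, with no domain issue. The dropping of the dissipative term is the feature that keeps this proof one-dimensional in the estimate, and it is why no growth hypothesis on $R^0$ beyond \eqref{p4} is needed.
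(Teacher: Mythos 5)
Your proof is correct and follows exactly the route the paper intends: the lemma is stated there as an immediate consequence of the local Lipschitz estimate \eqref{p4} for $R^0$ on $h=h^r$, i.e.\ the standard pathwise subtraction--Gronwall argument with a stopping-time localization, which you have simply written out in full. Your remark on handling the unbounded dissipative part $R^1=-f(-\Delta)$ via the mild (Duhamel) formulation, where $e^{-\tau f(-\Delta)}$ is a contraction on $h$, is the right way to make the formal energy computation rigorous and introduces no gap.
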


The relevance  of the effective equation  for the study of 
the long-time dynamics in equations \eqref{1.1}=\eqref{5.1} is 
clear from the next lemma:
 
 \begin{lemma}\label{l.ef_eq}
 Let a continuous process $v(\tau)\in h$ be a weak solution of \eqref{5.eff} 
 such that all moments of the random variable $\max_{0\le\tau\le T} |v(\tau)|_h$
 are finite. 
 Then $I(v(\tau))$ is a 
 weak solution of \eqref{5.21}.
 Let stopping times $0\le\tau_1<\tau_2\le T$ and numbers 
 $\delta_*>0, N\in\N$ be such that
 \begin{equation}\label{stop}
 I_k(v(\tau)) \ge\delta_*\quad \text{for $\tau_1\le\tau\le\tau_2$ and $k\le N$.}
 \end{equation}
 Then the process 
 $\big(I(v(\tau)), \Phi_j(v(\tau)), j\le J(N)\big)$ is a weak solution of  the system of  averaged 
 equations \footnote{This system is heavily under-determined. }
  \eqref{5.21}, \eqref{5.22}${}_{j\le J}$.
 \end{lemma}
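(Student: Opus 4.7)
The plan is to apply Ito's formula to the slow observables $I_k(v(\tau))$ and $\Phi_j(v(\tau))$ along the solution $v(\tau)$ of \eqref{5.eff}, reinterpret the resulting drifts through \eqref{R_k}--\eqref{R_kk}, and match the stochastic integrals against the dispersions of \eqref{5.21}--\eqref{5.22} by means of the martingale representation theorem (cf.\ \cite{KaSh}, Ch.~3). Both claims then reduce to routine manipulations with Ito calculus in $\C$, the only non-trivial input being the standard ``weak solution'' construction of the driving Brownian motions, whose coefficients degenerate on the locus $\Game(h)$.

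For~(1), since $dv_k = R_k(v)\,d\tau + b_k\,d\bb^k$ with $d\bb^k\,d\bar\bb^k = 2\,d\tau$ and $(d\bb^k)^2 = 0$, Ito applied to $|v_k|^2$ yields
\[
dI_k = v_k\cdot R_k(v)\,d\tau + b_k^2\,d\tau + b_k\,(v_k\cdot d\bb^k).
\]
By \eqref{R_k} the drift equals $\lan v_k\cdot P_k\ran_\Lambda(v)$, as required in \eqref{5.21}. The continuous martingale $M^k(\tau) := \int_0^\tau b_k\,(v_k\cdot d\bb^k)$ has quadratic variation $\int_0^\tau 2 b_k^2\,I_k(s)\,ds$; the assumed moments of $\max_{[0,T]}|v|_{h}$ make it an honest square-integrable martingale. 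By the representation theorem, on an enlargement of the original probability space there exist jointly independent standard real Wiener processes $\tilde\beta^k$ such that $M^k(\tau) = \int_0^\tau b_k\sqrt{2I_k(s)}\,d\tilde\beta^k(s)$, producing a weak solution of \eqref{5.21}.

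For~(2), the condition \eqref{stop} together with $j\le J(N)$ forces $|v_k|\ge\sqrt{2\delta_*}$ on $[\tau_1,\tau_2]$ for each $k\in\supp s^{(j)}\subset\{1,\dots,N\}$, so $\log v_k$ is smooth along the trajectory. Using $(d\bb^k)^2 = 0$ (hence no Ito correction), we get $d\log v_k = v_k^{-1}R_k(v)\,d\tau + b_k\,v_k^{-1}d\bb^k$; taking imaginary parts and applying \eqref{R_kk} gives
\[
d\vp_k = \frac{\lan iv_k\cdot P_k\ran_\Lambda(v)}{2I_k}\,d\tau + \frac{b_k\,(iv_k\cdot d\bb^k)}{2I_k}.
\]
A direct computation shows that $\int b_k\,(iv_k\cdot d\bb^k)$ has quadratic variation $\int 2 b_k^2 I_k\,ds$ and vanishing cross-variation with $M^k$ from step~(1). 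Hence the representation theorem produces, on a further enlargement, standard real Wiener processes $\tilde\beta^{-k}$, independent of the $\tilde\beta^k$, satisfying $b_k\,(iv_k\cdot d\bb^k)/(2I_k) = (b_k/\sqrt{2I_k})\,d\tilde\beta^{-k}$ on $[\tau_1,\tau_2]$. Summing $s^{(j)}_k\,d\vp_k$ over the finite set $\supp s^{(j)}$ recovers \eqref{5.22}${}_{j\le J}$, and combined with~(1) this exhibits $(I,\,\Phi_j,\ j\le J(N))$ as a weak solution of the full averaged system on $[\tau_1,\tau_2]$.

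The only non-routine issue---and the reason why ``weak'' solutions appear in the conclusion---is that the dispersion $\sqrt{2I_k}$ of the target equation \eqref{5.21} degenerates on $\{v_k=0\}$, so the naive choice $d\tilde\beta^k := (v_k\cdot d\bb^k)/\sqrt{2I_k}$ is not defined pathwise on the original filtered space. Enlarging the space by an independent Wiener component, as in the martingale representation theorem, resolves this difficulty. For~(2) the issue does not arise on $[\tau_1,\tau_2]$ because of the lower bound \eqref{stop}.
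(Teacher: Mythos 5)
Your proposal is correct and follows essentially the same route as the paper: apply Ito's formula to $I_k$ and $\Phi_j$, identify the drifts via \eqref{R_k} and \eqref{R_kk}, check that the diffusion matrices match those of \eqref{5.21}--\eqref{5.22} (including the vanishing cross-variation of $v_k\cdot d\bb^k$ and $iv_k\cdot d\bb^k$), and conclude by the standard equivalence between solving the martingale problem and being a weak solution. The only difference is that you spell out the martingale-representation step that the paper delegates to its citations of Yor and of the martingale-problem literature.
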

 \begin{proof} Let $v(\tau)$ satisfies \eqref{5.eff}. 
  Applying Ito's formula    to $I_k(v(\tau))$ and $\Phi_j(v(\tau))$, $ j\le J$, we get that 
  \begin{equation}\label{dIk}
 dI_k = v_k\cdot R_k\,d\tau + b_k^2\,d\tau +b_kv_k\cdot d\bb^k
 \end{equation}
 and 
 \begin{equation*}
 d\Phi_j =\sum_{k\in \supp s^{(j)}} s^{(j)}_k \left(\frac{iv_k\cdot R_k}{|v_k|^2}d\tau+ 
  \frac{b_k }{|v_k|^2} iv_k\cdot d\bb^k\right).
\end{equation*}
Using \eqref{R_k} and \eqref{R_kk} 
we see that \eqref{dIk} has the same drift and diffusion as \eqref{5.21}. 
So $I(v(\tau))$ is a weak solution of  \eqref{5.21} (see \cite{Yor74, MR99}). Similar, for 
$\tau\in[\tau_1,\tau_2]$, in view of \eqref{R_kk}, the process $(I,\Phi_j, j\le J)$, is a weak solution
of  the system  \eqref{5.21}, \eqref{5.22}${}_{j\le J}$.   \end{proof}

Now we show that the  effective equation describes the limiting (as $\nu\to0$) dynamics 
for the equations of motions, written 
in the $\ai$-variables of the interaction
representation \eqref{a}. Indeed, let  $u^\nu(\tau)$ be  a solution
of eq. \eqref{1.1}, satisfying $u(0)=u_0$. Denote
$v^\nu(\tau)=\cF(u^\nu(\tau))$ and consider the vector of  $\ai$-variables
$\ai^\nu(\tau) = (\ai^\nu_k(\tau)= e^{i\nu^{-1} \lambda_k
  \tau}  v^\nu_k(\tau),\, k\ge1)$ (cf. \eqref{a}). 
Notice that we obviously have
\begin{equation}\label{newesti}
|v^\nu(\tau)|_{h^m}\equiv |  \ai^\nu(\tau) |_{h^m}\;\; \forall\,m, \quad
I(v^\nu(\tau))\equiv I(\ai^\nu(\tau)),\quad
V(v^\nu(\tau))\equiv V(\ai^\nu(\tau))
\end{equation}
(see \eqref{reson}).  From
\eqref{5.1} we 
obtain the following system of equations for the  vector $\ai^\nu(\tau)$:
\begin{equation*} 
 \begin{split}
d\ai^\nu_k= \left(R_k(\ai^\nu) + \cR_k(\ai^\nu,\nu^{-1}\tau)\right)\,d\tau +\,
b_ke^{i\nu^{-1} \lambda_k \tau}d\bb^k(\tau), \quad k\ge1\ , \\
\end{split}
\end{equation*}
where we have denoted 
\begin{equation}\label{eq:nonres}
\cR_k(\ai,\nu^{-1}\tau)= \sum_{\substack{p,q,l\in\ZZ
    \\ q-l -e^k\not \in\cA(\Lambda,m)\\ |q|+|l|+1\le m}}  
P^{0pql}_k(\ai)\exp\Bigl(-i\nu^{-1}\tau\left( 
\Lambda \cdot (q- l- e^k)\right)\Bigr)\ .
\end{equation}
This  is the nonresonant, fast oscillating part of the nonlinearity (because 
$|\Lambda\cdot(q-l-e^k)|\ge1$).
Since $\{\bar \bb^k(\tau):=\int e^{i\nu^{-1} \lambda_k
  \tau}d\bb^k(\tau),\ k\ge1\}$ is another set of standard independent complex Wiener processes,
  then the process $\ai^\nu(\tau)$ is a weak solution of the system of equations
\begin{equation}\label{5.21a}
 \begin{split}
d\ai^\nu_k=&\left(R_k(\ai^\nu)+\cR_k(\ai^\nu,\nu^{-1}\tau)\right)\,d\tau +\,
b_kd \bb^k(\tau)\,, \quad k\ge1\ .\\
\end{split}
\end{equation}
 We will refer to equations \eqref{5.21a} as to the {\it
  $\ai$-equations.} It is crucial  that they 
   are identical to
the effective equation \eqref{5.eff}, apart from terms which
oscillate fast  as $\nu\to 0$.

 \subsection{Properties of resonant Hamiltonian $\cHR$ and   effective equation}
   \begin{lemma}\label{l.ham}
The vector field
$R^0$  is hamiltonian:
\begin{equation}\label{p2}
R^0=i \rho\nabla \cH^{\text{res}}(v),\quad \forall\,v\in h^p,\; p>d/2, 
\end{equation}
where $\cH^{\text{res}}(v) = \lan \cH\ran_\Lambda(v)$  and $\cH$ is the Hamiltonian  \eqref{*ham}.
   \end{lemma}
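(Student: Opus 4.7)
My plan is to verify \eqref{p2} by a direct computation, exploiting the fact that the phase rotations $\Psi_{t\Lambda}$ are linear symplectic isometries of each $h^p$, so that resonant averaging commutes with the Hamiltonian gradient.

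First, I would observe that the unaveraged nonlinearity $P^0$ is already hamiltonian: since $P^0(v) = -i\rho\, \cF(|u|^{2q_*}u)$ is the Fourier image of the gradient of \eqref{*ham}, one has the componentwise identity
\begin{equation*}
P^0_k(v) = 2i\rho\,\frac{\partial \cH}{\partial \bar v_k}(v), \qquad k\ge1,
\end{equation*}
so in the notation of the statement $P^0 = i\rho\, \nabla \cH$. This rests on the same computation that takes \eqref{*ham} to \eqref{1.100}, plus the series representation \eqref{Hv} for $\cH$ (which converges normally on every $h^p$ with $p>d/2$ by Lemma~\ref{l.P^0}).

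Next, I would use formula \eqref{xi} and the chain rule to pull the gradient out of the integral. Since $(\Psi_\theta v)_k = e^{i\theta_k} v_k$ is $\C$-linear and unitary, $\overline{(\Psi_\theta v)_k} = e^{-i\theta_k}\bar v_k$, so for any smooth $F(v,\bar v)$,
\begin{equation*}
\frac{\partial}{\partial \bar v_k}\bigl(F\circ \Psi_\theta\bigr)(v) = e^{-i\theta_k}\frac{\partial F}{\partial \bar w_k}(\Psi_\theta v),
\end{equation*}
which is exactly the $k$-th component of $\Psi_{-\theta}\bigl(\nabla F(\Psi_\theta v)\bigr)$. Taking $\theta=t\Lambda$ and applying this to $F=\cH$ yields
\begin{equation*}
\Psi_{-t\Lambda}\,\bigl(\nabla\cH(\Psi_{t\Lambda}v)\bigr) \;=\; \nabla\bigl(\cH\circ \Psi_{t\Lambda}\bigr)(v).
\end{equation*}
Combining this with Step~1 and the definition \eqref{xi},
\begin{equation*}
R^0(v) \;=\; \int_0^{2\pi}\!\!\Psi_{-t\Lambda}\,P^0(\Psi_{t\Lambda}v)\,\dbar t \;=\; i\rho\!\int_0^{2\pi}\!\!\nabla\bigl(\cH\circ\Psi_{t\Lambda}\bigr)(v)\,\dbar t.
\end{equation*}

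Finally, I would interchange $\nabla$ with $\int_0^{2\pi}\dbar t$ to conclude $R^0(v) = i\rho\,\nabla\langle \cH\rangle_\Lambda(v)=i\rho\,\nabla\cH^{\text{res}}(v)$. The interchange is the only step needing care: it is justified by the normal convergence of the series \eqref{Hv} for $\cH$ (hence of the termwise-differentiated series) together with the fact that $|\Psi_{t\Lambda} v|_{h^p}=|v|_{h^p}$ uniformly in $t$, so the estimate \eqref{xxx} applied to $\cH$ and its first derivatives gives a dominating majorant on bounded subsets of $h^p$. I expect this to be the only real obstacle; the chain-rule computation itself is mechanical. The identification $\cHR = \langle \cH\rangle_\Lambda$ is then precisely the definition \eqref{La_aver}, and \eqref{p2} follows.
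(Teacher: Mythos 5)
Your proposal is correct and follows essentially the same route as the paper's own proof: write $P^0=i\rho\nabla\cH$, use \eqref{xi}, move the gradient outside the average via $\Psi_{-t\Lambda}\nabla\cH(\Psi_{t\Lambda}v)=\nabla(\cH\circ\Psi_{t\Lambda})(v)$ (the paper phrases this as $\Psi_\theta^*\equiv\Psi_{-\theta}$), and identify the result with $i\rho\nabla\langle\cH\rangle_\Lambda$. You merely spell out the chain rule and the interchange of $\nabla$ with $\int\dbar t$ in more detail than the paper does.
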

\begin{proof}     Indeed, 
since $P^0(v)=i\rho\nabla \cH(v)$, then 
  \begin{equation*}
\begin{split}
 R^{0}(v)=  \int_0^{2\pi}   \Psi_{- t\Lambda}   \Big(
  i\rho \nabla
  \cH
  (\Psi_{ t\Lambda}  (v) )\Big)\,\dbar t 
  = i\rho \nabla_{v} \int_0^{2\pi}  \cH
  ( \Psi_{t\Lambda}  (v) )\,\dbar t
  =i\rho \nabla_{v}  \cH^{\text{res}}(v),
  \end{split}
\end{equation*}
as $\Psi_\theta^*\equiv\Psi_{-\theta}$, and where we used \eqref{xi}. 
\end{proof}

 Clearly $\cH^{\text{res}}(0)=0$.  Since $\cH(u)\le -C \|u\|_0^{2q_*+2}$ by the H\"older 
 inequality and since the transformations $\Psi_{t\Lambda}$ preserve $\|u\|_0$,  then
 $$
 \cH^{\text{res}}(u)\le -C\|u\|_0^{2q_*+2}\quad \forall\,u\,. 
 $$
 
 The resonant Hamiltonian $\cH^{\text{res}}$  has symmetries, given by some rotations $\Psi_m, m\in\R^\infty$:

  \begin{lemma}\label{l.symm}
  i) Let  ${\mathbf 1}=(1,1,\dots)$.  Then $\cHR( \Psi_{t{\mathbf 1}}v)=\,$const  (i.e., it does
  not depend on $t$); 
  
  ii) Let  $\cM^l$  the 
  $l$-th component of the sequence $(\bk(1), \bk(2),\dots)$, $l=1,\dots,d$ 
   (see \eqref{basis}).  Then $\cHR( \Psi_{t\cM^l}  \,v)=\,$const, for each  $l$.

  iii) $\cHR(\Psi_{t\Lambda}v)=\,$const.
   \end{lemma}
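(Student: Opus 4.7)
The plan is to derive each invariance from a symmetry of the underlying Hamiltonian $\cH$, together with one structural property of the resonant averaging. The key observation is that the diagonal rotations $\Psi_\theta$ commute with one another, so $\cH^{\text{res}}(v)=\int_0^{2\pi}\cH(\Psi_{s\Lambda}v)\,\dbar s$ inherits any invariance of $\cH$ of the form $\cH(\Psi_{t\theta}v)=\cH(v)$: indeed,
\begin{equation*}
\cH^{\text{res}}(\Psi_{t\theta}v)=\int_0^{2\pi}\cH(\Psi_{s\Lambda}\Psi_{t\theta}v)\,\dbar s=\int_0^{2\pi}\cH(\Psi_{t\theta}\Psi_{s\Lambda}v)\,\dbar s=\int_0^{2\pi}\cH(\Psi_{s\Lambda}v)\,\dbar s.
\end{equation*}
Thus parts (i) and (ii) reduce to identifying symmetries of $\cH$ itself.

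For (i), the rotation $\Psi_{t\mathbf 1}$ satisfies $(\Psi_{t\mathbf 1}v)_j=e^{it}v_j$, which in the $u$-picture is simply multiplication by the global phase, $u(x)\mapsto e^{it}u(x)$. Since $\cH(u)=-\frac{1}{2q_*+2}\int_{\T^d}|u(x)|^{2q_*+2}dx$ depends only on $|u|$, the invariance is immediate. For (ii), the rotation $\Psi_{t\cM^l}$ acts as $v_\bk\mapsto e^{it\bk^l}v_\bk$; under the Fourier isomorphism this corresponds to the spatial translation $u(x)\mapsto u(x+t\mathbf{e}_l)$, where $\mathbf{e}_l$ is the $l$-th standard basis vector of $\R^d$. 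Translation invariance of $\int_{\T^d}|u|^{2q_*+2}dx$ again yields the invariance of $\cH$, and hence of $\cH^{\text{res}}$.

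Part (iii) is tautological and requires no appeal to a symmetry of $\cH$. Since the frequencies $\lambda_k$ are integers, the flow $s\mapsto \Psi_{s\Lambda}$ is $2\pi$-periodic on $h^p$, so combining $\Psi_{s\Lambda}\Psi_{t\Lambda}=\Psi_{(s+t)\Lambda}$ with the translation invariance of the normalized Lebesgue measure $\dbar s$ on $S^1$ gives $\cH^{\text{res}}(\Psi_{t\Lambda}v)=\int_0^{2\pi}\cH(\Psi_{(s+t)\Lambda}v)\,\dbar s=\cH^{\text{res}}(v)$. Alternatively, all three statements can be verified term by term from the explicit expansion \eqref{hres}: a monomial $v_{\bk_1}\cdots\bar v_{\bk_{2q_*+2}}$ acquires the phase $\exp\bigl(it\sum_{i\le q_*+1}m_{\bk_i}-it\sum_{i>q_*+1}m_{\bk_i}\bigr)$ under $\Psi_{tm}$, and this phase vanishes for $m=\mathbf 1,\cM^l,\Lambda$ because of, respectively, the equal numbers of holomorphic and antiholomorphic factors, the momentum-conservation $\delta$ of \eqref{N1}, and the frequency-resonance $\delta$ of \eqref{N2}. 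There is no serious obstacle; the only subtlety is to match the diagonal rotations of Fourier modes with the corresponding transformations of $u$ in physical space.
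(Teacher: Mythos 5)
Your proof is correct and follows essentially the same route as the paper: both arguments commute the rotation $\Psi_{t\theta}$ past $\Psi_{s\Lambda}$ inside the averaging integral and then invoke the invariance of $\cH$ under $\Psi_{t\mathbf 1}$ and $\Psi_{t\cM^l}$, with part (iii) following from the periodicity of $s\mapsto\Psi_{s\Lambda}$. The only (harmless) difference is that the paper establishes the invariance of $\cH$ via commuting Hamiltonian flows ($\{H_0,\cH\}=\{M^l,\cH\}=0$), whereas you verify the same fact directly by identifying $\Psi_{t\mathbf 1}$ and $\Psi_{t\cM^l}$ with a global phase and a spatial translation in the $u$-picture.
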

\begin{proof} 
i) By \eqref{La_aver} we  have 
$$
\cHR(\Psi_{t{\mathbf 1}}v)=
\int_0^{2\pi} \cH\big( \Psi_{t'\Lambda}(\Psi_{t{\mathbf
    1}}v)\big)\,\dbar t'=
\int_0^{2\pi} \cH\big(  \Psi_{t{\mathbf 1}}(
\Psi_{t' \Lambda}v) 
\big)\,\dbar t'\ .
$$
Let us denote  $\Psi_{t{\mathbf 1}}(\Psi_{t'\Lambda}v)  = v(t;t')$. Then 
$
(d/dt) v(t;t')=iv.
$
The flow of this hamiltonian  equation commutes with that of the equation with the Hamiltonian $\cH$.\footnote{
This follows from the fact that the functional $\tfrac12|v|^2_{h^0}$ is an integral of motion for the Hamiltonian $\cH$, which 
becomes obvious if we note that in the $u$-representation $\cH$ has the form \eqref{*ham} and $\tfrac12|v|^2_{h^0}$ is $\,\tfrac12\int |u|^2(x)\,dx$. }
So
$\cH(v(t;t'))$ is independent from $t$ for each $t'$, and i) follows since
$\cHR(\Psi_{t{\mathbf 1}}v)=\int \cH(v(t;t'))\,\dbar t'$. 

ii) Proof is the same since the transformations $\Psi_{t\cM^l}, t\in\R$, are the flow of the momentum
Hamiltonian 
$M^l(u)=\frac12 \sum_{j=1}^\infty \bk^l(j)|u_j|^2$,
which commutes with $\cH$.

iii) It is a straightforward consequence of the definition
\eqref{La_aver} of the resonant averaging.
  \end{proof}

  Since the transformations $ \Psi_{t{\mathbf 1}}$ form the flow of the Hamiltonian $H_0(v)=\tfrac12\sum|v_j|^2 = \tfrac12 |v|^2_{h^0}$,  the transformations $\Psi_{t\Lambda}$ -- the flow of  $ H_1(v)=\tfrac12\sum \lambda_j|v_j|^2$, and 
  the transformations  $\Psi_{t\cM^l}, t\in\R$ --  the flow of the momentum
Hamiltonian, we may  recast the assertions of the last lemma as follows:
  \begin{equation}\label{integrals}
  \{\cHR,H_0\}=0,\quad  \{\cHR,H_1\}=0,  \quad \{\cHR,\cM^l\}=0\;\; \forall\,l. 
    \end{equation}
  Here $\{\cdot,\cdot\}$ signifies the Poisson bracket. 
  As   the transformations $\Psi_{m},\ m\in\R^\infty$, are symplectic, then the symmetries in the lemma   above  preserve the   hamiltonian vector field $R^0$ and commute with it.   
    In particular, since 
  $\Psi_{t\Lambda}=e^{-it\Delta}$, then the spectral spaces $E_\lambda$ of the operator $-\Delta$,
  $$
  E_\lambda=\text{span}\, \{e^j: \lambda_j=\lambda\}\,,
  $$
  are invariant for the flow-maps of $R^0$.   
\medskip

Since the transformations  $\Psi_{m},m\in\R^\infty$, obviously preserve the vector field $R^1$ as well as  the law of 
the random force in \eqref{5.eff} (see the proof of the lemma below), 
then those  $\Psi_m$ which are symmetries of $R^0$ (equivalently,  which are 
symmetries of the Hamiltonian $\cHR$), preserve weak solutions of \eqref{5.eff}. So we have:

  \begin{lemma}\label{l.invar}
  If $v(\tau)$ is a solution of equation \eqref{5.eff} and $m\in\R^\infty$ be either a vector $m=t{\mathbf 1}, t\in\R$,
   or a vector $m=t\Lambda$, or  $m=t\cM^l$, $l=1,\dots,d$, 
  then $\Psi_{m}v(\tau)$ also is a weak solution.
   \end{lemma}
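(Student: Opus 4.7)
The strategy is to show that the drift $R$ is equivariant under each $\Psi_m$ and that the driving noise is law-invariant under $\Psi_m$, so that applying $\Psi_m$ to $v(\tau)$ produces another weak solution with respect to a new family of standard complex Brownian motions on the same stochastic basis.

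\emph{Step 1 (equivariance of the drift).} I would verify $R(\Psi_m v) = \Psi_m R(v)$. For $R^1_k(v) = -\gamma_k v_k$ this is immediate since $\Psi_m$ is diagonal. For $R^0 = i\rho\,\nabla\cHR$, differentiate the identity $\cHR(\Psi_m v) = \cHR(v)$ provided by Lemma \ref{l.symm}. Since $\Psi_m$ is a $\C$-linear isometry of $h$ (it acts diagonally by unit complex numbers), its real $L_2$-adjoint is $\Psi_{-m}$, hence
$$
\nabla\cHR(v)\cdot w \;=\; \tfrac{d}{dt}\big|_{t=0}\cHR(\Psi_m v + t\Psi_m w)\;=\; \nabla\cHR(\Psi_m v)\cdot \Psi_m w,
$$
for every test vector $w\in h$, which gives $\Psi_m \nabla\cHR(v) = \nabla\cHR(\Psi_m v)$. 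Since $\Psi_m$ commutes with multiplication by $i$ (again because it is diagonal), multiplication by $i\rho$ yields $R^0(\Psi_m v) = \Psi_m R^0(v)$.

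\emph{Step 2 (invariance of the noise and conclusion).} For each $k$, set $\tilde\bb^k(\tau) := \int_0^\tau e^{im_k}\,d\bb^k$. Since $e^{im_k}\in S^1$ and complex Brownian motion has rotationally symmetric increments, each $\tilde\bb^k$ is again a standard complex Wiener process, the family $\{\tilde\bb^k\}_{k\ge 1}$ is mutually independent, and all of them are adapted to the same filtration $\{\cF_\tau\}$. Put $w(\tau):=\Psi_m v(\tau)$, so $w_k=e^{im_k}v_k$. By linearity of $\Psi_m$ and Step 1,
$$
dw_k \;=\; e^{im_k}\bigl(R_k(v)\,d\tau + b_k\,d\bb^k\bigr) \;=\; R_k(w)\,d\tau + b_k\,d\tilde\bb^k,\qquad k\ge 1,
$$
exhibiting $w$ as a weak solution of \eqref{5.eff} driven by $\{\tilde\bb^k\}$.

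\emph{Main obstacle.} Conceptually nothing is deep; the only point that requires care is the differentiation leading to $\Psi_m\nabla\cHR(v)=\nabla\cHR(\Psi_m v)$ on the infinite-dimensional space $h$. This is justified because the series \eqref{hres} converges normally, so that $\nabla\cHR = R^0/(i\rho)$ is a well-defined real-analytic mapping $h\to h$ with the locally Lipschitz bound \eqref{p4}; once this is in place the interchange of $\Psi_m$ and $\nabla$ reduces to the chain rule for a $C^1$ map on a Hilbert space and the rest of the argument is routine.
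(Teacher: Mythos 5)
Your proposal is correct and follows essentially the same route as the paper: the paper likewise deduces equivariance of $R^0$ from the invariance of $\cHR$ under $\Psi_m$ (Lemma \ref{l.symm}) together with the fact that these rotations preserve the hamiltonian structure, notes that $R^1$ is trivially preserved, and concludes by observing that $\{e^{im_k}\bb^k\}$ is again a family of standard independent complex Wiener processes. Your Step 1 merely spells out, via the adjoint identity $\Psi_m^*=\Psi_{-m}$, the equivariance $\nabla\cHR(\Psi_m v)=\Psi_m\nabla\cHR(v)$ that the paper asserts more briefly through the symplecticity of $\Psi_m$.
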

   \begin{proof}
   Denote $\Psi_{m}v(\tau)=v'(\tau)$. Applying $\Psi_{m}$ to eq.~\eqref{5.eff}, using Lemma~\ref{l.symm}
   and exploiting 
   the invariance of the operator $R^1$ with respect to $\Psi_{m}$,  we get
 $$
 dv'_k = \big(\Psi_{m} R(v(\tau)\big)_kd\tau + e^{im_k} b_kd\bb^k= (R(v'(\tau))_k+b_k( e^{im_k} d\bb^k).
 $$
 Since $\{e^{im_k} \bb^k(\tau), k\ge1\}$ is another set  of standard independent Wiener processes, then 
 $v'(\tau)$ is a weak solution of \eqref{5.eff}.
   \end{proof}
   
    \begin{corollary}\label{c.invar}
    If $\mu$ is a stationary measure for equation \eqref{5.eff} and a vector $m$ is as in
    Lemma~\ref{l.invar}, then the measure $\Psi_{m}\circ\mu$ also is stationary. 
       \end{corollary}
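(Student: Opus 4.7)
The plan is to derive the corollary directly from Lemma \ref{l.invar} by transporting a stationary solution through the rotation $\Psi_m$. First I would take a stationary weak solution $v(\tau)$ of \eqref{5.eff}, i.e.\ a process whose marginal law $\cD(v(\tau))$ equals $\mu$ for every $\tau\ge 0$ (such a process exists by the usual Bogolyubov--Krylov construction on the extended probability space supporting the driving noises).

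Next I would set $v'(\tau):=\Psi_m v(\tau)$. Since $\Psi_m$ is a linear isometry of $h$, the process $v'(\tau)$ is continuous in $h$ and inherits the integrability needed for Lemma~\ref{l.ef_eq}. By Lemma~\ref{l.invar} the process $v'(\tau)$ is itself a weak solution of the effective equation \eqref{5.eff}. Its time-$\tau$ marginal is, by the definition of the pushforward,
\begin{equation*}
\cD(v'(\tau))=\cD(\Psi_m v(\tau))=\Psi_m\circ\cD(v(\tau))=\Psi_m\circ\mu,
\end{equation*}
and this is the same measure for every $\tau$. Hence $v'(\tau)$ is a stationary weak solution of \eqref{5.eff} whose one-dimensional marginal is $\Psi_m\circ\mu$, which by definition means that $\Psi_m\circ\mu$ is a stationary measure.

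There is essentially no obstacle here: the content is packaged entirely in Lemma~\ref{l.invar}, and the only point to check is that stationarity (an assertion about one-time marginals of some stationary weak solution associated to $\mu$) is preserved under the deterministic, measure-preserving map $\Psi_m$. Strictly speaking one should note that the new driving noises appearing in the proof of Lemma~\ref{l.invar}, namely $\{e^{im_k}\bb^k\}$, are again standard independent complex Wiener processes, so $v'(\tau)$ is a bona fide weak solution of \eqref{5.eff} on the same filtered probability space, and the time-marginal argument above is legitimate.
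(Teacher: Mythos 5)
Your argument is correct and coincides with the paper's: the corollary is stated there as an immediate consequence of Lemma~\ref{l.invar}, obtained exactly by pushing a stationary weak solution through the isometry $\Psi_m$ and reading off its time-marginals. The only implicit ingredient worth keeping in mind is weak uniqueness for \eqref{5.eff} (Lemma~\ref{l.uniq} plus Yamada--Watanabe), which identifies ``marginal of a stationary weak solution'' with ``invariant measure of the Markov semigroup''; you have all the pieces.
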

       
    The next lemma characterises   the increments of  $R^0(v)$ in the space $h^0$.
    It  will be needed below  to study the ergodic properties of the effective equation:

\begin{lemma}\label{estil2}
 Let $p>d/2$. 
Then  for any $v_1,v_2\in h^p$ we have 
 \begin{equation*}
|R^0(v)-R^0(w)|_{h^0}\le C \big( |v|_{h^p}+ |w|_{h^p} \big)^{2q_*} |v-w|_{h^0}.
  \end{equation*}
  \end{lemma}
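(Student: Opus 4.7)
The plan is to mimic the argument leading to \eqref{p4}, but to split the $h^0$-norm of the difference of two Nemytskii-type nonlinearities into an $L^2$-factor (which gives the $|v-w|_{h^0}$) and an $L^\infty$-factor on the common size (which, via Sobolev embedding, will give the $(|v|_{h^p}+|w|_{h^p})^{2q_*}$). First I would use the representation \eqref{xi}, i.e.\ $R^0(v)=\int_0^{2\pi}\Psi_{-t\Lambda}P^0(\Psi_{t\Lambda}v)\,\dbar t$, together with the fact that each $\Psi_\theta$ is an isometry of every $h^m$ (and, after passing through $\cF$, of every $\cH^m$), to obtain
\begin{equation*}
|R^0(v)-R^0(w)|_{h^0}\le \int_0^{2\pi}\bigl|P^0(\Psi_{t\Lambda}v)-P^0(\Psi_{t\Lambda}w)\bigr|_{h^0}\,\dbar t.
\end{equation*}
Setting $v_t=\Psi_{t\Lambda}v$, $w_t=\Psi_{t\Lambda}w$ and $\hat v_t=\cF^{-1}v_t$, $\hat w_t=\cF^{-1}w_t$, the integrand equals $\rho\,\bigl\||\hat v_t|^{2q_*}\hat v_t-|\hat w_t|^{2q_*}\hat w_t\bigr\|_0$.

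Next I would use the elementary pointwise inequality
\begin{equation*}
\bigl||z_1|^{2q_*}z_1-|z_2|^{2q_*}z_2\bigr|\le C_{q_*}\bigl(|z_1|+|z_2|\bigr)^{2q_*}|z_1-z_2|,\qquad z_1,z_2\in\C,
\end{equation*}
to estimate the pointwise difference, and then bound the $L^2$-norm of the product by the $L^\infty$-norm of the size-factor times the $L^2$-norm of the difference:
\begin{equation*}
\bigl\||\hat v_t|^{2q_*}\hat v_t-|\hat w_t|^{2q_*}\hat w_t\bigr\|_0\le C\bigl\||\hat v_t|+|\hat w_t|\bigr\|_{L^\infty}^{2q_*}\|\hat v_t-\hat w_t\|_0.
\end{equation*}
Since $p>d/2$, the Sobolev embedding $\cH^p\hookrightarrow L^\infty$ gives $\|\hat v_t\|_{L^\infty}\le C\|\hat v_t\|_p=C|v_t|_{h^p}=C|v|_{h^p}$, and analogously for $w$; and $\|\hat v_t-\hat w_t\|_0=|v_t-w_t|_{h^0}=|v-w|_{h^0}$ because $\Psi_{t\Lambda}$ is an $h^0$-isometry. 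Assembling the pieces yields the claimed bound after integrating the $t$-independent estimate over $[0,2\pi]$.

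The only step requiring any care is the combined use of the pointwise inequality with H\"older (to separate the $L^\infty$ size-factor from the $L^2$ difference-factor) and the invocation of $\cH^p\hookrightarrow L^\infty$, which is precisely where the hypothesis $p>d/2$ is used; everything else is a direct copy of the derivation of \eqref{p44}. In particular, no normal-convergence issue on the series \eqref{P^0} needs to be re-examined, because we work with $P^0$ as the analytic map $v\mapsto-i\rho\cF(|\hat v|^{2q_*}\hat v)$ rather than term by term.
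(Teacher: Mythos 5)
Your proposal is correct and follows essentially the same route as the paper's own proof: the representation \eqref{xi}, the isometry of $\Psi_{t\Lambda}$, the pointwise Lipschitz bound on $z\mapsto|z|^{2q_*}z$ combined with H\"older, and the Sobolev embedding $\cH^p\hookrightarrow L^\infty$ for $p>d/2$. The paper merely compresses the pointwise-inequality-plus-H\"older step into a single displayed estimate, which you have spelled out explicitly.
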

  \begin{proof}
  Repeating the proof of the Lipschitz property of $R^0$ in the space $h$ (see \eqref{p4})
  and using the notation of that proof, i.e. denoting  $  \Psi_{
    t\Lambda}  v =v_t\,$, 
   $\hat v=\cF^{-1}v$, and similar for the vector $w$,  we get  that
      \begin{equation*}
    \begin{split}
|  &
 R^{0}(v) - R^0(w)|_{h^0} \le 
 \int_0^{2\pi} \Big
 | \Psi_{ - t\Lambda} \big(
  P^{0}(\Psi_{ t\Lambda}  v ) - P^0(\Psi_{t\Lambda}  w )\big) \Big|_{h^0}
  \,\dbar t\\
  &= 
    \int_0^{2\pi} \big| P^0(v_t) - P^0(w_t)\big|_{h^0}\,\dbar t =
      \int_0^{2\pi}  \big\||  \widehat {v_t}|^{2q_*} \widehat{ v_t} - 
    | \widehat{ w_t}|^{2q_*} \widehat{ w_t}   \big\|_0 \,\dbar t
   \\
 & \le  C  \int_0^{2\pi} ( |\widehat{v_t}|_{L^\infty} + |\widehat{w_t}|_{L^\infty})^{2q^*}      
    \|\widehat{v_t} - \widehat{w_t}\|_0 \,\dbar t
   \le  C_1 (|v|_{h^p}+|w|_{h^p})^{2q^*} |v-w|_{h^0}\ .
  \end{split}
  \end{equation*}
  \end{proof}

\section{Explicit calculation}\label{s2.5}
We intend here to calculate explicitly the  effective equation
\eqref{5.eff}, keeping  track of the dependence on the size $L$ of the
torus. To do that, it is
convenient to use the natural parametrisation of the exponential basis 
by  vectors $\bk\in\Z^d_L$; that is, decompose functions $u(x)$ to
Fourier series,  
$
u(x)= \sum_{\bk\in \Z^d_L}v_\bk e^{i \bk\cdot x}\  .
$
We modify the norms $|\cdot |_{h^p}$ accordingly :
$$
\left\| u\right\|^2_p =(2\pi L)^d\sum_{\bk\in\Z^d_L}\left(
|\bk|\vee \frac{1}{L}\right)^{2p} |v_\bk|^2=:
\left|v\right|^2_{h^p}\ .
$$
Now, as in the Introduction, the eigenvalues of the minus-Laplacian are
$\lambda_\bk=|\bk|^2 $ and the damping coefficients
$\gamma_\bk=f(\lambda_\bk)$.

In the $v$-coordinates the nonlinearity becomes the mapping $v\mapsto
P^0(v)$, whose $\bk$-th component is
$$
P^0_\bk(v)=-i\gi \sum_{\bk_1,\ldots \bk_{2q_*+1}\in \Z^d_L}
v_{\bk_1} \cdots  v_{\bk_{q_*+1}} \bar v_{\bk_{q_*+2}}\cdots \bar v_{\bk_{2q_*+1}}
\delta^{1\ldots q_*+1}_{q_*+2\ldots 2q_*+1\, \bk}\ 
$$
(see \eqref{N1}). Accordingly, 
\begin{equation}\label{eq:example}
v_\bk\cdot P^0_\bk
= \gi \sum_{\bk_1,\ldots \bk_{2q_*+1}\in \Z^d_L} \Im\,(v_{\bk_1} \cdots
v_{\bk_{q_*+1}} \bar v_{\bk_{q_*+2}}\cdots \bar v_{\bk_{2q_*+1}} 
\bar v_\bk )
\delta^{1\ldots q_*+1}_{q_*+2\ldots 2q_*+1\, \bk}\,.
\end{equation}
In order to calculate the resonant average, we first  notice that $v_\bk\cdot
P^0_\bk$ can be written as a series \eqref{xx}, where $|C_{pql}| \le 1$ and
$|q|+|p|+|l|=2q_*+2$. In this case the sum
in the l.h.s. of \eqref{xxx} is bounded by
$$
C \left( \sum_{\bk\in\Z^d_L}|v_\bk|\right)^{2q_*+2}\le C_1(L)|v|_p^{q_*+1} \left(
\sum_{\bk\in\Z^d_L}|\bk|^{-2p} \right)^{q_*+1}.
$$
So the condition \eqref{xxx} is met if $2p>d$. 

Since the order of the resonance $m=2q_*+2$, then  $\lan v_\bk\cdot P^0_\bk\ran_\Lambda(v)$
equals 
$$
 \gi \sum_{\bk_1,\ldots \bk_{2q_*+1}\in \Z^d_L}
\Im\,(v_{\bk_1} \cdots  v_{\bk_{q_*+1}} \bar v_{\bk_{q_*+2}}\cdots \bar
v_{\bk_{2q_*+1}}\bar v_\bk)
\delta^{1\ldots q_*+1}_{q_*+2\ldots 2q_*+1\, \bk}\delta(\lambda^{1\ldots
  q_*+1}_{q_*+2\ldots 2q_*+1\, \bk})\ ,
$$
(see \eqref{N2}).
This  follows from \eqref{eq:example} and
\eqref{La_aver} if one notes  that  appearing  there
 restriction
$(q-l)\cdot \Lambda=0$ is now  replaced 
by
 the factor $\delta(\lambda^{1\ldots q_*+1}_{q_*+2\ldots 2q_*+1\, \bk})$.
In a similar way, we see that the quantity $R^0_k$ , entering  equation \eqref{5.eff},  takes the form
$$
R^0_\bk(v)= -i\gi \sum_{\bk_1,\ldots \bk_{2q_*+1}\in \Z^d_L}
v_{\bk_1} \cdots  v_{\bk_{q_*+1}} \bar v_{\bk_{q_*+2}}\cdots \bar
v_{\bk_{2q_*+1}}
\delta^{1\ldots q_*+1}_{q_*+2\ldots 2q_*+1\, \bk}\delta(\lambda^{1\ldots
  q_*+1}_{q_*+2\ldots 2q_*+1\, \bk})\ .
$$
Taking into account that $R^1_\bk=-\lla_\bk v_\bk$, we finally arrive at an explicit formula
for the effective equation 
\eqref{5.eff}:
 \begin{equation}\label{explicit}
\begin{split}
&dv_\bk= \Bigl(-\lla_\bk v_\bk \\
&-i\gi \sum_{\bk_1,\ldots \bk_{2q_*+1}\in \Z^d_L}
v_{\bk_1} \cdots  v_{\bk_{q_*+1}} \bar v_{\bk_{q_*+2}}\cdots \bar
v_{\bk_{2q_*+1}}
\delta^{1\ldots q_*+1}_{q_*+2\ldots 2q_*+1\, \bk}\delta(\lambda^{1\ldots
  q_*+1}_{q_*+2\ldots 2q_*+1\, \bk})\Bigr)d\tau \\
&{} \qquad  \qquad \qquad \qquad  \qquad \qquad\qquad  
\qquad  \qquad \qquad
+b_\bk d\bb^\bk\ , \qquad\quad \bk\in \Z^d_L\ .
\end{split}
\end{equation}
Due to \eqref{p2},
$$
R^0_k(v) = i\rho \nabla_{v_k}\cH^{\text{res}}(v)=
 2i\rho \frac{\p}{\p \bar v_k}
 \cH^{\text{res}}(v).
$$
Therefore eq. \eqref{explicit}  can be written as the damped--driven hamiltonian
system  \eqref{*eff1}.
\medskip

\noindent 
{\it Examples.} 
a) If $q_*=1$, then \eqref{explicit}  reads
\begin{equation*}
\begin{split}
dv_\bk= \Big(-\lla_\bk v_\bk 
-i\gi \sum_{\bk,\bk',\bk''\in \Z^d_L}
v_{\bk}  v_{\bk'} \bar v_{\bk''}
\delta_{\bk+\bk'\,,\,\bk''+
  r}\,\delta_{\lambda_\bk+\lambda_{\bk'}\,,\,\lambda_{\bk''}+
  \lambda_\bk}\Big)  d\tau
+b_\bk d\bb^\bk\ , 
\end{split}
\end{equation*}
where  $\bk\in \Z^d_L$. 
If $f(t)=t+1$, then this equation  looks similar to the CGL equation 
$$
\dot u-\Delta u+u= i|u|^{2}u +\frac{d}{d\tau}\sum b_\bk \bb^\bk(\tau)e^{i\bk\cdot x}, 
$$
written in the Fourier coefficients. The latter  equation 
 possesses nice analytical properties; e.g.  its stationary 
measures is unique for any $d$, see \cite{KNer13}.

b) Our results remain true if the Hamiltonian $\cH$, corresponding to the nonlinearity in \eqref{1.111},
has variable coefficients. In particular, let $d=1$ and  the  nonlinearity  in   \eqref{1.111}  is replaced by 
$-i p(x) |u|^{2}u$ with a sufficiently smooth function $p(x)$.
Then the effective equation is 
$$
dv_k= \Big(-\lla_k v_k -i \sum_{k_1,k_2,k_3,k_4\in \Z_L}
v_{k_1}  v_{k_2} \bar v_{k_3} p_{k_4}
\delta_{k_1+k_2+k_4\,,\,k_3+k}\,\delta_{k_1^2+k_2^2\,,\,k_3^2+
  k^2}\Big)  d\tau\,
+b_kd\bb^k\ ,
$$
where  $k_L\in \Z_d$ and  $p_k$'s are the Fourier coefficients of $p(x)$.

\section{Main results}\label{s.results}
\subsection{Averaging theorem for  the initial-value problem. 
}\label{s5.2}
We recall that $r$ is a fixed  even integer such that
$r\ge \frac d2+1,$  and abbreviate 
$$
h^r=h,\quad
 C([0,T],h)=  \cH_\ai.
$$
We provide $\cH_\ai$ with the Borel $\sigma$-algebra and the natural
 filtration of the sigma-algebras
$\{\cF_t, 0\le t\le T\}$. 

Let $v^\nu(\tau)$ be a solution of \eqref{5.1} such that
$v^\nu(0)=v_0=\cF(u_0)\in h^{r}$, consider the corresponding
process $a^\nu(\tau)$.
 Due to \eqref{newesti}, the process $a^\nu$ satisfies 
obvious analogies of the estimates \eqref{1.3}, \eqref{2.5} and \eqref{2.05}. 
Since $(R+\cR)(a)$ is  the nonlinearity $P(v)$, written in the $a$-variables,
then 
$$
|(R+\cR)(a)(\tau)|_h = |P(v)(\tau)|_h\le C|v(\tau)|_h^{q_*+1}=  C|a(\tau)|_h^{q_*+1}.
$$
Therefore all moments of $|(R+\cR)(a)|_{\cH_a}$ are finite, and we get from eq.~\eqref{5.21a}
that 
$\ 
\E|a^\nu|_{C^{1/3}([0,T],h)}\le \bar C,
$
uniformly in $\nu$. Now arguing as when proving Lemma~\ref{l7.2} we get that  the 
  set of laws $\cD(\ai^\nu(\cdot)), \  0<\nu\le1$, is tight in
  $\cH_\ai$.  
Consider any limiting measure, corresponding to the laws  $\cD(\ai^\nu(\cdot))$:
\begin{equation} \label{5.88a}
\cD(\ai^{\nu_\ell}(\cdot))\strela \cQ^0_\ai
\as \nu_\ell\to0.
\end{equation}

\begin{theorem}\label{t5.22a}
  There exists a unique weak solution $\ai(\tau)$ of effective
  equation \eqref{5.eff}, satisfying 
 $\ai(0)=   v_0$  a.s. The law of $\ai(\cdot)$ in the space
  $\cH_\ai$ coincides 
  with $\cQ^0_\ai$. The convergence \eqref{5.88a} holds as $\nu\to0$. 
\end{theorem}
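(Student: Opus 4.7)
The strategy is to prove, simultaneously, weak existence, uniqueness in law, and the full convergence \eqref{5.88a}. From the tightness of $\{\cD(\ai^\nu(\cdot)) : 0<\nu\le 1\}$ in $\cH_\ai$ established before the theorem, any sequence $\nu'_\ell\to 0$ admits a subsequence $\nu_\ell\to 0$ along which $\cD(\ai^{\nu_\ell})\strela \cQ^0_\ai$ for some Borel probability measure $\cQ^0_\ai$ on $\cH_\ai$. It suffices to show that every such $\cQ^0_\ai$ is the law of a weak solution of \eqref{5.eff} starting from $v_0$, and that this solution is unique in law; the convergence along all of $\nu\to 0$ then follows.

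\smallskip

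\noindent\textbf{Identification of $\cQ^0_\ai$.} Rewrite \eqref{5.21a} in integrated form,
\[
\ai^\nu_k(\tau)=v_{0,k}+\int_0^\tau R_k(\ai^\nu(s))\,ds+X^\nu_k(\tau)+b_k\,\bb^k(\tau),
\qquad X^\nu_k(\tau):=\int_0^\tau\cR_k(\ai^\nu(s),\nu^{-1}s)\,ds.
\]
Each monomial in \eqref{eq:nonres} carries a phase $\exp(-i\nu^{-1}s\,\omega)$ with integer $\omega=\Lambda\cdot(q-l-e^k)\ne 0$. Integration by parts in $s$ produces a factor of $\nu/\omega$ together with boundary terms and an Ito differential of the monomial; these are controlled by the uniform moment bounds on $|\ai^\nu|_h$ supplied by \eqref{apriori} via \eqref{newesti}. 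After truncating \eqref{eq:nonres} at a large finite set of multi-indices (the tail being negligible by the normal convergence established in Section \ref{s2.5}), one obtains
\[
\E\sup_{0\le\tau\le T}|X^\nu_k(\tau)|\ \longrightarrow\ 0\qquad(\nu\to 0),\quad k\ge 1.
\]
Invoking Skorokhod's representation along $\nu_\ell$, realise $\ai^{\nu_\ell}\to\ai$ almost surely in $\cH_\ai$ on a new filtered probability space, together with the corresponding Wiener processes. Continuity of $R_k$ on $h$, the moment bounds on $\ai^\nu$ (giving uniform integrability for $R_k(\ai^\nu)$ by Lemma \ref{l.P^0}), the just-proved vanishing of $X^{\nu_\ell}_k$, and the trivial convergence of the linear stochastic term $b_k\bb^k(\tau)$ jointly yield
\[
\ai_k(\tau)=v_{0,k}+\int_0^\tau R_k(\ai(s))\,ds+b_k\bb^k(\tau).
\]
Hence $\ai(\cdot)$ is a weak solution of \eqref{5.eff} with $\ai(0)=v_0$, whose law is $\cQ^0_\ai$; in particular existence in the theorem is proved.

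\smallskip

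\noindent\textbf{Uniqueness and full convergence.} The local Lipschitz estimate \eqref{p4} (with $R^1$ linear) gives pathwise uniqueness for \eqref{5.eff} on any event $\{\sup_{\tau\le T}|\ai(\tau)|_h\le R\}$. Applying Ito's formula to $|\ai|_{h^r}^{2n}$, using the hypothesis $f>0$, $f'>0$ and arguing as in Section \ref{s1.1}, any weak solution of \eqref{5.eff} satisfies global analogues of \eqref{apriori}, so the stopping times $\tau_R=\inf\{\tau:|\ai(\tau)|_h\ge R\}$ satisfy $\tau_R\to T$ in probability as $R\to\infty$; thus pathwise uniqueness holds globally on $[0,T]$. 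By Yamada--Watanabe, uniqueness in law follows, so $\cQ^0_\ai$ is independent of the extracted subsequence and \eqref{5.88a} holds along all of $\nu\to 0$.

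\smallskip

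\noindent\textbf{Main obstacle.} The technical heart is the oscillatory-integral bound for $X^\nu_k$. The sum \eqref{eq:nonres} is infinite, the frequencies $\omega$ are merely $|\omega|\ge 1$ (not bounded away from $1$ uniformly), and the monomials involve arbitrarily many factors of $\ai^\nu$; the required decay thus rests on combining the truncation afforded by normal convergence with integration by parts (where one Ito-corrects the boundary and drift pieces) and the high-order moment bounds on $\|u^\nu\|_r$ guaranteed by \eqref{2.5}. A secondary but routine point is the joint passage to the limit of the drift and stochastic integrals on the Skorokhod space, which works thanks to the linearity of the noise coefficient $b_k$.
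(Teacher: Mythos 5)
Your proposal is correct in its overall skeleton, which coincides with the paper's: tightness of $\cD(\ai^\nu(\cdot))$, identification of any limit point $\cQ^0_\ai$ as the law of a weak solution of \eqref{5.eff} by showing that the time integral of the nonresonant part $\cR_k$ vanishes, and then uniqueness in law via the locally Lipschitz drift (Lemma~\ref{l.uniq}) plus Yamada--Watanabe, which upgrades subsequential convergence to \eqref{5.88a} along all of $\nu\to0$. Where you genuinely diverge is in the two technical ingredients. First, for the key estimate $\E\sup_\tau|\int_0^\tau\cR_k(\ai^\nu(s),\nu^{-1}s)\,ds|\to0$ (the paper's Lemma~\ref{l2.3ihp}), you integrate by parts in the phase $e^{-i\nu^{-1}s\,\Lambda\cdot(q-l-e^k)}$ and absorb the resulting It\^o differential of each monomial using the moment bounds; the paper instead runs the Khasminskii partitioning scheme: cut $[0,T]$ into blocks of length $L=\sqrt\nu$, freeze $\ai^\nu$ at the left endpoint of each block (controlling the increment via a Doob-type event $\cF_l$), and average the frozen oscillating integrand over a time window of length $\nu^{-1}L\to\infty$. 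Your route, if the truncation of the infinite sum in \eqref{eq:nonres} and the BDG control of the stochastic-integral correction are written out (both are needed and both work, using $|\Lambda\cdot(q-l-e^k)|\ge1$ and the uniform moments), actually yields a cleaner $O(\nu)$ rate for the truncated part, versus the paper's $\vk(\nu^{-1/6};M)$; the paper's scheme has the advantage of never differentiating the monomials and of transferring verbatim to the phase-average statement of Proposition~\ref{r34}. Second, you identify the limit by Skorokhod representation and direct passage to the limit in the integrated equation, whereas the paper phrases this as a martingale problem (showing $N_k$ and $N_{k_1}N_{k_2}-\int A_{k_1k_2}$ are $\cQ^0_\ai$-martingales); these are equivalent, though your version should make explicit that the limiting Wiener processes remain Wiener with respect to the filtration generated jointly with the limit $\ai$ --- which is exactly what the martingale formulation encodes. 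Your extra step establishing non-explosion bounds for arbitrary weak solutions is harmless but not needed: solutions are by definition $C([0,T],h)$-valued, so local pathwise uniqueness already propagates to $[0,T]$.
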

The proof of the theorem    is presented  at the end of this section.

Let $\cQ^0$ be a measure in $\cH_{I,V}$ as in
\eqref{5.88}. Since $(I,V)(v^\nu(\cdot))=
(I,V)(\ai^\nu(\cdot))$ for any $\nu>0$ then re-denoting $a(\tau)$ by $v(\tau)$
we derive a corollary from the previous theorem:

\begin{theorem}\label{t5.22}
  There exists a unique weak solution $v(\tau)$ of effective equation \eqref{5.eff}, satisfying 
 $v(0)=   v_0$  a.s. The law of $(I, V )(v(\cdot))$ in the space $\cH_{I,V}$ coincides
  with $\cQ^0$ and the convergence \eqref{5.88} holds as $\nu\to0$. Moreover, for any vectors
  $\tilde s_1,\dots,\tilde s_m
  \in \Z_0^\infty$, perpendicular to $\Lambda$, we have the convergence
$$
\cD(I, V^{\tilde s_1}, \dots, V^{\tilde s_m})
(v^\nu(\cdot))\strela \cD(I, V^{\tilde s_1}, \dots, V^{\tilde s_m})  (v (\cdot)).
$$
\end{theorem}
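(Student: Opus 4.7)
The plan is to deduce Theorem~\ref{t5.22} directly from Theorem~\ref{t5.22a} by projecting the $a$-process onto the slow observables $I$ and $V^s$ (with $s\perp\Lambda$), which, thanks to the interaction representation, coincide pointwise with the corresponding observables of $v^\nu$. The existence and uniqueness of the weak solution $v(\tau)$ of \eqref{5.eff} with $v(0)=v_0$ is precisely the content of Theorem~\ref{t5.22a} after renaming $a$ as $v$; the only substantive new input is the identification of the joint law of the $I$- and $V$-components as the $\nu\to0$ limit.

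The key pointwise identity is as follows. For any $s\in\Z_0^\infty$ with $s\cdot\Lambda=0$, the definition \eqref{a} of the $a$-variables yields
$$
V^s(a^\nu(\tau))\,=\,\exp\!\bigl(i\nu^{-1}\tau\,(s\cdot\Lambda)\bigr)\,V^s(v^\nu(\tau))\,=\,V^s(v^\nu(\tau)).
$$
This applies in particular to each basic resonant monomial $V_j=V^{s^{(j)}}$, since $s^{(j)}\in\cA(\Lambda,m)\subset\cA(\Lambda)$, and to each $V^{\tilde s_j}$ appearing in the moreover-part. Combined with $I(a^\nu)=I(v^\nu)$ from \eqref{newesti}, this gives the pathwise equalities $(I,V)(v^\nu(\cdot))=(I,V)(a^\nu(\cdot))$ in $\cH_{I,V}$ and $V^{\tilde s_j}(v^\nu(\cdot))=V^{\tilde s_j}(a^\nu(\cdot))$ in $C([0,T],\C)$, for every $\nu>0$.

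By Theorem~\ref{t5.22a}, $\cD(a^\nu(\cdot))\strela\cQ_\ai^0$ in $\cH_\ai$ along the full sequence $\nu\to0$, where $\cQ_\ai^0$ is the law of the unique weak solution $v(\cdot)$ of \eqref{5.eff}. The map $a\mapsto I(a)$ from $C([0,T],h)$ to $C([0,T],h_{I+})$ is continuous because $|I(a)-I(a')|_{h_I}\le(|a|_h+|a'|_h)|a-a'|_h$ at each time; the map $a\mapsto V^s(a)$ into $C([0,T],\C)$ is continuous for any fixed $s\in\Z^\infty_0$ since $V^s$ is polynomial in the finitely many coordinates indexed by $\supp s$. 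Continuity of the full vector map $a\mapsto V(a)$ into $C([0,T],\C)^\infty$ with the Tikhonov topology reduces by definition to componentwise continuity, which has just been checked. Applying the continuous mapping theorem to each of these projections transfers the convergence $\cD(a^\nu(\cdot))\strela\cQ_\ai^0$ to the claimed convergence $\cD((I,V)(v^\nu(\cdot)))\strela\cQ^0$ in $\cH_{I,V}$, and to the analogous joint convergence with $V^{\tilde s_1},\dots,V^{\tilde s_m}$. Uniqueness of the limit then upgrades convergence from subsequences to the full sequence. No real obstacle arises; the entire argument is a transfer of weak convergence through an explicit continuous projection.
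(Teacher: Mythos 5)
Your proposal is correct and follows exactly the route the paper takes: it deduces Theorem~\ref{t5.22} from Theorem~\ref{t5.22a} via the pathwise identities $(I,V)(v^\nu(\cdot))=(I,V)(a^\nu(\cdot))$ and $V^{\tilde s}(v^\nu(\cdot))=V^{\tilde s}(a^\nu(\cdot))$ for $\tilde s\perp\Lambda$ (the content of \eqref{newesti}), followed by the continuous mapping theorem applied to the continuous projections $a\mapsto I(a)$ and $a\mapsto V^s(a)$. The paper leaves the continuity step implicit ("re-denoting $a(\tau)$ by $v(\tau)$ we derive a corollary from the previous theorem"), so your write-up is simply a more explicit version of the same argument.
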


By this result  the Cauchy problem for the effective equation has a weak solution.
Using 
 Lemma~\ref{l.uniq} and the Yamada-Watanabe argument  (see \cite{KaSh, Yor74, MR99})
we get that  the  equation is well posed:

\begin{corollary}\label{c2}
For any $v_0\in h^r$, eq. \eqref{5.eff} has a unique strong and a unique weak 
solution $v(\tau)$ such that $v(0)=v_0$.  Its law  satisfies \eqref{apriori}. 
\end{corollary}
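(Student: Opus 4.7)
The plan is a direct application of the Yamada--Watanabe principle. Theorem~\ref{t5.22} already delivers a weak solution $v(\tau)$ of \eqref{5.eff} with $v(0)=v_0$, whose law satisfies the a priori estimates \eqref{apriori}. Lemma~\ref{l.uniq} supplies the uniqueness side. Combining weak existence with pathwise uniqueness, the Yamada--Watanabe theorem (see \cite{KaSh,Yor74,MR99}) yields both the existence of a unique strong solution and uniqueness in law. Uniqueness in law in turn forces the law of every weak solution to coincide with the one constructed in Theorem~\ref{t5.22}, and hence to satisfy \eqref{apriori}.

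The one technical point worth verifying is that Lemma~\ref{l.uniq}, stated for strong solutions, actually provides pathwise uniqueness in the sense required by Yamada--Watanabe: for any two adapted solutions $v,w$ defined on a common filtered probability space, driven by the same Wiener processes $\{\bb^k\}$ and with the same initial datum $v_0$, one has $v\equiv w$ a.s.\ on $[0,T]$. A standard cutoff argument gives this. Introduce the stopping times
\[
\tau_N=\inf\{t\le T:|v(t)|_h\vee|w(t)|_h>N\}.
\]
Since the noise in \eqref{5.eff} is additive and the drift $R=R^0+R^1$ is locally Lipschitz on $h$ (by \eqref{p4} for $R^0$ and by linearity for $R^1$), the difference $v-w$ obeys, pathwise on $[0,\tau_N]$, the integral inequality
\[
|v(t)-w(t)|_h\le C(N)\int_0^t|v(s)-w(s)|_h\,ds,
\]
so Gronwall gives $v\equiv w$ on $[0,\tau_N\wedge T]$. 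The moment bounds \eqref{apriori} (transferred from $v^\nu$ to the limit through the construction in Theorem~\ref{t5.22}) force $\tau_N\to T$ a.s.\ as $N\to\infty$, and uniqueness on $[0,T]$ follows.

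I do not anticipate a real obstacle: the whole argument reduces to standard infinite-dimensional SDE machinery, with the hard analytic work already contained in Theorem~\ref{t5.22} (weak existence together with the a priori bounds) and in the local Lipschitz estimate \eqref{p4} underlying Lemma~\ref{l.uniq}. The only step calling for any care is the cutoff/Gronwall verification above, which is entirely routine once one knows the limiting process takes values in $h$ and has the moment estimates \eqref{apriori}.
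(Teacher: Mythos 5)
Your proposal is correct and follows essentially the same route as the paper: weak existence from Theorem~\ref{t5.22}, pathwise uniqueness from Lemma~\ref{l.uniq} (which the paper justifies by the local Lipschitz bound \eqref{p4}, exactly your cutoff--Gronwall argument), and the Yamada--Watanabe theorem to conclude existence and uniqueness of strong and weak solutions, with \eqref{apriori} inherited from the limiting construction. The only cosmetic difference is that you invoke the moment bounds to show $\tau_N\to T$, whereas continuity of the $h$-valued paths already suffices; this changes nothing.
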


Now consider $\vp(v^\nu(\tau))\cdot {\tilde s}=\vp(V^{\tilde s}(v^\nu(\tau))\in S^1$. 
Since $\vp(V)$ is a discontinuous function of $V\in \C$, then to pass to
a limit as $\nu\to0$ we do the following. We identity $S^1$ with
$\{v\in \R^2: |v|=1\}$, denote $\lc \tilde s \rc =N$,  and
 approximate the discontinuous function 
 $V^N =(V_1,\dots,V_N)
 \mapsto \vp (V^{\tilde s})$
  by continuous functions 
$$
V^N \mapsto  f_\delta([I(V^N)]) \,
\vp (V^{\tilde s}) 
 \in \R^2\,,
\qquad [I]=\min_{1\le k\le   N } I_k, \quad 0<\delta\ll1. 
$$
where $f_\delta$ is continuous, $0\le f_\delta\le 1$, $f_\delta(t)=0$ 
for $t\le \delta/2$ and $f_\delta=1$ for $t\ge \delta$.

For any measure $\mu_\tau$ in a complete metric space, which weakly continuously depends on $\tau$,
and any $\tau_1<\tau_2$ we will denote
$$
\lan\mu_\tau\ran_{\tau_1}^{\tau_2}=\frac1{\tau_2-\tau_1}
\int_{\tau_1}^{\tau_2} \mu_\tau\, d\tau.
$$
Then the argument above jointly with Lemma~\ref{l5.1} imply:

\begin{corollary}\label{c1}
Let $\ts\in\Z_0^\infty$ be any non-zero vector, orthogonal to $\Lambda$, and let $0\le \tau_1<\tau_2\le T$.
Then 
$$
\lan\,\cD(\vp(u^\nu(\tau))\cdot \ts)\,\ran _{\tau_1}^{\tau_2} \strela 
\lan\,\cD(\vp(v(\tau))\cdot \ts)\,\ran _{\tau_1}^{\tau_2} \quad\text{as}\quad \nu\to0.
$$
\end{corollary}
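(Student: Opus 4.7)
The plan is to test the weak convergence of measures on $S^1\simeq \{w\in\R^2: |w|=1\}$ against an arbitrary bounded continuous $g\colon S^1\to\R^2$. Since $\ts\perp\Lambda$, the relation \eqref{ugly} gives $\vp(u^\nu(\tau))\cdot\ts=\vp\bigl(V^{\ts}(v^\nu(\tau))\bigr)$, and by Theorem \ref{t5.22} we have the joint weak convergence $\cD\bigl(I,V^{\ts}\bigr)(v^\nu(\cdot))\strela \cD\bigl(I,V^{\ts}\bigr)(v(\cdot))$ in $\cH_{I,V}$. So the task reduces to showing that the time-averaged functional $v\mapsto \int_{\tau_1}^{\tau_2} g\bigl(\vp(V^{\ts}(v(\tau)))\bigr)\,d\tau$ behaves continuously under this weak convergence, the difficulty being that $\vp\colon\C\to S^1$ jumps at the origin.

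To handle the discontinuity I would use exactly the mollification suggested in the excerpt. Set $N=\lc\ts\rc$ and, for $\delta>0$, put
$$
g_\delta(V^N):= f_\delta\bigl([I(V^N)]\bigr)\,g\bigl(\vp(V^{\ts})\bigr),\qquad [I(V^N)]=\min_{1\le k\le N}I_k(V^N),
$$
with $f_\delta$ continuous, $0\le f_\delta\le1$, vanishing on $[0,\delta/2]$ and equal to $1$ on $[\delta,\infty)$. Then $g_\delta$ is bounded and continuous on $\C^N$, and $|g_\delta-g\circ\vp\circ V^{\ts}|\le 2\|g\|_\infty\mathbf 1\{[I(V^N)]\le\delta\}$. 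Split
$$
\int_{\tau_1}^{\tau_2}\E\,g\bigl(\vp(V^{\ts}(v^\nu(\tau)))\bigr)\,d\tau = \int_{\tau_1}^{\tau_2}\E\,g_\delta(V^N(v^\nu(\tau)))\,d\tau + \text{error}^\nu_\delta.
$$
The first term is a continuous bounded functional of $V^N(v^\nu(\cdot))$, so by Theorem \ref{t5.22} it converges as $\nu\to0$ to the corresponding quantity for $v$. So it suffices to show $|\text{error}^\nu_\delta|\to 0$ as $\delta\to 0$, uniformly in $\nu\in(0,1]$, and to show the analogous smallness of the error for the limit $v$.

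The main obstacle, and the heart of the proof, is this uniform-in-$\nu$ control of visits to the singular locus $\{[I(V^N)]\le\delta\}$. For $v^\nu$ this is furnished directly by Lemma \ref{l5.1} together with a union bound over $k\le N$: for each $k$, $\int_0^T \IP\{I_k^\nu(\tau)\le\delta\}\,d\tau\to 0$ as $\delta\to 0$ uniformly in $\nu$. For the limit $v$ one cannot use Portmanteau on the closed set $\{I_k\le\delta\}$ directly (wrong direction), so I would majorise $\mathbf 1\{I_k\le\delta\}$ by a continuous bounded $\phi_{\delta,\eta}$ equal to $1$ on $[0,\delta]$ and vanishing on $[\delta+\eta,\infty)$. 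Applying the weak convergence of Theorem \ref{t5.22} to the continuous functional $v\mapsto \int_0^T\phi_{\delta,\eta}(I_k(v(\tau)))\,d\tau$ gives
$$
\int_0^T\IP\{I_k(v(\tau))\le\delta\}\,d\tau\le \limsup_{\nu\to0}\int_0^T\IP\{I_k^\nu(\tau)\le\delta+\eta\}\,d\tau,
$$
which by Lemma \ref{l5.1} tends to $0$ as $\delta+\eta\to0$. Combining the two estimates, choosing $\delta$ small to make both errors at most $\e$, then sending $\nu\to0$ and finally $\e\to0$ yields the required convergence of the time-averaged laws.
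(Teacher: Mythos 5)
Your proposal is correct and follows essentially the same route as the paper: the paper proves Corollary \ref{c1} precisely by approximating the discontinuous map $V^N\mapsto\vp(V^{\ts})$ with the continuous cut-off functions $f_\delta([I(V^N)])\,\vp(V^{\ts})$, invoking the weak convergence of Theorem \ref{t5.22} for the continuous part and Lemma \ref{l5.1} to control, uniformly in $\nu$, the time spent near the singular locus. Your extra Portmanteau-type step to transfer the smallness estimate to the limiting process $v$ is a detail the paper leaves implicit, but it is the intended argument.
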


On the contrary, if $s\cdot\Lambda\ne0$, then by Proposition~\ref{r34} 
we get that 
$$
\lan\,\cD(\vp(u^\nu(\tau))\cdot s)\,\ran _{\tau_1}^{\tau_2}
\strela \dbar \vp.
$$
More generally, if vectors $\ts_1,\dots,\ts_M$ from $\Z_0^\infty$ are perpendicular to $\Lambda$ and
a vector $s$ is not, then
$$
\big\lan\cD(I,\vp\cdot\ts_1,\dots,\vp\cdot \ts_M,\vp\cdot s)(u^\nu(\tau))\big\ran _{\tau_1}^{\tau_2}\strela
\big\lan\cD(I,\vp\cdot\ts_1,\dots,\vp\cdot \ts_M)(v(\tau))\big\ran _{\tau_1}^{\tau_2}\times \, \dbar \vp.
$$
\smallskip

We do not know an equivalent description of the measure $\cQ^0$ only  in terms
of the slow variables $(I,V)$ of equation \eqref{5.1}. But the following result
holds true:

\begin{proposition}\label{p.slow} Consider the natural process on the space $\cH_{I,V}$ with
the measure $\cQ^0$.  If for some  $N\in\N$ and $\delta_*>0$, 
 stopping times $0\le\tau_1<\tau_2\le T$ satisfy
 \eqref{stop}, then for $\tau\in[\tau_1,\tau_2]$ the process $\big(I, \Phi^{(N)}\big)  \big((I,V)(\tau) \big)$ is a
weak solution of the averaged 
equations \eqref{5.21} and \eqref{5.22}${}\mid_{j\le J}$. Here 
$ \Phi^{(N)}=(\Phi_1,\dots, \Phi_{J(N)})$. 
\end{proposition}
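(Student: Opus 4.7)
The plan is to reduce the assertion to Lemma~\ref{l.ef_eq} via Theorem~\ref{t5.22}. By Theorem~\ref{t5.22} together with Corollary~\ref{c2}, there is a unique weak solution $v(\tau)$ of the effective equation \eqref{5.eff} with $v(0)=v_0$, and $\cQ^0$ is the pushforward of its law under the map $v(\cdot)\mapsto (I(v(\cdot)), V(v(\cdot)))\in \cH_{I,V}$. Moreover, by \eqref{apriori} all moments of $\max_{0\le\tau\le T}|v(\tau)|_h$ are finite, so $v(\tau)$ meets the integrability hypothesis of Lemma~\ref{l.ef_eq}.

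The next point is to reinterpret stopping times $\tau_1<\tau_2$, defined as functionals on $(I,V)\in\cH_{I,V}$ with respect to the filtration $\{\cF_t\}$, as stopping times on the probability space carrying $v(\tau)$. Since the map $v(\cdot)\mapsto (I,V)(v(\cdot))$ is continuous and adapted, $\tau_1,\tau_2$ pull back to stopping times with respect to the filtration generated by $v$, and the event \eqref{stop} pulls back to the analogous event $I_k(v(\tau))\ge\delta_*$ for $k\le N$, $\tau\in[\tau_1,\tau_2]$. Recall from \eqref{ugly} that $\Phi_j(v)=\vp(V_j(v))$, and from \eqref{NN} that $j\le J=J(N)$ implies $\lc s^{(j)}\rc\le N$; hence $\Phi_j\bigl((I,V)(\tau)\bigr)=\Phi_j(v(\tau))$ depends only on components $v_k$ with $k\le N$, and on the stochastic interval $[\tau_1,\tau_2]$ the coefficients of the averaged equations \eqref{5.22}$_{\,j\le J}$ remain regular (bounded by $1/\sqrt{2\delta_*}$ times locally Lipschitz terms).

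Applying Lemma~\ref{l.ef_eq} to the weak solution $v(\tau)$ with these stopping times yields that $\big(I(v(\tau)),\Phi_j(v(\tau)),\ j\le J\big)$ is a weak solution of the system \eqref{5.21}, \eqref{5.22}$_{\,j\le J}$ on $[\tau_1,\tau_2]$. Since the law of this process restricted to the stochastic interval coincides, via the measurable bijection between $\cH_{I,V}$-data and the corresponding data on the coordinate space, with the law of $(I,\Phi^{(N)})$ on the same interval under $\cQ^0$, the martingale problem for \eqref{5.21}, \eqref{5.22}$_{\,j\le J}$, localised to $[\tau_1,\tau_2]$, is satisfied by the natural process on $\cH_{I,V}$. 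This is exactly the conclusion of the proposition.

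The main obstacle I anticipate is bookkeeping rather than analysis: one must check carefully that the martingale characterisation of the weak solution property transports cleanly between the probability space of $v(\tau)$ and the canonical space $(\cH_{I,V},\cQ^0)$, in particular that test-function martingales in \eqref{5.21}, \eqref{5.22} stopped at $\tau_1\wedge t$ and $\tau_2\wedge t$ remain martingales in the filtration $\{\cF_t\}$ on $\cH_{I,V}$. This follows from the optional stopping theorem once the identification of filtrations is spelled out, but it is the step where one must be pedantic about adaptedness and the fact that the map $v\mapsto (I,V)(v)$ is not injective only on the negligible locus where $\vp$ is ill-defined.
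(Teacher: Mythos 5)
Your argument is correct and is essentially the route the paper intends: the proposition is stated without a separate proof precisely because it follows by combining Theorem~\ref{t5.22} (which identifies $\cQ^0$ with the law of $(I,V)(v(\cdot))$ for the unique weak solution $v$ of the effective equation, with moments controlled by \eqref{apriori}) with Lemma~\ref{l.ef_eq}. Your additional care about transporting the stopping times and the martingale characterisation between the canonical space $(\cH_{I,V},\cQ^0)$ and the space carrying $v(\tau)$ is sound and only makes explicit what the paper leaves implicit.
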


 Since    the averaged 
quantities $\langle v_k\cdot P_k\rangle_{\Lambda} $ and  $\langle iv_k\cdot
P_k\rangle_{\Lambda} $ are functions of $I$ and $\Phi$ (see \eqref{AAver}), then equations 
 \eqref{5.21} and \eqref{5.22}${}\!\mid_{j\le J}$  form  an under-determined 
 system of equations for the variables   $(I,\Phi)$.
 \medskip

\noindent
{\it Proof of  Theorem \ref{t5.22a} }
The proof follows the Khasminski scheme (see \cite{Khas68, FW03, KP08}). Its
crucial step is given by  the following lemma:
\begin{lemma}\label{l2.3ihp}
For any $k\ge 1$ one has 
\begin{equation}\label{2.14ihp}
\mathfrak A^\nu_k:=\E \max_{0\le \tau\le  T} \left|\int_0^{\tau }
\cR_k(\ai^\nu(s),\nu^{-1}s) ds  \right| \to 0\quad \mbox{as }
\nu\to 0\ .
\end{equation}
\end{lemma}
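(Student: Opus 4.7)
The plan is to implement the classical Khasminski averaging scheme. Every summand of $\cR_k$ in \eqref{eq:nonres} carries an oscillating factor $\exp(-i\nu^{-1}s\mu)$ with $\mu=\Lambda\cdot(q-l-e^k)\in\Z\setminus\{0\}$, hence $|\mu|\ge 1$; this is the source of the small-$\nu$ gain. First I would fix a partition time-scale $L_\nu\to 0$ with $\nu/L_\nu\to 0$ (to be optimised later), partition $[0,T]$ by $\tau_\ell=\ell L_\nu$, $0\le\ell\le M\sim T/L_\nu$, and bound
\begin{equation*}
\max_{0\le\tau\le T}\Bigl|\int_0^\tau \cR_k(\ai^\nu(s),\nu^{-1}s)\,ds\Bigr|
\le \sum_{\ell=0}^{M-1}\Bigl|\int_{\tau_\ell}^{\tau_{\ell+1}}\cR_k(\ai^\nu(s),\nu^{-1}s)\,ds\Bigr| + L_\nu\sup_{0\le s\le T}|\cR_k(\ai^\nu(s))|.
\end{equation*}
The last term has expectation $O(L_\nu)$ by Lemma~\ref{l.P^0} and \eqref{2.5}, so it is negligible.

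Next, on each subinterval I would freeze the slow variable, splitting the integral as $\int_{\tau_\ell}^{\tau_{\ell+1}}[\cR_k(\ai^\nu(s),\nu^{-1}s)-\cR_k(\ai^\nu(\tau_\ell),\nu^{-1}s)]\,ds$ plus a ``frozen'' integral $\int_{\tau_\ell}^{\tau_{\ell+1}}\cR_k(\ai^\nu(\tau_\ell),\nu^{-1}s)\,ds$. The freezing error is controlled by the local Lipschitz property of $\cR_k$ (inherited termwise from Lemma~\ref{l.P^0}) together with a Hölder estimate $\E|\ai^\nu(s)-\ai^\nu(\tau_\ell)|_h\le CL_\nu^{1/3}$ on the $a$-process; this estimate follows from the $a$-SDE \eqref{5.21a} using \eqref{2.5} and Burkholder's inequality for the martingale part, exactly as in Lemma~\ref{l7.1}. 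After summing and applying Cauchy--Schwarz with the moment bounds \eqref{2.5}, the freezing contribution to $\mathfrak A_k^\nu$ is of order $L_\nu^{1/3}$. For the frozen integral, a direct integration of the oscillating exponential on each non-resonant monomial gives absolute value at most $2\nu|\mu|^{-1}|P_k^{0pql}(\ai^\nu(\tau_\ell))|\le 2\nu|P_k^{0pql}(\ai^\nu(\tau_\ell))|$; summing in $(p,q,l)$ via normal convergence of the series \eqref{P^0} in $h^r$ (the condition \eqref{xxx}, verified explicitly in Section~\ref{s2.5} for our nonlinearity as long as $2r>d$) yields a pointwise bound $C\nu\,|\ai^\nu(\tau_\ell)|_h^{2q_*+1}$. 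Summing over $M\sim T/L_\nu$ intervals and taking expectation produces an oscillation contribution of order $\nu/L_\nu$.

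The main obstacle I foresee is the infinite-sum nature of $\cR_k$: the uniform-in-$\nu$ gain $\nu|\mu|^{-1}\le\nu$ has to be summed over all non-resonant triples $(p,q,l)$, and a naive bound could well be divergent. Only the \emph{normal} convergence of the full monomial expansion of $P_k^0$, with polynomial majorant in $|v|_h$, rescues the argument; this is exactly what the careful apriori discussion around \eqref{xx}--\eqref{xxx} and Section~\ref{s2.5} provides, so I would invoke it as a black box. Granting this, optimising by $L_\nu=\nu^{3/4}$ balances both error sources, since $L_\nu^{1/3}$ and $\nu/L_\nu$ are then both of order $\nu^{1/4}$, giving $\mathfrak A_k^\nu=O(\nu^{1/4})\to 0$ and proving the lemma.
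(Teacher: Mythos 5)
Your proposal is correct and follows essentially the same route as the paper's proof: a Khasminski partition of $[0,T]$, freezing of the slow variable on each subinterval, the $O(\nu/|\mu|)\le O(\nu)$ gain from integrating each non-resonant exponential (using $|\Lambda\cdot(q-l-e^k)|\ge1$), and normal convergence of the monomial series to sum the gains. The only differences are bookkeeping ones — the paper controls the unbounded norms and increments via the truncation events $\Omega_M$ and $\cF_l$ with an iterated limit (first $M\to\infty$, then $\nu\to0$) and takes $L=\sqrt\nu$, whereas you use moment bounds with Cauchy--Schwarz directly and $L_\nu=\nu^{3/4}$, which even yields an explicit rate.
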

The lemma is proved below in Section~\ref{sez:dim}, following the
arguments in \cite{KP08,K12}. Now we derive from it the theorem.

For $\tau\in[0,T]$ consider the processes 
$$
N^{\nu_l}_k=\ai^{\nu_l}_k(\tau) - \int_0^\tau R_k(\ai^{\nu_l}(s))d s\ ,
\quad k \ge 1\ .
$$
Due to \eqref{5.21a} we can write $N^{\nu_l}_k$ as
$$
N^{\nu_l}_k(\tau)= \widetilde N^{\nu_l}_k(\tau)+\overline
N^{\nu_l}_k(\tau)\ ,
$$
where $\widetilde N^{\nu_l}_k(\tau)=a^{\nu_l}(\tau)- \int_0^\tau (
R_k(\ai^{\nu_l}(s))+\cR_k(\ai^{\nu_l}(s),\nu_l^{-1}s)) ds $ is a $\cQ^0_\ai$
martingale and the disparity $\overline N^{\nu_l}_k$ is 
$$ 
\overline
N^{\nu_l}_k(\tau)=\int_0^{\tau }
\cR_k(\ai^{\nu_l}(s))
 ds \ .
$$
The convergence $\cD(\ai^\nu_l)\strela \cQ^0_\ai$ and Lemma~\ref{l2.3ihp}
imply that the processes
$$
N_k(\tau)= \ai_k(\tau)-\int_0^\tau R_k(\ai)ds\ , \quad k\ge 1\ ,
$$
are $\cQ^0_\ai$ martingales (see  for details \cite{KP08}, Proposition 6.3). 

Similar to \eqref{2.14ihp}, we find that
$$
\E \max_{0\le \tau\le  T} \left|\int_0^{\tau }
\cR_k(\ai^\nu(s),\nu^{-1}s) ds  \right|^2 \to 0\quad \mbox{as }
\nu\to 0\ .
$$
Then, using the same arguments as before, we see that the processes
$N_{k_1} (\tau) N_{k_2}(\tau)- \int_0^\tau A_{k_1k_2}ds$ are $\cQ^0_\ai$
martingales, where $A_{k_1k_2}$ denotes the diffusion matrix for the
system \eqref{5.eff}. That is, $\cQ^0_\ai$ is a solution of the martingale
problem with drift $R_k$ and the diffusion $A$. Hence, $\cQ^0_\ai$ is a law
of a weak solution of eq. \eqref{5.eff}. Such a solution exists for any $v_0\in h$.
So by  Lemma~\ref{l.uniq}
and the Yamada -Watanabe argument (see \cite{KaSh,Yor74,MR99}), weak
and strong solutions for \eqref{5.eff} both exist and are unique. Hence, 
 the limit in \eqref{5.88} does not depend on the sequence $\nu_l\to 0$, the 
 convergence holds as $\nu\to0$, 
  and the theorem is proved.
\qed

\subsection{Averaging theorem for stationary solutions.
}\label{s.stat}
Let $v^\nu(\tau)$ be a stationary solution of eq. \eqref{5.1} as at the end of Section~\ref{s1.1}.\footnote{  Under certain 
restrictions on the equation it is  known that its law (i.e. the stationary measure of the equation)
 is unique,
e.g., see \cite{Sh06}. We will not discuss this now.}  Solutions $v^\nu$  inherit the a-priori estimates 
\eqref{1.3}, 
\eqref{2.5}, \eqref{2.05}, so still the set of laws 
$\cD(I(v^\nu(\cdot)), V(v^\nu(\cdot)))$, $0<\nu\le1$, is tight in $\cH_{I,V}$ (cf. Lemma~\ref{l7.2}).  Consider 
any limit
\begin{equation}\label{Conv}
\cD\big(I(v^{\nu_\ell}(\cdot)), V(v^{\nu_\ell}(\cdot)) \big)\strela \cQ\quad \text{as $\nu_\ell\to0$}. 
\end{equation}
As before, the measure $\cQ$ satisfies \eqref{apriori} (with the constants $C_n, C', C^{''}$,  
corresponding  to $v_0=0$).  Moreover, it is stationary in $\tau$.

\begin{theorem}\label{t.stat}
 There exists a stationary solution $v(\tau)$ of the effective equation \eqref{5.eff} such that 
$\cQ=\cD\big(I(v(\cdot)), V(v (\cdot)) \big)$. 
\end{theorem}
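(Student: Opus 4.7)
The plan is to mimic the proof of Theorem~\ref{t5.22a}, substituting the stationary distribution of $v^\nu$ for the deterministic Cauchy data, and then to promote the resulting weak solution of \eqref{5.eff} to one that is also stationary in time. First I would pass to the interaction variables $\ai^\nu(\tau) := \Psi_{\nu^{-1}\tau\Lambda}v^\nu(\tau)$, which preserve $I$ and $V$ by \eqref{newesti}. The estimates \eqref{1.3} and \eqref{2.05} on the stationary solution $v^\nu$, together with the short-time bound \eqref{2.5}, yield uniform-in-$\nu$ tightness of $\{\cD(\ai^\nu(\cdot))\}$ in $\cH_\ai = C([0,T],h)$ by exactly the argument given at the beginning of Section~\ref{s5.2}. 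After a further refinement of the sequence $\{\nu_\ell\}$ from \eqref{Conv} I may therefore assume $\cD(\ai^{\nu_\ell}(\cdot))\strela \cQ_\ai$ in $\cH_\ai$, and continuity of $(I,V)$ combined with \eqref{newesti} gives $(I,V)_*\cQ_\ai = \cQ$.

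Next, the Khasminski martingale scheme from the proof of Theorem~\ref{t5.22a} transfers verbatim: Lemma~\ref{l2.3ihp} and its $L^2$-variant show that on $(\cH_\ai,\cQ_\ai)$ the processes $\ai_k(\tau) - \int_0^\tau R_k(\ai(s))\,ds$ and their pairwise products minus the diffusion time-integrals $\int_0^\tau A_{k_1k_2}\,ds$ are martingales. Hence $\cQ_\ai$ solves the martingale problem associated with \eqref{5.eff}, and by the well-posedness established in Corollary~\ref{c2} it is the law of the unique weak solution $v(\cdot)$ of \eqref{5.eff} starting from the weak limit $\mu_0$ of $\cD(v^{\nu_\ell}(0))$. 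The stationarity of $v^\nu$ passes to the limit at the level of $(I,V)$ by continuity, so $(I,V)(v(\cdot))$ is already a stationary process.

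The main obstacle is upgrading this to stationarity of $v$ itself: the rapid phase rotation $\Psi_{\nu^{-1}\tau\Lambda}$ destroys time-stationarity of $\cD(\ai^\nu(\cdot))$, so $\cQ_\ai$ need not be shift-invariant a priori. To overcome this I would combine the $\Psi_{t\Lambda}$-equivariance of the effective equation (Lemma~\ref{l.invar} and Corollary~\ref{c.invar}) with a Bogolyubov--Krylov argument applied to the effective Markov semigroup $P_\tau$ from Corollary~\ref{c2}: the bounds \eqref{apriori} for $\cQ_\ai$ imply tightness in $h$ of the Ces\`aro time averages $\overline m_T := \tfrac{1}{T}\int_0^T P^*_\tau \mu_0\,d\tau$, and any cluster point $m^0$ is $P_\tau$-invariant. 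Because $(I,V)_*P^*_\tau\mu_0$ is $\tau$-independent by the previous paragraph and $\Psi_{t\Lambda}$-averaging commutes with $P^*_\tau$, the stationary solution $v^0$ of \eqref{5.eff} with initial law $m^0$ has single-time marginals $\cD((I,V)(v^0(\tau)))$ matching those of $\cQ$. Promoting this to the equality of full process laws $\cD((I,V)(v^0(\cdot)))=\cQ$ on $\cH_{I,V}$ requires replaying the tightness and martingale-problem arguments for the time-shifted stationary processes $v^\nu(\tau_0+\cdot)$ uniformly in $\tau_0\in[0,T]$, and this coupling between the $\nu\to 0$ and $T\to\infty$ limits is the delicate point I expect to absorb most of the technical work.
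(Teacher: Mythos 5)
Your proposal follows essentially the same route as the paper: identify the limit via the Khasminski martingale scheme as the law of the effective-equation solution started from $\mu^0=\lim\mu^{\nu_\ell}$, then apply Bogolyubov--Krylov to its time-averaged laws to produce a stationary measure $m^0$ with the correct $(I,V)$-marginals, and finally identify the full process law with $\cQ$. The last step you flag as delicate is in fact easier than you fear: since $v^\nu$ is stationary, $v^\nu(\tau_0+\cdot)$ has literally the same law as $v^\nu(\cdot)$, so applying the Cauchy-problem theorem on $[0,\tau_0+T]$ and restricting shows the solution started from $P^*_{\tau_0}\mu^0$ has $(I,V)$-law $\cQ$ for each fixed $\tau_0$, and no uniformity or coupling of the $\nu\to0$ and $T\to\infty$ limits is needed.
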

\begin{proof}
Denote $\mu^\nu=\cD v^\nu(\tau)$. Estimate \eqref{2.5} with $2m=r$ and 
 $n=1$ implies that 
$
\int |v|^2_{h^{r+1}}\,\mu^\nu(dv)\le C
$
for all $\nu$. So the set of measures $\mu^\nu$ is tight in $\cH^r$. Replacing, if necessary, 
the sequence $\{\nu_l\}$ by a subsequence, we achieve that 
\begin{equation}\label{ko-ko}
 \mu^{\nu_l}\strela \mu^0\quad\text{as}\quad \nu_l\to0. 
\end{equation}
Clearly $(I,V)\circ \mu^0$ is the marginal distribution for $\cQ$ as $\tau=\,$const, which we will denote $q$
(i.e., $q=\cQ\mid_{\tau=\const}$). 

Let $v^0(\tau), \tau\ge0$, be a solution for the effective equation \eqref{5.eff} such that $\cD v^0(0)=\mu^0$
(existing by Corollary~\ref{c2} and the estimates on $\mu^0$).  Then, for the same reason as in Section~\ref{s5.2}, 
$$
\cD\big(I,V)(v^0(\tau) )\mid_{\tau\in[0,T]}=\cQ,
$$
and $\cD(I,V)(v^0(\tau))\equiv q$. We do not know if the solution $v^0$ is stationary, but from the Bogolyubov-Krylov
argument we know that for a suitable sequence $T_j\to\infty$ we have the convergence
$$
\frac1{T_j} \int_0^{T_j} \cD(v^0(\tau))\,d\tau \strela m^0,
$$
where $m^0$ is a stationary measure for \eqref{5.eff}. Still we have that $(I,V)\circ m^0=q$, and the measure $m^0$
satisfies the same apriori estimates as before. Let $v(\tau)$ be a solution for  \eqref{5.eff} such that $\cD v(0)=m^0$.
It is stationary and $\cD(I,V)(v(\tau))\equiv q$.
 Modifying a bit the argument above we get that also
$\cD(I,V)(v(\cdot))=\cQ$.
\end{proof}

	Writing the convergence \eqref{Conv} as
	$
	\cD(I,V)\big(v^{\nu_l} (\cdot)\big) \strela \cD(I,V)\big(v(\cdot)\big),
	$
we note that, as in Section~\ref{s5.2}, we also have that 
$$
\cD(I,V^{\tilde s_1},\dots, V^{\tilde s_m}   )\big(v^{\nu_l} (\tau)\big) \strela
\cD (I,V^{\tilde s_1},\dots, V^{\tilde s_m} ) \big(v (\tau)\big) = (I,V^{\tilde s_1},\dots, V^{\tilde s_m} ) \circ m^0
$$
as $\nu_l\to0$, 
for any $m$ and any vectors $\tilde s_1,\dots,\tilde s_m$, perpendicular to $\Lambda$. Since for stationary solutions
$v^\nu(\tau)$ we have $\langle\cD(v^\nu(\tau))\rangle_{\tau_1}^{\tau_2} = \cD(v^\nu(\tau))$, then arguing as when 
proving Corollary~\ref{c1} we also get that 
\begin{equation}\label{zzz}
\cD (I,\Phi^{\tilde s_1},\dots, \Phi^{\tilde s_m} ) \big(v^{\nu_l} (\tau)\big) \strela
 (I,\Phi^{\tilde s_1},\dots, \Phi^{\tilde s_m} ) 
 \circ m^0\,.
\end{equation}
Moreover if $s\in\Z_0^\infty$ is such that $s\cdot\Lambda\ne0$, then in view of Proposition~\ref{r34} and the
stationarity of the solutions we have
\begin{equation}\label{zz}
\cD (I,\Phi^{\tilde s_1},\dots, \Phi^{\tilde s_m}, \Phi^s ) \big(v^{\nu_l} (\tau)\big) \strela\big(
 (I,\Phi^{\tilde s_1},\dots, \Phi^{\tilde s_m} )\circ m^0\big)  \times  \,\dbar \theta\,.
\end{equation}

If eq. \eqref{5.eff} has a unique stationary measure $m^0$, then the convergences above hold as $\nu\to0$.
But in this case a stronger assertion holds:

\begin{theorem}\label{t.univ}
Let $v^\nu$ be a stationary solution of equation 
 \eqref{5.1}, $\cD (v^\nu(\tau))\equiv \mu^\nu$, 
and assume that the effective equation 
\eqref{5.eff} has a unique stationary measure $m^0$. Then 
 \begin{equation}\label{lim}
 \mu^\nu \strela   m^0 \qquad \text{as} \quad \nu\to0.
\end{equation}
\end{theorem}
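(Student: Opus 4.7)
The plan is to extract a weakly convergent subsequence of $\{\mu^\nu\}$ and identify its limit with $m^0$; since every convergent subsequence will be shown to have limit $m^0$, this gives $\mu^\nu\strela m^0$. Tightness of $\{\mu^\nu\}_{0<\nu\le 1}$ in $h=h^r$ follows from the uniform bound $\int |v|^2_{h^{r+1}}\mu^\nu(dv)\le C$ (established in the proof of Theorem \ref{t.stat}) together with the compact embedding $h^{r+1}\hookrightarrow h^r$. Fix any sequence $\nu'_j\to 0$ and, passing to a subsequence, assume $\mu^{\nu_l}\strela\mu^0$ on $h$; it suffices to show $\mu^0=m^0$.

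The first key step is to prove that $\mu^0$ is invariant under the flow $\Psi_{t\Lambda}$ for every $t\in\R$. For the stationary process $v^\nu$ one has $\cD v^\nu(\tau)\equiv\mu^\nu$, and the interaction variables $a^\nu_k(\tau)=e^{i\nu^{-1}\lambda_k\tau}v^\nu_k(\tau)$ satisfy $v^\nu(\tau)=\Psi_{-\tau\Lambda/\nu}a^\nu(\tau)$, so for every $t\in\R$
\[
\cD a^\nu(t\nu)=\Psi_{t\Lambda}\circ\mu^\nu.
\]
Since $a^\nu(0)=v^\nu(0)\sim\mu^\nu\strela\mu^0$, it remains to verify the short-time bound $|a^\nu(t\nu)-a^\nu(0)|_h\to 0$ in probability as $\nu\to 0$. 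From the $\ai$-equations \eqref{5.21a} this increment splits into the drifts $\int_0^{t\nu}R(a^\nu)\,ds$ and $\int_0^{t\nu}\cR(a^\nu,s/\nu)\,ds$, each bounded in $h$-norm by $t\nu$ times a random variable with finite moments uniformly in $\nu$ (by the analogues of the apriori estimates \eqref{2.05} for the stationary $a$-process), plus a stochastic integral $\int_0^{t\nu}b\,d\bb$ whose $h$-norm has variance $O(\nu)$. Slutsky's theorem then gives $\cD a^{\nu_l}(t\nu_l)\strela\mu^0$, and passing to the limit in the displayed identity yields $\mu^0=\Psi_{t\Lambda}\circ\mu^0$.

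Evaluating Theorem \ref{t.stat} at any fixed $\tau$ (via continuity of the evaluation map $\cH_{I,V}\to h_{I+}\times\C^\infty$) and combining with \eqref{zzz}, one obtains that for any finite collection $\tilde s_1,\dots,\tilde s_n\in\Z^\infty_0$ perpendicular to $\Lambda$,
\[
\cD(I,\Phi^{\tilde s_1},\dots,\Phi^{\tilde s_n})\circ\mu^{\nu_l}\strela (I,\Phi^{\tilde s_1},\dots,\Phi^{\tilde s_n})\circ m^0.
\]
Combined with $\mu^{\nu_l}\strela\mu^0$ and the $\mu^0$- and $m^0$-negligibility of the locus $\Game$ (which follows from Lemma~\ref{l5.1} together with the non-degeneracy of the noise in the effective equation), this forces $(I,\Phi^{\tilde s_1},\dots,\Phi^{\tilde s_n})\circ\mu^0=(I,\Phi^{\tilde s_1},\dots,\Phi^{\tilde s_n})\circ m^0$. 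By Corollary \ref{c.invar} and the uniqueness hypothesis, $m^0$ is itself $\Psi_{t\Lambda}$-invariant. For each $N\ge 1$, setting $\Lambda^N=(\lambda_1,\dots,\lambda_N)$, the $\C^N$-projections $\mu^0|_N$ and $m^0|_N$ are $\Psi_{t\Lambda^N}$-invariant probability measures; any such measure is completely determined by the joint distribution of the invariants $(I_1,\dots,I_N)$ together with $\{\Phi^s:s\in\Z^N,\,s\cdot\Lambda^N=0\}$ (the orbit space of the $\Psi_{t\Lambda^N}$-action on $\C^N$). The previous display restricted to such $s$ yields $\mu^0|_N=m^0|_N$ for every $N$, whence $\mu^0=m^0$.

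The main obstacle is the middle step: securing $\Psi_{t\Lambda}$-invariance of $\mu^0$ requires both the uniform moment control on the $a$-process to make the short-time increment rigorous and a careful Slutsky-type passage to identify $\lim\cD a^{\nu_l}(t\nu_l)$ with $\mu^0$. The final identification is then structural, resting on the clean fact that a $\Psi_{t\Lambda^N}$-invariant probability measure on $\C^N$ is fixed by the joint law of its resonant Fourier characters, so that matching of $(I,\Phi^s)$-distributions upgrades to matching of the full measures.
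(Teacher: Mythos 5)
Your argument is correct, and its skeleton --- tightness, a subsequence limit $\mu^0$, matching of the resonant marginals $(I,\Phi^{\tilde s_1},\dots,\Phi^{\tilde s_n})$ of $\mu^0$ and $m^0$ via \eqref{zzz}, uniformity of both measures along the orbits of the rotations $\Psi_{t\Lambda}$, and the final orbit-space identification --- is the same as the paper's, which carries out the last step concretely by projecting to $\C^n$, applying the unimodular change of angle coordinates $R^T$ from Lemma~\ref{l1.1} and disintegrating over the resonant directions. The one place where you genuinely diverge is how the uniformity of $\mu^0$ along the non-resonant angle direction is obtained. The paper derives it from \eqref{zz}, i.e.\ from the non-resonant equidistribution Proposition~\ref{r34}; you instead prove outright that $\mu^0$ is $\Psi_{t\Lambda}$-invariant by noting that $a^\nu(\nu t)=\Psi_{t\Lambda}v^\nu(\nu t)$ has law $\Psi_{t\Lambda}\circ\mu^\nu$ by stationarity while $a^\nu(\nu t)-a^\nu(0)\to0$ in probability. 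This is an attractive shortcut: it uses only stationarity and the a priori moment bounds and bypasses Proposition~\ref{r34} for this theorem (that proposition is still needed elsewhere, e.g.\ for Corollary~\ref{c1} and for \eqref{zz}); on the other hand it exploits stationarity in an essential way and would not replace the paper's route in the Cauchy-problem setting. Two details worth tightening: because of the unbounded damping $R^1_k=-\lla_k v_k$ and the fact that the stationary moments only control $\E\|u\|^2_{r+1}$, the increment $a^\nu(\nu t)-a^\nu(0)$ is $O(\nu)$ in a norm slightly weaker than $h^r$ rather than in $h^r$ itself --- harmless, since both $\mu^0$ and $\Psi_{t\Lambda}\circ\mu^0$ are supported on $h^r$ and agree once they agree on a larger space; and $m^0(\Game)=0$ is most directly obtained by passing the uniform-in-$\nu$ estimate of Lemma~\ref{l5.1} (which for stationary solutions bounds $\mu^\nu\{I_k\le\delta\}$) to the limit of the $I$-marginals, rather than by appealing to non-degeneracy of the noise in the effective equation.
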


\begin{proof} i)
Consider again the convergence \eqref{ko-ko}. We are going to show that  the limiting measure $\mu^0$ 
 equals $m^0$. Then the limit in \eqref{ko-ko} does not depend on
the sequence $\{\nu_l\to0\}$, so it holds  as $\nu\to0$, and \eqref{lim} follows.

ii)   
Due to Lemma \ref{l5.1}, $\mu^\nu(\Game)=0=\mu^0(\Game)$, so we may regard $\mu^\nu$
and $\mu^0$ as measures on $h_I^r\times \T^\infty$.  
Let us fix any $n\in\N$ and consider measures $\mu^{\nu\, n}$,
 $\mu^{0\,n}$ and $m^{0\,n}$ which are images of the measures $\mu^\nu$, $\mu^0$ and $m^0$ 
 under the projection 
 $$
 \Pi^n:v\mapsto v^n.
 $$
 We will regard them  as measures on $\R^n_+\times \T^n=\{(I^n,\vp^n)\}$.  To prove that $\mu^0=m^0$ 
 it suffices to verify that   $\mu^{0\,n}=m^{0\,n}$ for each $n$. 
 
 Let us  denote
$\cA(\Lambda^n)=:\cA^n$, and 
let  the vectors $\zeta^1,\dots,\zeta^n\in\Z^n$ and the unimodular 
 matrix $R$ be as in Lemma~\ref{l1.1} with $\cA=\cA^n$. Let
 $L=L_{\cA^n}:\T^n\to \T^{n-1}$ be the operator in \eqref{f0}, i.e.
\be\label{L}
L:
\T^n\ni \vp^n\mapsto  (\vp^n\cdot \zeta^1,\dots, \vp^n\cdot \zeta^{n-1})^T\in\T^{n-1}\,.
\ee
Writing $R^T(\vp^n)=(y_1,\dots,y_n)^T=( {\mathbf y}, y_{n})^T$, where ${\mathbf y}=(y_1,\dots,y_{n-1})^T$, 
we have $L(\vp^n)={\mathbf y}$. We will denote  by $\pi^1$ the natural projection $y\mapsto {\mathbf y}$.

 For further purposes we make the following observation. Let $\mu$ be a Borel measure on $h$. Consider its images under rotations $\Psi_{t\Lambda}$ and projections $\Pi^n$. In the $(I,\vp)$-variables the mapping
 $\Psi_{t\Lambda}$ becomes $\,$id$\,\times (\cdot+t\Lambda)$, so
 $$
 \Pi^n\circ (\Psi_{t\Lambda}\circ \mu) =\big( \text{id}\, \times  (\cdot+t\Lambda^n) \big)\circ\Pi^n \circ\mu
 $$
(where $\Pi^n\circ \mu$ is written in the $(I^n, \vp^n)$-variables). By \eqref{resolve} the transformation $R^T$ 
of $\T^n$ conjugates the translation by the vector $t\Lambda^n$ with the translation by $t \,e^n$. Therefore,
\be\label{conjug}
\cR^T\circ \Pi^n\circ (\Psi_{t\Lambda} \circ \mu)= (\text{id}\, \times (\cdot+t\, e^n)) \circ\cR^T\circ\Pi^n \circ\mu\,,
\ee
where $\cR=\,$id$\,\times R^T$. 
 
 iii) 
 Let us  apply to the measures   $\mu^{\nu\, n}$,  $\mu^{0\,n}$, $m^{0\,n}$  the 
 transformation $\cR^T$:
\be\label{xxi}
N^{\nu\,n}={\cR^T}\circ \mu^{\nu\,n}\,,
\quad N^{0\,n}={\cR^T}\circ \mu^{0\,n}\,,
\quad M^{0\,n}={\cR^T}\circ m^{0\,n}\,.
\ee
Recall that by \eqref{ko-ko}, 
$\ 
N^{\nu_l\, n}\strela N^{0\,n}$  as  $\nu_l\to0\,.$
Our first goal is to calculate the limiting measure $N^{0\,n}$. To do this  let us 
 disintegrate   $N^{\nu\,n}$ and $ N^{0\,n}$
 with respect to the mapping 
$$
\text{id}\,\times \pi^1: \R^n_+\times \T^n \to \R^n_+\times \T^{n-1},\quad
(I^n,({\mathbf y}, y_{n})^T) \mapsto (I^n, {\mathbf y}).
$$
That is (see \cite{Dud}, Section 10.2), write them as
$$
N^{\nu\,n} =N^{\nu\,n}_{I^n,{\mathbf y}}(dy_{n})\,p^{\nu\,n}(dI^n\,d{\mathbf y}),\quad
N^{0\,n} =N^{0\,n}_{I^n,{\mathbf y}}(dy_{n})\, p^{0\,n}(dI^n\,d{\mathbf y})\,,
$$
where $p^{\nu\,n} = (\text{id}\,\times \pi^1)\circ N^{\nu\,n}$ and 
$p^{0\,n} = (\text{id}\,\times \pi^1)\circ N^{0\,n}$
Since ${\mathbf y}=L(\vp^n)$, then $p^{\nu\,n}=\cD(I^n\times (L\circ\vp^n))(v^{\nu\, n} (\tau))$. As each vector $\zeta^j$ in \eqref{L}
is perpendicular to $\Lambda^n$, then in view of \eqref{zzz} we have
\begin{equation}\label{x1}
p^{0\,n}=\lim_{\nu_l\to0} \cD(I^n \times (L\circ\vp^n) )(v^{\nu_l \,n}(\tau)) = (I^n\times (L\circ\vp^n))\circ m^{0\,n}. 
\end{equation}

To calculate  $N^{0\,n}$ it
 remains to find the fiber-measures $N^{0\,n}_{I^n, {\mathbf y}}$. To do this let us take any bounded continuous function $f$ 
 on 
$\R^n_+\times \T^{n-1}\times S^1$ and consider 
$\ 
\langle N^{\nu\,n}, f\rangle = \E f(I^n, {\mathbf y}, y_{n})(v^\nu(\tau)).
$
Since ${\mathbf y}(v)=L(\vp^n)$ and 
$y_{n}(v)= v\cdot\eta^n$, where the vector $\eta^n$  is not perpendicular to $\Lambda$, then 
by \eqref{zz}
$$
\langle  N^{\nu\,n} , f\rangle  \to \int f(I^n,{\mathbf y}, y_{n})
\Big(\big(I^n\times (L\circ\vp^n)\big)   m^{0\,n} \Big)(dI^n\,d{\mathbf y})\,
 \dbar y_{n}. 
$$
From other hand, by \eqref{ko-ko} 
$$
\langle   N^{\nu_l\,n},  f\rangle  \to \langle N^{0\,n}, f\rangle = \int f(I^n, {\mathbf y}, y_{n})
 \, N^{0\,n}_{I^n, {\mathbf y}}(dy_{n})\, p^{0\,n}(dI^n\,d{\mathbf y})\,.
$$
Since $p^{0\,n}=(I^n\times (L\circ\vp^n))\circ m^{0\,n}$, then
we get from the two convergences above that 
 for $p^{0\,n}$-a.a. pairs $(I^n,{\mathbf y})$ we have  $N^{0\,n}_{I^n,{\mathbf y}}=\dbar y_{n}$. 
 Accordingly, 
$$
N^{0\,n} = \dbar y_{n} \times   p^{0\,n}(dI^n\,d{\mathbf y}) \,.
$$

iv)
Consider the measure $M^{0\,n}$. Due to \eqref{x1} its disintegration with respect to the mapping 
id$\,\times \pi^1$ may be written as 
\begin{equation}\label{??}
M^{0\,n} =M^{0\,n}_{I^n,{\mathbf y}}(dy_n)p^{0\,n}(dI^n\,d{\mathbf y})
\end{equation}
with some unknown fiber-measures $M^{0\,n}_{I^n,{\mathbf y}}$. Now consider the rotated 
 measure $\Psi_{t\Lambda} \circ m^{0} $, $t\ge0$,  and its $n$-dimensional projection.
By \eqref{conjug},
$$
\cR^T\circ\Pi^n\circ \Psi_{t\Lambda}\circ m^{0}= (\text{id}\, \times l_t)\circ\cR^T\circ m^{0\,n}, 
$$
where $l_t ({\mathbf y},y_n) = ({\mathbf y},y_n+t)$. Due to \eqref{xxi} and 
\eqref{??}, the measure in the r.h.s. 
equals 
$$
M^{0\,n}_{I^n,{\mathbf y}}(dy_n +t)p^{0\,n}(dI^n\,d{\mathbf y})\,.
$$
But by Corollary~\ref{c.invar}, the measure in the l.h.s. does not depend on $t$. 
So
 $ M^{0\,n}_{I^n,{\mathbf y}}(dy_n)\equiv M^{0\,n}_{I^n,{\mathbf y}}(dy_n+t)
 $ is a translation-invariant
measure on $S^1$, and it must be equal to $\dbar y_n$.  Accordingly, 
$$M^{0\,n} = \dbar y_n\times p^{0\,n}(dI^n\, dy^n)=
N^{0\,n}.
$$

v)
We have established that  $N^{\nu_l\,n}\strela M^{0\,n}$ as $\nu_l\to0$. So $\nu^{\nu_l\,n}\strela m^{0\,n}$,
which completes the proof.
\end{proof}

\subsection{Mixing in the effective equations
}\label{s.mix}
We start  with the case when the function $f(\lambda)$ has a linear growth. 
For simplicity of notation we suppose  that  $f(\lambda)=\lambda+1$. We also are forced to assume
that $q_*=1$.

 The effective equation \eqref{5.eff}=\eqref{*eff1} with
$q_*=1$ looks similar to the equation \eqref{1.100}${}_{\nu=\infty, q_*=1}$, studied in \cite{KNer13}. It turns out that 
the two equations indeed are similar, at least for $d\le3$, and that the proof of the mixing in Section~4 of \cite{KNer13},
based on an abstract theorem from \cite{KS}, applies to \eqref{5.eff} with minimal changes. Indeed, the 
crucial step 
in \cite{KNer13} in order to apply the result from \cite{KS} is to establish for solutions of the equation the
exponential estimate of the form
\begin{equation}\label{m1}
\PP\{\sup_{t\ge0}(\int_0^t |u(s)|^2_{L_\infty}ds-Kt)\ge\sigma\}\le C' \exp(c_1|u_0|^2_{L_\infty} -c_2\sigma),
\qquad\forall\,\sigma>0,
\end{equation}
with suitable constants $K, C', c_1$ and $c_2$. This estimate is important to study the mixing 
since it allows to control divergence of trajectories $u_1(t)$ and $u_2(t)$, corresponding to 
the same realisation of the random force, through the inequality\footnote{To match \eqref{m1} 
and \eqref{m2} we use crucially that $q_*\le1$.}
\begin{equation}\label{m2}
|u_1(t)-u_2(t)|_{L_2} \le |u_1(0)-u_2(0)|_{L_2}
\exp\big(C\int_0^t (|u_1(s)|^2_{L_\infty} +  (|u_2(s)|^2_{L_\infty})ds\big). 
\end{equation}

For eq. \eqref{5.eff} an analogy of \eqref{m1} follows by applying the Ito formula to $[v]^2_1=H_0(v)+H_1(v)$
(see \eqref{integrals}), since due to \eqref{integrals} we have that 
$$
d[v(\tau)]^2_1 + 2\int_0^\tau [v(s)]_2^2\,ds  =4\tau\cB+2\sum_{j=1}^\infty (\lambda_j+1)(v_j(\tau)\cdot d\bb^j(\tau),
$$
where we denote 
$[v]_2^2 = \sum (\lambda_j+1)^2|v_j|^2$ and $\cB=\sum(\lambda_j+1)b_j^2$.  Applying to this relation the 
supermartingale inequality in the standard way (e.g., see in \cite{KNer13, KS}), we get that 
\begin{equation*} 
\PP\{\sup_{\tau\ge0}(\int_0^\tau [v(s)]^2_2\,ds-2\cB t)\ge\sigma\}\le C' \exp(c_1|v_0|^2_1 -c_2\sigma),
\qquad\forall\,\sigma>0.
\end{equation*}
If $d\le3$,  then by Lemma \ref{estil2}  the 
 divergence of two solutions for \eqref{5.eff} with the same $\omega$ satisfies 
  \begin{equation*} 
  |v_1(\tau)-v_2(\tau)|_{h^0}\le |v_1(0)-v_2(0)|_{h^0}
  \exp\big(C \int_0^\tau ([v_1(s)]_2^2 + [v_2(s)]^2_2) \,ds\big).
  \end{equation*}
  This last two estimates  allow to repeat  literally  for equation 
  \eqref{5.eff} the reduction to Theorem~3.1.3 from \cite{KS},
  made in \cite{KNer13}, and prove
  
  \begin{theorem}\label{t.mix}
  Let $q_*=1$,  $f(\lambda)=\lambda+1$
   and $d\le3$. Then the effective equation \eqref{5.eff} has a unique stationary measure $\mu$ and is mixing. That is,
  every its solution $v(\tau)$ satisfies $\cD(v(\tau))\strela \mu$ as $\tau\to \infty$. 
  \end{theorem}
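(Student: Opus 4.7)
\emph{Proof proposal.} The plan is to apply the abstract mixing criterion Theorem~3.1.3 of \cite{KS}, following the reduction scheme of Section~4 in \cite{KNer13}. Two ingredients are required: an exponential supermartingale estimate analogous to \eqref{m1}, and a pathwise Lipschitz-type divergence bound for two solutions driven by the same noise, analogous to \eqref{m2}.

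For the first ingredient, I would exploit the conservation laws \eqref{integrals}: since $\{\cHR,H_0\}=\{\cHR,H_1\}=0$, the hamiltonian part $R^0$ contributes nothing to the drift of $[v]_1^2:=H_0(v)+H_1(v)=\tfrac12\sum_j(\lambda_j+1)|v_j|^2$. With $f(\lambda)=\lambda+1$ the linear damping $R^1_k v=-(\lambda_k+1)v_k$ produces a drift proportional to $-[v]_2^2$, so Ito's formula yields an identity of the shape
\begin{equation*}
[v(\tau)]_1^2 + c\int_0^\tau [v(s)]_2^2\,ds \;=\; [v_0]_1^2 + c'\tau\cB + M_\tau,
\end{equation*}
with $\cB=\sum_j(\lambda_j+1)b_j^2$ and $M_\tau$ a continuous martingale whose quadratic variation is itself controlled by $\int_0^\tau [v(s)]_2^2\,ds$. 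The standard exponential supermartingale inequality (cf.\ \cite{KS,KNer13}) then delivers the required analogue of \eqref{m1}.

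For the second ingredient, let $v_1,v_2$ be two solutions driven by the same Brownian motion; their difference $w=v_1-v_2$ satisfies a deterministic equation. Pairing with $w$ in $h^0$ and noting that the $R^1$ contribution is strictly dissipative, I obtain
\begin{equation*}
\tfrac12\tfrac{d}{d\tau}|w|_{h^0}^2 \;\le\; \langle R^0(v_1)-R^0(v_2),\,w\rangle_{h^0} \;\le\; |R^0(v_1)-R^0(v_2)|_{h^0}\,|w|_{h^0}.
\end{equation*}
Since $d\le 3$, one may choose $p>d/2$ with $p\le 2$, so Lemma~\ref{estil2} combined with $q_*=1$ gives $|R^0(v_1)-R^0(v_2)|_{h^0}\le C([v_1]_2+[v_2]_2)^2\,|w|_{h^0}$, and Gronwall's lemma produces the analogue of \eqref{m2}.

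With the two estimates in hand, the Foias--Prodi-type coupling construction and the verification of the hypotheses of Theorem~3.1.3 of \cite{KS}, carried out in Section~4 of \cite{KNer13}, transfer essentially verbatim to the effective equation \eqref{5.eff}, yielding uniqueness of the stationary measure together with exponential mixing. The main obstacle I anticipate is not in the two displayed estimates, which follow readily from the Hamiltonian structure already set up in this paper, but in checking that the coupling construction of \cite{KS,KNer13}, originally tuned for a CGL-type nonlinearity, transfers without modification to the resonant vector field $R^0$. The hypothesis $q_*=1$ is essential: it matches the quadratic exponent in the divergence bound to the quadratic dissipation $[v]_2^2$ controlled by the supermartingale estimate, so that \eqref{m1} absorbs \eqref{m2}; for $q_*\ge 2$ the two estimates would no longer be compatible and a different strategy would be required.
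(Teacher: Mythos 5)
Your proposal is correct and follows essentially the same route as the paper: the exponential supermartingale estimate obtained from Ito's formula applied to $H_0+H_1$ (using \eqref{integrals}), the $h^0$-divergence bound via Lemma~\ref{estil2} with $p\le2$ for $d\le3$, and the verbatim transfer of the reduction to Theorem~3.1.3 of \cite{KS} from \cite{KNer13}. Your closing remark on why $q_*=1$ is essential matches the paper's own footnote on matching \eqref{m1} with \eqref{m2}.
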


  The presented proof uses  that the nonlinearity in the effective equation is at most cubic. It 
   also applies  to the effective equations for eq.~\eqref{1.100}, where the Hamiltonian
   $\cH$ is one of the two functions $\cH^3$ with cubic densities as at the end of Section~\ref{s0.1}
   (in this case the argument works if $d\le6$). The proof without changes applies 
   to equation \eqref{1.100}, where $q_*=1$, $d\le3$ and $f(\lambda)$ grows super-linearly. The
   argument also may be adjusted to the case when $q_*=1$, $d$ is any and $f(\lambda)= c_1+ \lambda^{c_d}$,
   where $c_d$ is sufficiently big. Based on the similarity with the equation  \eqref{1.100}${}_{\nu=\infty, q_*=1}$, studied 
   in \cite{KNer13} in for any space-dimension, we conjecture that for  $q_*=1$ and  $f(\lambda)=\lambda+1$ the 
   effective equation is well-posed and mixing for any $d$.
    But it is unknown how to prove the mixing for equations with 
   $q_*\ge2$ (in any space-dimension).

\subsection{Proof of Lemma~\ref{l2.3ihp}}\label{sez:dim}
For this proof we adopt a notation from \cite{KP08}. Namely, 
we denote by $\vk(t)$ various functions of $t$ such that $\vk\to0$ 
as $t\to\infty$, and denote by $\vk_\infty(t)$ functions, satisfying  
$\vk(t)=o(t^{-N})$ for each $N$. We write $\vk(t,M)$ to indicate that $\vk(t)$ 
depends on a parameter $M$. Besides for events $Q$ and $O$ and  a
random variable  $f$ we write $\PP_O(Q)=\PP(O\cap Q)$ and 
$\E_O(f)=\E(\chi_O\, f)$. Below $M$ stands for a suitable function of
$\nu$ such that  $M(\nu)\to \infty$ as $\nu\to 0$, but
$$
\nu M^n\to0 \quad \mbox{as } \nu\to 0\ , \quad \forall n\ .
$$

Denote by $\Omega_M=\Omega^\nu_M$ the event
$$
\Omega_M=\left\{\sup_{0\le\tau\le T} |\ai^\nu(\tau)|_{h^r}\le M\right\}\ .
$$
Then, by \eqref{2.05}, $\PP(\Omega^c_M)\le \vk_\infty(M)$ uniformly in
$\nu$, so that one has
$\ 
\mathfrak A^\nu_k\le \vk_\infty(M)+ \mathfrak
A^\nu_{k,M}\ ,
$
where we have defined
\begin{equation}\label{5.169}
\mathfrak A^\nu_{k,M}:= \E_{\Omega_M} \max_{0\le \tau\le T} \left|\int_0^{\tau}
 \cR_k(\ai^{\nu}(s),\nu^{-1}s)ds
\right|\,.
\end{equation}
So it remains to estimate $A^\nu_{k,M}$. 

Consider a partition of $[0,T]$  by the points
$$
 \tau_n= nL,\quad 0\le n\le K\sim T/ L. 
$$
where $ \tau_{K}$ is the last point $ \tau_n$ in $[0,T)$.
  The diameter $L$ of the partition is 
$\ 
L=\sqrt\nu. 
$
  Denoting
\begin{equation}\label{5.17}
\eta_l= \int_{\tau_l}^{ \tau_{l+1}}  \cR_k(\ai^{\nu
  }(s),\nu^{-1}s)ds\ ,\quad 
0\le l\le K-1\ ,
\end{equation}
we see that 
\begin{equation}\label{5.170}
\mathfrak A^\nu_{k,M} \le LC(M)+\E_{\Omega_M}\sum_{l=0}^{K-1}|\eta_l|\ ,
\end{equation}
since for $\omega\in \Omega_M$ the integrand in
\eqref{5.17} is smaller than a suitable $C(M)$ (see Lemma~\ref{l.P^0} and
\eqref{p4}).  For any $l$  let us consider the event
$$
\cF_l=\{\sup_{\tau_l\le\tau\le\tau_{l+1}}|\ai^\nu(\tau)-\ai^\nu(\tau_l)|_{h}\ge
P_1(M)L^{1/3}\}\ ,
$$
where $P_1(M)$ is a suitable polynomial.  It is not hard to verify
using the Doob inequality that for a suitable choice of  $P_1$  the 
probability of
 $\IP(\cF_l) $  is less than   $ \vk_\infty(L^{-1};M)$ 
(cf. \cite{KP08}).  One gets
\begin{equation}\label{5.211}
\sum_{l=0}^{K-1}\left|\E_{\Omega_M}|\eta_l|-
\E_{\Omega_M\backslash \cF_l}|\eta_l|  \right| \le
{C(M)}{L}\sum_{l=0}^{K-1}\IP(\cF_l)\le C(M) \vk_\infty(L^{-1};M)\ , 
\end{equation}
so that it remains to estimate $\sum\E_{\Omega_M\backslash \cF_l} |\eta_l|$. 

We have
\begin{equation*}
 \begin{split}
\  |\eta_l|   &\le \left|
\int_{\tau_l}^{\tau_{l+1}} \left(  \cR_k(\ai^{\nu
  }(s),\nu^{-1}s)- \cR_k(\ai^{\nu
  }(\tau_l),\nu^{-1}s)\right) ds\right|\\
&+ \left|
\int_{\tau_l}^{\tau_{l+1}}\left(  \cR_k(\ai^{\nu
  }(\tau_l),\nu^{-1}s)\right) ds\right| 
 =:\Upsilon^1_l+\Upsilon^2_l\ .
\end{split}
\end{equation*}
By the regularity of the integrand and the definition of $\cF_l$
\begin{equation}\label{5.002}
  \sum_l  
\E_{\Omega_M \backslash\cF_l}  
 \Upsilon^1_l  \le \vk(L^{-1/3};M)=\vk( \nu^{-1/6};M) \ . 
\end{equation}
So it remains to estimate the expectation of $\sum\Upsilon^2_l$. 
  Denoting $t=\nu\tau$ and making use of \eqref{eq:nonres} we  write
  $\Upsilon^2_l$ as
\begin{equation*}
\begin{split}
\Upsilon^2_l&=L\Biggl|\frac{\nu}{L}\int_{0}^{\nu^{-1}L}\sum_{\substack{p,q,l\in\ZZ \\
     q-l -e^k\not \in\cA(\Lambda,m)\\ |q|+|l|+1\le m}}   
P^{0pql}_k(\ai)\exp\Bigl(-it\left( 
\Lambda \cdot (q- l- e^k)\right)\Bigr)dt\Biggr|\\
&\le L C(M) \frac{\nu}{L} \sup_{\substack{p,q,l\in\ZZ \\
     q-l -e^k\not \in\cA(\Lambda,m)\\ |q|+|l|+1\le m}}
\frac{1}{\Lambda \cdot (q- l- e^k)} \le
L\vk(\nu^{-1}L;M)\ ,
\end{split}
\end{equation*}
because the supremum in the second line is bounded by one, since both
$\Lambda$ and $q-l-e^k$ are integer vectors.
Therefore
\begin{equation}\label{5.003}
 \sum_l  \E_{\Omega_M\backslash \cF_l}
\Upsilon^2_l  \le\vk(\nu^{-1/2};M). 
\end{equation}

Now \eqref{5.169}, \eqref{5.170},  \eqref{5.211}, \eqref{5.002} and
\eqref{5.003} imply that 
$$
\mathfrak A^\nu_k\le \vk_\infty(M)+\vk(\nu^{-1/2};M) +
\vk_\infty(\nu^{-1};M) +\vk( \nu^{-1/6};M) +\vk(\nu^{-1/2};M)\ . 
$$
Choosing first $M$ large and then $\nu$ small,  
 we make the r.h.s. above  arbitrarily small. This proves the lemma.
\qed

An argument similar to the previous one (see Proposition~4.7 of
\cite{KM13}) implies the following assertion:  
\begin{proposition}\label{r34}
Let $s\in\Z_0^\infty$ be such that $s\cdot\Lambda\ne0$ and 
$G:\R_+^M\times \T^{J(M)}\times S^1\to\R$ be a bounded Lipschitz-continuous function,
for some $M\ge1$. Then
\begin{equation*}
\begin{split}
\mathfrak B^\nu:=
\E\max_{0\le\tau\le T} &\Big|
 \int_0^\tau \Big( G(I^{\nu M}(l), \Phi^{\nu (M)}(l),s\cdot \vp^\nu(l))-\\
 &\int_{S^1}  G(I^{\nu M}(l), \Phi^{\nu (M)}(l), \theta)\,\dbar\theta\Big)dl\Big|\to0\quad\text{as}\quad
 \nu\to0. 
 \end{split}
\end{equation*}
In particular, taking for $G$  Lipschitz functions on $S^1$ we get that 
$
\lan
\cD(s\cdot\vp^\nu(l)) \ran_0^t
\strela d\theta$ as  $\nu\to0, 
$
for any $t>0$. 
\end{proposition}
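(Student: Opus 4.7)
}
The strategy follows that of Lemma~\ref{l2.3ihp}: exploit that $s\cdot\vp^\nu(\tau)$ rotates at leading rate $-\nu^{-1}(s\cdot\Lambda)\ne 0$ while $(I^{\nu M},\Phi^{\nu(M)})$ evolve on the slow scale, so integration in time averages the last argument of $G$ against the uniform measure on $S^1$. The principal delicacy is the strong singularity of \eqref{5.3} at the locus $\Game$, which must be excluded by a cutoff before the Fourier-in-$\theta$ argument can proceed; this is what forces the use of Lemma~\ref{l5.1} at the very start.

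First, fix a function $M=M(\nu)\to\infty$ slowly enough that $\nu M^n\to 0$ for every $n$, and restrict to the event $\Omega_M=\{\sup_{0\le\tau\le T}|v^\nu(\tau)|_h\le M\}$; by \eqref{2.05} its complement contributes $\vk_\infty(M)$ to $\mathfrak B^\nu$. Simultaneously, for $\delta>0$ introduce $\Xi_\delta=\{\tau:\ I_k^\nu(\tau)\ge\delta\text{ for all } k\in\supp s\}$. By Lemma~\ref{l5.1} the expected Lebesgue measure of $[0,T]\setminus\Xi_\delta$ tends to $0$ as $\delta\to 0$, uniformly in $\nu$, so that $\Xi_\delta^c$-contribution to $\mathfrak B^\nu$ is $\|G\|_\infty\cdot o_\delta(1)$. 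Next, partition $[0,T]$ by the points $\tau_n=nL$ with $L=\sqrt\nu$. Off an event of probability $\vk_\infty(L^{-1};M)$, the Kolmogorov-type estimates used in the proof of Lemma~\ref{l2.3ihp} (based on \eqref{2.5} together with \eqref{5.2} and \eqref{5.ris}) yield
$$|I^{\nu M}(\tau)-I^{\nu M}(\tau_l)|+|\Phi^{\nu(M)}(\tau)-\Phi^{\nu(M)}(\tau_l)|\le C(M,\delta)L^{1/3}$$
on every sub-interval $[\tau_l,\tau_{l+1}]\cap\Xi_\delta$. The Lipschitz property of $G$ then allows one to freeze the slow arguments at their $\tau_l$-values, at total cost $o_\nu(1)$.

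With the slow arguments frozen, it remains to bound $\sum_l\E_{\Omega_M\cap\Xi_\delta}\bigl|\int_{\tau_l}^{\tau_{l+1}}g_l(s\cdot\vp^\nu(\tau))\,d\tau\bigr|$, where $g_l(\theta):=G(I^{\nu M}(\tau_l),\Phi^{\nu(M)}(\tau_l),\theta)-\int_{S^1}G(I^{\nu M}(\tau_l),\Phi^{\nu(M)}(\tau_l),\theta')\,\dbar\theta'$ is mean-zero and Lipschitz on $S^1$. Integrating \eqref{5.3} and bounding the stochastic contribution by Doob's inequality gives, on $\Xi_\delta\cap\Omega_M$ and off a further set of small probability,
$$s\cdot\vp^\nu(\tau)=s\cdot\vp^\nu(\tau_l)-\nu^{-1}(s\cdot\Lambda)(\tau-\tau_l)+\rho_l(\tau),\qquad |\rho_l(\tau)|\le C(M,\delta)\nu^{1/4}.$$
Approximating $g_l$ in the sup-norm by a Fej\'er trigonometric polynomial of degree $N(\eta)$ up to error $\eta$, each nonzero mode $e^{in\theta}$ contributes an integral of size $O\bigl(\nu/|n(s\cdot\Lambda)|\bigr)$ by direct evaluation, handling $\rho_l$ as a small perturbation through $|e^{in(x+\rho)}-e^{inx}|\le|n|\,|\rho|$. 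Summing over the $O(N(\eta))$ modes and the $T/L=O(\nu^{-1/2})$ intervals gives a bound $O(N(\eta)\sqrt\nu)$, while the trigonometric approximation contributes $O(T\eta)$. Altogether,
$$\mathfrak B^\nu\le\vk_\infty(M)+\|G\|_\infty\,o_\delta(1)+C(M,\delta,\eta)\sqrt\nu+CT\eta,$$
and taking successively $\nu\to 0$, $\eta\to 0$, $\delta\to 0$, $M\to\infty$ yields $\mathfrak B^\nu\to 0$. The last assertion of the proposition is immediate from the main one by choosing $G=G(\theta)$ depending only on the last argument.
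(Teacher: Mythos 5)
Your proposal is correct and follows essentially the route the paper intends: Proposition~\ref{r34} is given in the paper only with the remark that it follows by ``an argument similar to the previous one'', i.e.\ the Khasminskii partition scheme of Lemma~\ref{l2.3ihp}, and your plan reproduces that scheme (restriction to $\Omega_M$, partition of step $L=\sqrt\nu$, freezing of the slow variables off the events $\cF_l$, and an explicit oscillatory integral exploiting that $|s\cdot\Lambda|\ge1$) together with the two genuinely new ingredients this statement requires, namely the cutoff away from the singular locus $\Game$ via Lemma~\ref{l5.1} and the trigonometric approximation of $G$ in the fast angle. The one point to tighten is the interaction between the cutoff and the partition: the set of times $\Xi_\delta$ should be replaced by the per-block events $\{I_k(\tau_l)\ge\delta,\ k\in\supp s\}$, propagated over each block off $\cF_l$ and with the discarded blocks controlled by Lemma~\ref{l5.1}, so that the representation of $s\cdot\vp^\nu$ with $|\rho_l(\tau)|\le C(M,\delta)\nu^{1/4}$ is justified on entire blocks rather than on $[\tau_l,\tau_{l+1}]\cap\Xi_\delta$.
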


\bibliography{meas}
\bibliographystyle{amsalpha}
\end{document}